\newtheorem{theorem}{Theorem}
\newtheorem{corollary}{Corollary}
\newtheorem{lemma}{Lemma}
\newtheorem{claim}{Claim}
\newtheorem{definition}{Definition}
\newtheorem{observation}{Observation}
\newcommand{\opt}{\mathrm{OPT}}
\newcommand{\dm}{\mathrm{diam}}
\newcommand{\edm}{\mathrm{ediam}}
\newcommand{\mb}{\mathcal{B}}
\newcommand{\mc}{\mathcal{C}}
\newcommand{\md}{\mathcal{D}}
\newcommand{\mf}{\mathcal{F}}
\newcommand{\mg}{\mathcal{G}}
\newcommand{\mv}{\mathcal{V}}
\newcommand{\me}{\mathcal{E}}
\newcommand{\ms}{\mathcal{S}}
\renewcommand{\mp}{\mathcal{P}}
\newcommand{\mq}{\mathcal{Q}}
\newcommand{\mt}{\mathcal{T}}
\newcommand{\mx}{\mathcal{X}}
\newcommand{\my}{\mathcal{Y}}
\newcommand{\wsp}{\mathtt{Ws}}
\newcommand{\ssp}{\mathtt{Ss}}
\newcommand{\lgt}{\mathtt{L}}
\newcommand{\mbv}{\mathbf{v}}
\newcommand{\mbe}{\mathbf{e}}
\newcommand{\mbx}{\mathbf{x}}
\newcommand{\mby}{\mathbf{y}}
\newcommand{\mbz}{\mathbf{z}}
\newcommand{\mbc}{\mathbf{c}}
\newcommand{\mbtw}{\mathrm{tw}}
\newcommand{\pie}{path edge}
\newcommand{\ce}{c(\epsilon)}
\newcommand{\ee}{$\tilde{\hat{\mbox{e}}}$}
\newcommand{\red}{\mathrm{red}}
\DeclareMathOperator{\st}{\textsc{ST}}
\DeclareMathOperator{\MST}{\mathrm{MST}}
\DeclareMathOperator{\sr}{\mathrm{SR}}
\DeclareMathOperator{\poly}{\mathrm{poly}}
\DeclareMathOperator{\polye}{\mathrm{poly}(\frac{1}{\epsilon})}
\DeclareMathOperator{\Opst}{\mathrm{\textsc{W}}}
\DeclareMathOperator{\enc}{\mathrm{Enc}}
\DeclareMathOperator{\dec}{\mathrm{Dec}}
\DeclareMathOperator{\mgw}{\boldsymbol{\omega}}
\DeclareMathOperator{\smin}{\subseteq_{\min}}
\DeclareMathOperator{\defi}{\overset{\text{def}}{=} }
\date{}
\title{A PTAS for subset TSP in minor-free graphs\thanks{A major part of this work was done while the author was a graduate student at Oregon State University.}}
\author{Hung Le}
\affil{Department of Computer Science \\ University of Victoria, Canada\\
  \texttt{hungle@uvic.ca}}
\begin{document}
\maketitle
\begin{abstract}

We give the first PTAS for the subset Traveling Salesperson Problem (TSP) in $H$-minor-free graphs.  This resolves a long standing open problem in a long line of work on designing PTASes for TSP in minor-closed families initiated by Grigni, Koutsoupias and  Papadimitriou in FOCS'95. The main technical ingredient in our PTAS is a  construction of a nearly light subset $(1+\epsilon)$-spanner for any given edge-weighted $H$-minor-free graph. This construction is based on a necessary and sufficient condition given by \emph{sparse spanner oracles}: light subset spanners exist if and only if sparse spanner oracles exist. This relationship allows us to obtain two new results:
\begin{itemize}
\item An $(1+\epsilon)$-spanner with lightness $O(\epsilon^{-d+2})$ for any doubling metric of constant dimension $d$. This improves the earlier lightness bound $\epsilon^{-O(d)}$ obtained by Borradaile, Le and Wulff-Nilsen~\cite{BLW19}.
\item An $(1+\epsilon)$-spanner with sublinear lightness for any metric of constant correlation dimension. Previously, no spanner with non-trivial lightness was known.
\end{itemize}

\end{abstract}

\section{Introduction}

Given an edge-weighted graph $G$ and a set of terminals $T$ in $G$, the \emph{subset TSP} problem asks for a shortest tour that visits every terminal in $T$ at least once. This problem generalizes the well-known TSP problem in which $T$ contains every vertex of the graph. In practice, subset TSP is typically more interesting than TSP:  it is often that the set of vertices we want to visit in a graph is much smaller than the whole vertex set.  Indeed, subset TSP has been studied extensively in operational research since 1985~\cite{CFN85,Salazar03,BMT13,ZTXL15,LN15,ZTXL16} under a different name -- \emph{Steiner TSP} problem. Arora, Grigni, Karger, Klein and Woloszyn~\cite{AGKKW98} observed that subset TSP in planar graphs  generalizes the well-studied TSP in Euclidean plane.

In general graphs, one can reduce subset TSP to TSP by taking metric completion on the terminals. However, if the input graph has a special structure, such as excluding a fixed minor, taking metric completion would destroy the structure that may otherwise be used to algorithmic advantage. Over the past 20 years, much research have been spent on exploiting minor-closed properties to design polynomial time approximation schemes\footnote{A polynomial-time approximation scheme is an algorithm which, for a given fixed error parameter $\epsilon$, finds a solution whose value is within $1\pm\epsilon$ of the optimal solution in polynomial time.} (PTAS) for TSP and subset TSP~\cite{GKP95,AGKKW98,Klein05,Klein06,BDT14,DHK11,BLW17,GS02}. In FOCS'17,  Borradaile, Le and Wulff-Nilsen~\cite{BLW17} fully resolved the approximation complexity of TSP in $H$-minor-free graphs for any fixed graph $H$ by designing an efficient PTAS (EPTAS). However, designing a PTAS, even an inefficient one, for subset TSP in the same setting remains a widely open problem that has been raised several times~\cite{DHK11,BDT14,BLW17}. In this paper, we provide a positive answer to this problem.

\begin{theorem}\label{thm:stsp-ptas}
For any given fixed $\epsilon > 0$, there is a polynomial time algorithm that, given an edge-weighted $H$-minor-free graph $G$ and a set of terminals $T$ in $G$, can find a tour visiting $T$ whose length is at most $(1+\epsilon)$ times the length of the optimal tour. 
\end{theorem}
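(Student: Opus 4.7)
The plan is to follow the classical light-spanner plus structural-decomposition recipe for TSP-type PTASes. First, I will invoke the main technical contribution of the paper: construct, in polynomial time, a subgraph $G' \subseteq G$ that is a subset $(1+\epsilon)$-spanner of $G$ on $T$ (i.e., $d_{G'}(u,v) \le (1+\epsilon) d_G(u,v)$ for all $u,v \in T$) and whose total weight is $O(\poly(1/\epsilon)) \cdot \opt$, where $\opt$ denotes the length of an optimum subset TSP tour in $G$. The existence of such a $G'$ is promised in the abstract and obtained via the sparse-spanner-oracle framework. By the spanner property, the optimum subset TSP tour in $G'$ has length at most $(1+\epsilon)\opt$, so it suffices to approximately solve subset TSP on $G'$.

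Next, I will reduce $G'$ to a bounded-treewidth instance by contraction. Because $G'$ is $H$-minor-free and nearly light, the Demaine-Hajiaghayi-Kawarabayashi contraction decomposition applied with parameter $k = \Theta(1/\epsilon)$ returns a set of edges $E'$ of total weight at most $w(G')/k = O(\epsilon)\opt$ whose contraction produces a graph $G'/E'$ of treewidth $O_\epsilon(1)$. A near-optimal subset TSP tour in $G'/E'$ then lifts back to a tour in $G'$ at an additive cost of at most $2w(E') = O(\epsilon)\opt$. Then I will solve subset TSP exactly on $G'/E'$ by a standard bottom-up dynamic program over its tree decomposition, where each bag state records the partial tour's connectivity information together with which terminals have already been visited. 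Chaining the three phases produces a tour of length at most $(1 + O(\epsilon))\opt$ in polynomial time; rescaling $\epsilon$ yields Theorem~\ref{thm:stsp-ptas}.

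The hardest step will be the spanner construction. For TSP in minor-free graphs, \blw~\cite{BLW17} built a spanner of weight $\poly(1/\epsilon) \cdot w(\MST)$, and the MST is within a factor $2$ of the TSP tour; for subset TSP, however, the terminal set $T$ can be arbitrarily small compared to $V(G)$, so $w(\MST)$ may dwarf $\opt$ and no MST-based weight bound will suffice. The sparse-spanner-oracle characterization announced in the abstract is designed precisely to surmount this obstacle: it reduces the construction of light subset spanners to the more manageable task of building a sparse spanner oracle, and the combinatorial heart of the paper will lie in exhibiting such an oracle for $H$-minor-free graphs. Once the spanner is in hand, the contraction decomposition and bounded-treewidth DP are essentially off-the-shelf adaptations of the TSP template.
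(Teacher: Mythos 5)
Your overall template (subset spanner $\rightarrow$ contraction decomposition $\rightarrow$ bounded-treewidth DP $\rightarrow$ lift) is the same as the paper's, but the first step as you state it invokes a result that the paper does not prove and in fact lists as an open problem. The spanner actually constructed (Theorem~\ref{thm:main}) is only \emph{nearly} light: its weight is $O_H(\poly(\frac{1}{\epsilon})\log k)\,w(\st) = O_H(\poly(\frac{1}{\epsilon})\log k)\,w(\opt)$, not $O(\poly(\frac{1}{\epsilon}))\,w(\opt)$. Removing the $\log k$ factor is exactly what would upgrade the PTAS to an EPTAS and is left open (Conclusion, item 1). Consequently you cannot take the contraction parameter to be $\Theta(1/\epsilon)$ and land in constant treewidth: to make the contracted edge class weigh at most $\epsilon\,w(\opt)$ you must partition into $g = \lceil w(S)/(\epsilon\, w(\opt))\rceil = O_H(\poly(\frac{1}{\epsilon})\log k)$ classes, and the Demaine--Hajiaghayi--Kawarabayashi decomposition then only guarantees treewidth $O_H(g) = O_H(\poly(\frac{1}{\epsilon})\log k)$, i.e.\ logarithmic, not constant.

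This changes what the dynamic program must deliver. A ``standard'' DP whose bag states record the connectivity partition of the bag has $\mbtw^{O(\mbtw)} = 2^{O(\mbtw\log\mbtw)}$ states, which on treewidth $\Theta(\log k\,\poly(\frac{1}{\epsilon}))$ yields only quasi-polynomial time; and tracking ``which terminals have already been visited'' as part of the state is not polynomially bounded at all (the usual fix is to enforce coverage of terminals as they are forgotten, not to store the visited set). The paper's proof therefore relies on a \emph{singly exponential} $2^{O(\mbtw)}n^{O(1)}$ algorithm for subset TSP on tree decompositions, obtained via the rank-based representative-set technique of Bodlaender, Cygan, Kratsch and Nederlof (Appendix~\ref{app:dp}); with that, the total running time is $2^{O_H(g)}n^{O(1)} = n^{O_H(\poly(\frac{1}{\epsilon}))}$, a polynomial but non-efficient PTAS. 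So your argument is repairable, but you must (i) replace the assumed lightness $O(\poly(\frac{1}{\epsilon}))$ by the available $O_H(\poly(\frac{1}{\epsilon})\log k)$ bound and carry the resulting logarithmic treewidth through the calculation, and (ii) replace the off-the-shelf partition DP by one that runs in time singly exponential in the treewidth.
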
 

\noindent The precise running time of the algorithm in Theorem~\ref{thm:stsp-ptas} is $n^{O_H(\poly(\frac{1}{\epsilon}))}$ where $O_H$ notation hides the dependency of the constant on the size of  $H$. Though our PTAS is not efficient, it is a crucial stepping stone toward an efficient one.

\subsection{TSP and subset TSP in minor-closed families}

In FOCS'95, Grigni, Koutsoupias and  Papadimitriou~\cite{GKP95}  showed that TSP admits a PTAS in \emph{unweighted} planar graphs.  Their result has triggered a long line of research on designing PTASes for TSP and subset TSP in minor-closed graph families.
In SODA'98, Arora, Grigni, Karger, Klein and Woloszyn~\cite{AGKKW98} designed a PTAS for TSP in \emph{edge-weighted} planar graphs. Their algorithm differs the algorithm of Grigni, Koutsoupias and  Papadimitriou largely in the preprocessing step: it runs on light spanners for planar graphs by Alth\"{o}fer, Das, Dobkin, Joseph and Soares~\cite{ADDJS93}. Both PTASes are based on the cycle separator theorem by Miller~\cite{Miller86} and have running time $O(n^{\polye})$. In the same paper, Arora, Grigni, Karger, Klein and Woloszyn~\cite{AGKKW98} gave a quasi-polynomial time approximation scheme (QPTAS) for subset TSP in edge-weighted planar graphs and conjectured that a PTAS is possible.  In FOCS'05, Klein~\cite{Klein05} introduced a \emph{contraction decomposition framework}  that allows one to reduce designing a PTAS for (subset) TSP in planar graphs to finding a (subset) \emph{spanner} whose weight is at most $c(\epsilon)w(\opt)$ for some constant $c(\epsilon)$ depending on $\epsilon$. Here, $w(.)$ is the weight function on the edges of the graph and $\opt$ is an optimal solution. The framework in combination with the light spanner by Alth\"{o}fer, Das, Dobkin, Joseph and Soares~\cite{ADDJS93} implies an EPTAS for TSP in planar graphs. Soon after, in FOCS'06, Klein~\cite{Klein06} constructed the first light subset spanner, thereby obtaining an EPTAS for subset TSP in planar graphs. This answered the open question asked earlier by Arora, Grigni, Karger, Klein and Woloszyn~\cite{AGKKW98}. Klein's results  in planar graphs were then generalized to bounded-genus graphs by Borradaile, Demaine and Tazari~\cite{BDT14}. They  asked whether one can design a PTAS for TSP and subset TSP in $H$-minor-free graphs for any fixed graph $H$. 

In STOC`11, Demaine, Hajiaghayi and Kawarabayashi~\cite{DHK11} generalized the contraction decomposition framework of Klein to $H$-minor-free graphs. With Grigni and Sissokho's spanners~\cite{GS02}, they obtained a PTAS for TSP in $H$-minor-free graphs, improving upon the early work by  Grigni and Sissokho~\cite{GS02} who designed a QPTAS for the same problem.  Since Grigni and Sissokho's spanners have weight $O(\log n\polye) w(\opt)$, the PTAS  by Demaine, Hajiaghayi and Kawarabayashi is not efficient. An EPTAS was then obtained by Borradaile, Le and Wulff-Nilsen~\cite{BLW17} in FOCS'17  via a new spanner of weight at most $O(\polye)w(\opt)$. They left designing a PTAS for subset TSP in $H$-minor-free graphs as the central open problem of the field~\cite{BLW17,DHK11}.  Note that even a QPTAS was not known for this problem.

In this paper, we design the first PTAS for the subset TSP problem  in $H$-minor-free graphs (Theorem~\ref{thm:stsp-ptas}). Our main contribution  a \emph{nearly light subset spanner} construction based on \emph{sparse spanner oracles}, a new concept we introduce in this work.  We show that spanner oracles with weak sparsity are both necessary and sufficient to construct light subset spanners, even for general graphs. This is somewhat surprising given that previous constructions need to make use of special properties, such as bounded dimension or minor-freeness, to construct light spanners from sparse spanners (see Section~\ref{sec:our-technique}). We hypothesize that the concept of sparse spanner oracles will find many other applications, and we support this hypothesis by giving applications in two different settings (Theorem~\ref{thm:dd-metric} and Theorem~\ref{thm:light-spanner-cd}).

In a broader context, our PTAS for subset TSP in $H$-minor-free is significant for several reasons. First, despite the fact that many beautiful meta algorithmic ideas~\cite{DHK11,BFLPST16,FLRS11} for designing PTASes have been developed for $H$-minor-free graphs over the year, a PTAS for subset TSP has still been out of reach. Second, it seems that a PTAS for subset TSP is not possible beyond $H$-minor-free graphs. The problem was proved to be MAXSNP-hard~\cite{GKP95} in topologically minor-closed graphs, which contains $H$-minor-free graphs. It is even MAXSNP-hard in $1$-planar graphs~\cite{Borradaile13}, which generalize planar graphs by allowing at most one crossing per edge in the embedding. (A PTAS for TSP in $1$-planar graphs remains unknown however.)  Furthermore, subset TSP is nontrivial even in unweighted graphs, while TSP in unweighted graphs is often easier:  the first PTAS for TSP is in unweighted graphs~\cite{GKP95}. Indeed, by rounding and subdividing long edges, a PTAS for unweighted subset TSP can be turned into a PTAS for weighted subset TSP.  (The subdivision step does not introduce bigger clique minors.) Finally, techniques developed for solving problems in $H$-minor-free graphs are often very different from techniques for the same problems in planar and bounded genus graphs -- representative example problems are padded decompositions with  strong diameter~\cite{AGGNT19}, eigenvalue bounds~\cite{BLR10} and light spanners~\cite{BLW17}, and that the techniques for $H$-minor-free graphs often find applications in different contexts. This holds for our technique as well (see Section~\ref{sec:other-apps}).

\paragraph{Other related work}  subset TSP has also been studied from the parameterized complexity point of view. The classical dynamic programming algorithm of Held and Karp~\cite{HK71} can solve the problem in $O(2^k)n^{O(1)}$ time where $k$ is the number of terminals. Klein and Marx~\cite{KM14} designed the first sub-exponential $(2^{O(\sqrt{k}\log k)} + W)n^{O(1)}$-time algorithm for subset TSP in planar graphs with maximum  edge-weight $W$. Marx, Pilipczuk and Pilipczuk~\cite{MPP18} generalized the algorithm of Klein and Marx to directed planar graphs and improved the running time to $2^{O(\sqrt{k}\log k)}n^{O(1)}$.

\subsection{Techniques}

Most PTASes for TSP and subset TSP, including ours, are based on the contraction decomposition framework. It was initially developed for planar graphs by Klein~\cite{Klein05}, and then extended to bounded genus graphs~\cite{DHM07} and $H$-minor-free graphs~\cite{DHK11}. PTASes following this framework have four steps: (1) construct  a light (subset) spanner $S$ that preserves the distance between every pair of (terminated) vertices up to $(1+\epsilon)$ factor, and has weight at most $L(\epsilon) \opt$ where $L(\epsilon)$ is a constant depending on $\epsilon$ only, (2) use the shifting technique to contract a subset of edges of $S$ to obtain a bounded treewidth graph, (3) apply dynamic programming to find an optimal solution of the contracted graph and (4) lift the solution found in step (3) to a solution of the input graph. The treewidth of the contracted graph in step (3) is $\polye L(\epsilon)$ where $L(\epsilon)$ is the constant in step (1), called the \emph{lightness} of the spanner\footnote{A more formal definition will be provided in Section~\ref{sec:prim}.}.

Sometimes, to have a PTAS, it suffices to have the lightness constant in step (1) relaxed to $L(\epsilon) \log n$ provided that there is a dynamic programming algorithm of running time $2^{O(\mathrm{tw})}$ in step (3) for the problem, where $\mathrm{tw}$ is the treewidth of the contracted graph. This is because $L(\epsilon) \log n$ lightness implies that the treewidth of the graph in step (3) is  $\log n\polye L(\epsilon)$,  and hence the final running time is $2^{O(\log n\polye L(\epsilon))} = n^{L(\epsilon)\polye}$. This relaxation was exploited by  Demaine, Hajiaghayi and Kawarabayashi~\cite{DHK11} in their PTAS for TSP.  Spanners with an extra logarithmic factor in the lightness are called \emph{nearly light spanners}. 

Constructing (nearly) light spanners  has become the most difficult task in designing PTASes following the contraction decomposition framework; prior work on approximating subset TSP in planar and bounded-genus graphs~\cite{Klein06, BDT14}  focused solely on this task. In this paper, we solve the same task for $H$-minor-free graphs.

\begin{theorem}\label{thm:main} Let $T$ be a of $k$ terminals in  an edge-weighted $H$-minor-free graph $G$ and $\st$ be a minimum Steiner tree of $G$ for $T$.  There is a polynomial time algorithm that can find a subgraph $S$ of $G$ such that:
\begin{enumerate}[noitemsep]
\item[(i)] $d_G(x,y) \leq d_S(x,y) \leq (1+\epsilon)d_G(x,y)$ for every two distinct terminals $x,y \in T$.
\item[(ii)] $w(S) = O_H(\poly(\frac{1}{\epsilon})\log k)w(\st)$.
\end{enumerate}
where $O_H(.)$ hides the dependency of the constant on $|H|$. Furthermore, if $G$ has constant treewidth, then $w(S) = O(\poly(\frac{1}{\epsilon}))w(\st)$.
\end{theorem}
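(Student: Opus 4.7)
The plan is to prove Theorem~\ref{thm:main} by reducing the construction of a nearly light subset spanner to the existence of a \emph{sparse spanner oracle} for $H$-minor-free graphs, exploiting the necessary-and-sufficient characterization previewed in the introduction. I would organize the argument around three ingredients.

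First, I partition the set of terminal pairs $T \times T$ into $O(\log k)$ geometric distance scales. After scaling so that $w(\st) = \Theta(1)$, for each $i$ let $\mathcal{P}_i := \{(x,y) \in T \times T : d_G(x,y) \in [2^{i-1}, 2^{i})\}$. Pairs at distance smaller than a $1/\poly(k)$ fraction of $w(\st)$ can be routed through $\st$ directly (their distance is already preserved up to $(1+\epsilon)$ since $\st$ connects all terminals), and pairs at distance $\Omega(w(\st))$ only span $O(\log k)$ scales. This reduces the task to handling each $\mathcal{P}_i$ separately.

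Second, for each scale $i$, I invoke a sparse spanner oracle that returns a subgraph $S_i \subseteq G$ which preserves distances up to $(1+\epsilon)$ for all pairs in $\mathcal{P}_i$ and whose weight is at most $O_H(\polye) \cdot w(\st)$. Such an oracle for $H$-minor-free graphs can be obtained from the sparse $(1+\epsilon)$-spanner constructions of Grigni--Sissokho~\cite{GS02} and \blw~\cite{BLW17}, applied to a subinstance local to scale $i$ and charged against the Steiner tree. The final spanner is then $S := \st \cup \bigcup_i S_i$, which satisfies (i) because each terminal pair falls into some $\mathcal{P}_i$ and $S_i$ preserves its distance, and satisfies (ii) because $w(S) \leq w(\st) + \sum_{i} w(S_i) = O_H(\polye \log k)\, w(\st)$.

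Third, for the constant-treewidth case I exploit the fact that such graphs admit substantially stronger sparse spanners, and I use a tree decomposition of width $O(1)$ to charge each $S_i$ against disjoint portions of $\st$ determined by the separator bags. A standard amortization over the tree decomposition then eliminates the $\log k$ factor, giving the improved bound $w(S) = O(\polye)\, w(\st)$.

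The main obstacle, and the bulk of the technical work, is the second step: designing a sparse spanner oracle whose output weight is bounded by $w(\st)$ rather than by $w(G)$ or $w(\opt)$. Off-the-shelf minor-free spanners preserve \emph{all} pairwise distances and are measured against spanning subgraphs of $G$, whereas here we need to preserve only terminal distances and charge the spanner weight to the Steiner tree. Overcoming this requires restricting the oracle to a ``mortar'' subgraph anchored at $T$ and performing a cut-and-patch argument that charges each added edge of $S_i$ to a portion of $\st$ of comparable length, using $\st$ as a global routing backbone to keep the charging disjoint across scales within a constant factor.
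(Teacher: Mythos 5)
Your outline reproduces the high-level shape of the argument (distance scales, an oracle-like object per scale, union over scales), but the step you defer to at the end is exactly where the theorem lives, and as stated your plan has two genuine gaps. First, a local error: the claim that terminal pairs at distance below a $1/\poly(k)$ fraction of $w(\st)$ are ``already preserved up to $(1+\epsilon)$ since $\st$ connects all terminals'' is false --- two terminals can be at tiny graph distance yet far apart in the Steiner tree, so routing through $\st$ gives no multiplicative guarantee. (The fix is cheap and is what the paper does: there are at most $O(k^2)$ such pairs, so one can add their actual shortest paths at total cost $O(\epsilon^{-1})w(\st)$; but the reasoning as written is wrong.) Second, and centrally: the assertion that each scale-$i$ subgraph $S_i$ of weight $O_H(\polye)\,w(\st)$ ``can be obtained from the sparse spanner constructions of Grigni--Sissokho and BLW, applied to a subinstance local to scale $i$ and charged against the Steiner tree'' does not follow from anything known. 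Those constructions preserve all-pairs distances among all vertices and are charged against $\MST(G)$, not against a Steiner tree of a terminal subset; $H$-minor-free graphs are not closed under metric completion on $T$, and restricting or contracting to terminals can blow up the edge count relative to $|T|$. Even granting a weakly sparse oracle in the paper's sense, a single per-scale call on all of $T$ only yields weight $O(\wsp_{\mathcal{O}}\,|T|\,\ell)$, which can vastly exceed $w(\st)$ when many terminals cluster at distances much smaller than $\ell$; so the per-scale bound you assert is precisely the unproven crux, and your ``mortar/cut-and-patch'' sentence is a placeholder rather than an argument.

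The paper closes this gap with two separate pieces of machinery that your proposal does not supply. (a) It constructs a weakly sparse spanner oracle for $H$-minor-free graphs directly, via the shortest-path separators of Abraham--Gavoille combined with a non-planar generalization of Klein's single-source spanners; the separator recursion costs a $\log n$ factor in the weak sparsity (reduced to $\log k$ by first applying the Krauthgamer--Nguy{\ee}n--Zondiner distance-preserving minor with $O(k^4)$ vertices), and for bounded treewidth a constant-sparsity oracle comes from exact terminal-distance-preserving minors with linearly many Steiner vertices --- not from an amortization over a tree decomposition as you suggest. (b) It then proves a general reduction (iterative clustering with a credit/amortization argument over the terminal metric's $\MST$) showing that oracle sparsity $\wsp_{\mathcal{O}}$ yields subset-spanner lightness $\tilde{O}(\max(\wsp_{\mathcal{O}},\epsilon^{-1})\epsilon^{-2})$: at each level only one representative terminal per high-degree $\epsilon$-cluster is fed to the oracle, and the cost is paid by cluster credits, so the $O(\log k)$ number of levels does \emph{not} multiply the weight. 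In your accounting the $\log k$ comes from summing over scales, whereas in the actual proof it comes from the oracle's sparsity; conflating these hides the fact that without the clustering/credit argument the per-scale weight cannot be charged to $w(\st)$ at all.
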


Since $w(\st) \leq w(\opt)$, property (ii) implies that $w(S) \leq  O_H(\poly(\frac{1}{\epsilon})\log k)w(\opt)$. Thus, $S$ is nearly light when $k$ is polynomial in $n$. To obtain a PTAS, we need a dynamic programming algorithm that can solve subset TSP optimally in treewidth-$\mbtw$ graphs in $2^{O(\mbtw)}$ time. Such a dynamic programming algorithm can be obtained using standard techniques; readers are referred to Appendix~\ref{app:dp} for full details. In the following section, we will review known constructions of light subset spanners. 

\subsubsection{Previous techniques}\label{sec:tech-prev}

To the best of our knowledge, there are two light subset spanner constructions: one for planar graphs by Klein~\cite{Klein06} and another for bounded-genus graphs by Borradaile, Demaine and Tazari~\cite{BDT14}. Both constructions heavily rely on non-crossing embeddings of input graphs.

Klein's construction has two main components: \emph{strip decomposition} and \emph{bipartite spanner}. A strip decomposition of $G$ is constructed as follows. First, find a 2-approximation $\st$ of the optimal Steiner tree for $T$ using, say, Mehlhorn's algorithm~\cite{Mehlhorn88}.  Then double the edges of $\st$ to make a new face $f_T$ consisting of two copies of each edge in $\st$(see Figure~\ref{fig:cutting} in Appendix~\ref{app:figs}),    and designate $f_T$  as the infinite face of $G$.  Let $G'$ be the new planar graph.  A strip decomposition $K$ initially contains $f_T$ only. We then add vertices and edges of $G'$ to $K$ recursively by (1) walking  along the boundary of $G'$ to find a minimal subpath, say $\partial G'[x,y]$, whose endpoints' distance in $G'$ is less than the length of the subpath by a factor of $(1+\epsilon)$, (2) adding a shortest path, say $P$, between $x$ and $y$ to $K$  and (3) recursively applying the first two steps to the subgraph of $G'$ enclosed by $P \cup (\partial G' \setminus \partial G'[x,y])$ (see Figure~\ref{fig:strip-decomp} in Appendix~\ref{app:figs}).

Using a charging argument, Klein showed that $w(K) \leq O(\epsilon^{-1})w(\st)$. By construction, each face of the strip decomposition $K$ is composed of two paths, say $P$ and $Q$, between the same endpoints, in which one of them, say $P$, is a shortest path in $G$. A \emph{bipartite spanner} is then constructed for each strip. The final subset spanner is the union of all  the bipartite spanners. Each bipartite spanner, say $\hat{K}$, has two properties: (a) for every two vertices $x \in P, y \in Q$, $d_{\hat{K}}(x,y) \leq (1+\epsilon)d_G(x,y)$ and (b) $w(\hat{K}) \leq \polye w(P\cup Q)$.  Since $w(K) \leq O(\epsilon^{-1})w(\st)$, property (b)  guarantees that $w(S) \leq \polye w(\st)$, thereby implying the lightness property of $S$. To show the distance preserving property for every two distinct terminals $u$ and $v$, Klein used planarity to  argue that the shortest path in $G$ between $u,v$  either lies completely inside a strip or crosses  strip boundaries. Note that $u,v \in \partial G'$ by the preprocessing step in which we construct a new infinite face consisting of copies of the edges of $\st$. By the minimality of  $\partial G'[x,y]$ in step (1) and the property (a) of bipartite spanners, one can show that there is an $(1+\epsilon)$ approximate  shortest path between $u,v$ in $S$.

For bounded-genus graphs, Borradaile, Demaine and Tazari~\cite{BDT14} used the cutting technique to cut the input graph into a planar graph, and then applied Klein's construction. In a certain sense, their subset spanner construction still heavily relies on planarity. Borradaile, Demaine and Tazari~\cite{BDT14} conjectured that a similar construction can be applied to $H$-minor-free graphs using Robertson and Seymour's decomposition~\cite{RS03}. However, this direction has not been fruitful. Even in a very restricted setting where $G$ has bounded treewidth, it is not known whether light subset spanners exist. In this work, we follow an entirely different approach to bypass the embedding in constructing subset spanners.

\subsubsection{Our techniques}\label{sec:our-technique}

 In the lightness analysis of greedy spanners for Euclidean and doubling metrics~\cite{BLW19,LS19}, the authors implicitly used sparsity of the greedy spanner to bound the weight. We observe that their analysis can be turned into a non-greedy algorithm that explicitly uses sparse spanners to construct light spanners. In this way, their analysis can be seen as an implicit construction.  This inspires us to follow the same strategy: build light spanners from sparse spanners.  However, there are two fundamental issues with this idea. The major issue is  in defining  a ``sparsity'' measure for subset spanners.  Simply counting the number of edges does not work: one can subdivide an edge of the spanner infinitely many times without changing the number of terminals. To get around this problem, we introduce \emph{weak sparsity} that can be loosely regarded as ``counting'' the number of shortest paths needed to preserve distances between terminals. Thus, this measure is robust to edge subdivision.

 Another fundamental issue is the subtlety in the way sparse spanners were used in prior light spanner constructions in Euclidean and doubling metrics~\cite{BLW19,LS19}. These constructions often require another property of the input such as the packing property, that we don not have in our setting. The light spanner construction by Borradaile, Le and Wulff-Nilsen~\cite{BLW17} for $H$-minor-free graphs uses the fact that a contracted graph of an $H$-minor-free graph has a linear number of edges. However, if one applies this construction to terminals, the contracted graph may have  a super-linear number of edges w.r.t the number of terminals. We get around this issue by introducing an abstraction called  \emph{spanner oracles} that hide all subtleties in previous light spanner constructions. As a result, we can show that constructing a weakly sparse spanner oracle is sufficient to have a  light subset spanner (in general graphs) and surprisingly that having a weakly sparse spanner oracle is also necessary. 

Our sparse spanner oracles are inspired by previous (implicit) light spanner constructions in Euclidean and doubling metrics~\cite{BLW19,LS19}. Specifically, these constructions repeatedly select a specific \emph{subset of points } $Q$ in the input metric and construct a sparse spanner $S$ for $Q$ with two properties: (a) it preserves, up to $(1+\epsilon)$ factor, the distance between every pair of points of $Q$  in range $[\ell/2, \ell]$ for some positive real number $\ell$ and (b)  $w(S) = O(|Q|\ell)$.  Property (b) follows exactly from the sparsity of $S$: if  it has sparsity at most $s(\epsilon)$, by deleting every edge of weight more than $(1+\epsilon)\ell$, $w(S)  \leq s(\epsilon)(1+\epsilon)|Q|\ell = O(|Q|\ell)$. Property (b) is then used to bound the lightness of the final spanner. Note that many such sparse spanners are constructed  (implicitly) for different subsets of points determined by the algorithm.  We instead look at this as an oracle: the algorithm repeatedly queries the oracle by giving it a subset of points $Q$ and the oracle must return a spanner $S$ with weight at most $O(|Q|\ell)$. We then regard $\frac{w(S)}{|Q|\ell}$ as a sparsity measure of $S$.  The following definition formalizes this intuition. 

\begin{definition}[Spanner oracle]\label{def:spanner-oracle} A  spanner oracle, denoted by $\mathcal{O}$, of a given graph $G$ is an algorithm that, given any set of terminals  $T$ and a real positive number $\ell$, outputs a minimal subgraph $\mathcal{O}(T,\ell)$ of $G$ spanning $T$ such that:
\begin{equation}
d_G(x,y) \leq d_{\mathcal{O}(T,\ell)}(x,y) \leq (1+\epsilon) d_G(x,y) \quad \forall x\not=y \in T \mbox{ s.t } \ell/8\leq  d_G(x,y)\leq \ell
\end{equation}
The \emph{weak sparsity} of $\mathcal{O}$ is defined as:
\begin{equation}
\wsp_{\mathcal{O}} = \sup_{\substack{\ell \in \mathbb{R}^+ \\\emptyset \subset T\subseteq V }} \frac{w(\mathcal{O}(T,\ell))}{|T|\cdot\ell}
\end{equation}
The \emph{strong sparsity} of $\mathcal{O}$ is defined as:
\begin{equation}
\ssp_{\mathcal{O}} = \sup_{\substack{\ell \in \mathbb{R}^+ \\\emptyset \subset T\subseteq V }} \frac{|E(\mathcal{O}(T,\ell))|}{|T|}
\end{equation}
\end{definition}

There is nothing special about constant $\frac{1}{8}$ in the definition of spanner oracles; any sufficiently small constant works. Observe that that strong sparsity implies weak sparsity:
\begin{equation}\label{eq:wsp-ssp}
\wsp_\mathcal{O}  \leq 2\cdot \ssp_{\mathcal{O}}
\end{equation}
since by minimality, every edge of $\mathcal{O}(T,\ell)$ has length at most $2\ell$; any edge of length more than $2\ell$ cannot be in a shortest path between two terminals of distance at most $(1+\epsilon)\ell$ in $\mathcal{O}(T,\ell)$ when $\epsilon < 1$.

Since Euclidean and doubling metrics are closed under taking metric completion on any subset of points, we can show that any graph representing a point set in Euclidean or doubling metrics of dimension $d$ has a spanner oracle with strong sparsity $\epsilon^{-O(d)}$ (see Section~\ref{sec:spare-oracle-Euc} for details). However, $H$-minor-free graphs do not have such a closure property, we need to use weak sparsity. Curiously, despite the fact that metrics of constant correlation dimension are also not closed under taking metric completion, we still can show that they have strongly sparse spanner oracles.

Weak sparsity allows us to draw connections to other areas and use their tools in our construction. Specifically, it has a close relationship to \emph{distance preserving minors}~\cite{KNZ14,CGH16} that arise in the vertex sparsification problem. We show that an (approximate) distance preserving minor with a linear number of Steiner vertices for any given set of terminals implies a spanner oracle of constant weak sparsity. Since graphs of bounded treewidth have  terminal preserving minors with such linearity~\cite{KNZ14}, they admit a sparse spanner oracle with constant weak sparsity.  We then use the weakly sparse spanner oracle to build a subset spanner with constant lightness for bounded treewidth graphs; a special case that has been open prior to our work. One might ask whether distance preserving minors can be used to get weakly sparse spanner oracles with constant sparsity for $H$-minor-free graphs. The answer is still unknown. The best-known approximate distance preserving minor for $H$-minor-free graphs has a quadratic number of Steiner vertices, which is not enough to imply a spanner oracle with non-trivial sparsity. Instead, we base our construction on shortest path separators~\cite{AG06} and our new light single-source spanners. The same idea has been used before in different problems~\cite{KKS11,CGH16}. 

\begin{theorem}\label{thm:sparse} Given an $H$-minor-free graph $G$ of $n$ vertices, in polynomial time, we can construct a spanner oracle $\mathcal{O}$ for $G$ with weak sparsity $O_H(\log n \poly(\frac{1}{\epsilon}))$. Furthermore, if $G$ has constant treewidth, then $\mathcal{O}$ has constant weak sparsity.
\end{theorem}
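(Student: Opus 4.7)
The plan is to handle the two claimed bounds using different tools, glued by a common recursion on $H$-minor-free graphs. For the constant-treewidth case, I would invoke the exact distance-preserving minor of Krauthgamer, Nguyen and Zondiner~\cite{KNZ14}, which yields a minor $M$ of $G$ with $|V(M)\setminus T|=O(|T|)$ preserving all pairwise terminal distances exactly. To realize the oracle as a subgraph of $G$, I replace each edge of $M$ by a corresponding shortest path in $G$ and take the minimal subgraph that still $(1+\epsilon)$-preserves terminal distances in $[\ell/8,\ell]$. Minimality forces every kept edge to have length at most $(1+\epsilon)\ell$, so each expanded minor edge has length at most $O(\ell)$. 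Bounded treewidth makes $|E(M)|=O(|V(M)|)=O(|T|)$, giving total weight $O(|T|\cdot\ell)$ and constant weak sparsity.

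For $H$-minor-free graphs, I would build $\mathcal{O}$ recursively using Abraham--Gavoille shortest-path separators~\cite{AG06}: at each level, a set of $O_H(1)$ shortest paths splits the current piece into subpieces of constant-fraction size. For each separator path $P$, I would attach a \emph{light single-source spanner} $S_P$ by placing a $\Theta(\epsilon^2\ell)$-net on $P$ and including, for every terminal $t$ in the current piece with $d_G(t,P)\le\ell$, a shortest path in $G$ from $t$ to each net point within distance $\ell$ of $t$. The oracle output $\mathcal{O}(T,\ell)$ is the minimal subgraph of $G$ contained in the union of all $S_P$'s. Correctness is a pair-by-pair check: for terminals $u,v$ with $\ell/8\le d_G(u,v)\le\ell$, let $P$ be the separator at the smallest level whose removal cuts the $G$-shortest $u$-to-$v$ path at a point $x\in P$; then $d_G(u,x), d_G(v,x)\le\ell$, both $u,v$ lie in the relevant terminal set $T_P$, and the closest net point $x'$ to $x$ yields a walk $u\leadsto x'\leadsto v$ in $S_P$ of length at most $d_G(u,v)+2d_G(x,x')\le(1+\epsilon)d_G(u,v)$, using $d_G(u,v)\ge\ell/8$. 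For the weight, each terminal lies within $\ell$ of only $O(\polye)$ net points on any fixed $P$ and contributes at most $O(\polye)$ shortest paths there, each of length at most $(1+\epsilon)\ell$. Per level the contribution is then $O_H(\polye\cdot|T|\cdot\ell)$, and summing over the $O(\log n)$ recursion levels yields the claimed $O_H(\polye\cdot\log n\cdot|T|\cdot\ell)$.

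The main obstacle I anticipate is the single-source spanner step: the naive alternative of including an entire shortest-path tree from $P$ would scale the weight with $w(P)$, which can be unboundedly large relative to $|T|\cdot\ell$. The net discretization avoids this but forces a trade-off between net resolution (needed for correctness) and per-terminal charge (needed for sparsity), both of which must be tuned to the scale $\ell$ and the window $[\ell/8,\ell]$. A secondary subtlety is that the first-crossing separator argument must remain well-defined even when the $u$-to-$v$ shortest path meets multiple separators across the recursion; this should be handled by choosing $x$ as the first crossing point traversed from $u$, but compatibility with the piece-hierarchy may require a more careful inductive invariant than merely distance preservation at a single scale.
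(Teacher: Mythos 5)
Your treewidth argument is essentially the paper's: compress $G$ with the Krauthgamer--Nguy\~{e}n--Zondiner distance-preserving minor (Lemma~\ref{lm:tw-tpm-linear}), decompress edges to shortest paths in $G$, and drop long edges -- this is exactly Lemma~\ref{lm:minor-to-oracle} specialized to bounded treewidth.

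For $H$-minor-free graphs your high-level recursion (Abraham--Gavoille shortest-path separators, $O(\log n)$ depth, $O_H(1)$ separator paths per level) matches the paper, but the per-path ``single-source'' gadget is genuinely different. The paper uses Klein's adaptive breakpoint construction (Lemma~\ref{lm:ss-spanner}): from a source $p$ it chooses a maximal sequence of breakpoints on $P$ satisfying a stretch criterion, proves there are $O(\epsilon^{-2})$ of them, and bounds total weight by $O(\epsilon^{-2})d_G(p,P)$; it also must include a short subpath $P_j$ of $P$ and route $s_i\leadsto u \leadsto v \leadsto t_i$ along it. You instead place a uniform $\Theta(\epsilon^2\ell)$-net on $P$ and connect each nearby terminal to each nearby net point, routing $u\leadsto x'\leadsto v$ directly through the single net point nearest the crossing -- no walk along $P$ is needed. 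This is simpler, avoids the telescoping-sum weight proof, and at scale $\ell$ gives a comparable $O(\epsilon^{-2}\ell)$ per-terminal charge (with $\Theta(\epsilon\ell)$ spacing it would even shave an $\epsilon^{-1}$). What Klein's construction buys that yours does not is a bound in terms of $R=d(p,P)$ rather than $\ell$, which the paper never needs here; for this theorem the two are interchangeable.

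Two fixes you should make to your analysis. First, you wrote the filtering threshold as ``net point within distance $\ell$ of $t$,'' but the nearest net point $x'$ to the crossing $x$ satisfies only $d(u,x')\le(1+\epsilon^2)\ell$; use $2\ell$ (or $(1+\epsilon)\ell$) as the cutoff, which changes nothing asymptotically. Second, and more substantively, your per-terminal counting silently assumes that the net points relevant to $t$ lie on a subpath of $P$ of length $O(\ell)$. This is true only if distances are measured \emph{inside the current piece} $G_i$, because $P$ is a shortest path of $G_i$, not of $G$: you have $d_P(y,z)=d_{G_i}(y,z)$, whereas $d_G(y,z)$ can be much smaller, so ``net points within $d_G$-distance $\ell$ of $t$'' could sprawl over an arbitrarily long stretch of $P$. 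The correct statement is to threshold by $d_{G_i}$, which is harmless for correctness because the relevant $u$--$v$ shortest path lies entirely inside $G_i$ (it has not crossed an earlier separator), so the crossing $x$ satisfies $d_{G_i}(u,x)\le d_G(u,v)\le\ell$. The paper guards against exactly this with Observation~\ref{obs:Qi-sp-Gi}. With that correction your argument goes through and yields the claimed $O_H(\epsilon^{-2}\log n)$ weak sparsity.

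Your closing worry about ``compatibility with the piece hierarchy'' is not a real obstruction: the first-crossing rule partitions terminal pairs unambiguously among recursion nodes, which is what the paper's removal of processed paths from $\mathcal{Q}$ accomplishes explicitly.
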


Our second contribution is to show that a weakly sparse spanner oracle is both  necessary and sufficient to have light subset spanners. Recall that the lightness of a subset spanner is the ratio of its weight to the weight of the minimum Steiner tree spanning the same set of terminals.

\begin{theorem}\label{thm:reduction} Let $G$ be a (general) edge-weighted graph. If $G$ has an $(1+\epsilon)$-spanner oracle $\mathcal{O}$ with weak sparsity $\wsp_{\mathcal{O}}$, then for any given set of terminals $T$, there exists a subset $(1+O(\epsilon))$-spanner  with lightness at most:
\begin{equation}
 \tilde{O}\left(\max(\wsp_{\mathcal{O}}, \epsilon^{-1})\epsilon^{-2}\right).
\end{equation}

Conversely, if for any given set of terminals $T$, $G$ has a subset $(1+\epsilon)$-spanner with lightness $\lgt$, then it has an $(1+\epsilon)$-spanner oracle with weak sparsity $O(\lgt)$.
\end{theorem}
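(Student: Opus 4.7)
The plan is a multi-scale reduction to the oracle using a near-optimal Steiner tree as scaffolding. Compute any constant-factor Steiner tree $\st$ for $T$ in $G$. For each dyadic scale $\ell_i = 2^i$ in the (polynomially bounded) range of inter-terminal distances, partition the edges of $\st$ into edge-disjoint subtrees of total weight $\Theta(\epsilon \ell_i)$ via a DFS sweep that merges undersized children into their parent, and pick one terminal from every subtree to obtain a net $T_i \subseteq T$; since the leaves of a Steiner tree are terminals, each subtree contains at least one. Output
\begin{equation*}
S \;:=\; \st \;\cup\; \bigcup_i \mathcal{O}(T_i, 2\ell_i).
\end{equation*}
The doubling of the scale in the oracle query provides a padding that keeps shifted representative distances inside the oracle's guaranteed $[\ell/8, \ell]$ window.

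\textbf{Analysis of the forward direction.} The subtree partition yields $|T_i| = O(w(\st)/(\epsilon \ell_i))$, so by weak sparsity $w(\mathcal{O}(T_i, 2\ell_i)) \le 2\wsp_{\mathcal{O}}\cdot|T_i|\cdot\ell_i = O(\wsp_{\mathcal{O}} \cdot w(\st)/\epsilon)$. Summing over $O(\log n)$ scales and adding $w(\st)$ bounds the total weight by $\tilde O(\wsp_{\mathcal{O}}/\epsilon)\cdot w(\st)$, which is lightness $\tilde O(\wsp_{\mathcal{O}}/\epsilon)$ since $\st$ is within a constant factor of the minimum Steiner tree; pushing this to the claimed $\tilde O(\max(\wsp_{\mathcal{O}},\epsilon^{-1})\,\epsilon^{-2})$ requires a $(1{+}\epsilon)$-spaced scale sequence in place of powers of two (adding one $\epsilon^{-1}$ to the number of levels) and the additive $\epsilon^{-1}$ to cover the regime $\wsp_{\mathcal{O}} = o(\epsilon^{-1})$ where the Steiner tree itself dominates. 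For distance preservation, a pair $x,y \in T$ with $d_G(x,y) \in [\ell_i/2, \ell_i]$ has representatives $x_i,y_i \in T_i$ reachable in $\st \subseteq S$ within $O(\epsilon \ell_i)$, and $d_G(x_i, y_i) \in [\ell_i/4, 2\ell_i]$ lies in the $[2\ell_i/8, 2\ell_i]$ window of the oracle at scale $2\ell_i$; concatenating the $(1{+}\epsilon)$ oracle bound with the two $\st$-hops gives $d_S(x,y) \le (1 + O(\epsilon))\, d_G(x,y)$.

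\textbf{Reverse direction and the main obstacle.} The converse is a short reduction: given $T$ and $\ell$, form the auxiliary graph $H$ on $T$ with edges $\{x,y\}$ whenever $d_G(x,y) \le \ell$, let $T_1, \dots, T_r$ be its connected components, and apply the hypothesised light subset spanner to each $T_c$ in $G$. Any spanning tree of $H[T_c]$ lifts to a Steiner tree of $T_c$ in $G$ of weight at most $(|T_c|-1)\ell$, so the returned spanner $S_c$ has $w(S_c) \le \lgt\cdot|T_c|\cdot\ell$; define $\mathcal{O}(T,\ell)$ as any minimal subgraph of $\bigcup_c S_c$ that still preserves distances between all terminal pairs with $d_G \in [\ell/8, \ell]$. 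Any such pair lies in a common $T_c$ (their $G$-distance being $\le \ell$), so the oracle inherits the $(1{+}\epsilon)$ distance-preservation guarantee, and $w(\mathcal{O}(T,\ell)) \le \lgt \cdot |T| \cdot \ell$ yields weak sparsity $O(\lgt)$. The main obstacle is in the forward direction: the nets $T_i$ must simultaneously be dense enough that every terminal is within $O(\epsilon \ell_i)$ of some representative along $\st$, yet sparse enough that $|T_i| = O(w(\st)/(\epsilon \ell_i))$; balancing these constants against the oracle's fixed factor-$8$ window while keeping the $\epsilon$-blow-up to $\epsilon^{-2}$ (rather than a looser polynomial) requires careful bookkeeping of the thresholds in the subtree-partition step.
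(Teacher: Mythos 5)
Your reverse direction is correct and essentially the paper's argument: the components of your threshold graph at scale $\ell$ coincide with the components obtained in the paper by cutting the terminal-metric MST at edges longer than $\ell$, and the charging of each returned spanner against a spanning tree of its component of weight at most $|T_c|\ell$ is the same calculation.

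The forward direction, however, has a genuine gap in the weight accounting. Your construction queries the oracle once per distance scale with a net of size $|T_i| = O(w(\st)/(\epsilon\ell_i))$, so each scale contributes weight $\Theta(\wsp_{\mathcal{O}}\,\epsilon^{-1}\, w(\st))$ — the per-scale cost does not decay geometrically, it is the same at every level. Summing over the $\Theta(\log(\text{aspect ratio}))$ scales (which is $\Theta(\log k)$ or $\Theta(\log n)$ even after the standard preprocessing that handles very short pairs directly) gives lightness $O(\wsp_{\mathcal{O}}\,\epsilon^{-1}\log n)$, not the claimed $\tilde{O}(\max(\wsp_{\mathcal{O}},\epsilon^{-1})\epsilon^{-2})$: the $\tilde{O}$ in the theorem suppresses only $\log\frac{1}{\epsilon}$ factors, and this distinction is the entire content of the statement (it is what yields constant lightness for bounded treewidth, the improved doubling-metric bound, etc.; with a $\log n$ loss the theorem would be much weaker and several of its applications would evaporate). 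Switching to a $(1+\epsilon)$-spaced scale sequence, as you suggest, only multiplies the number of levels by $\epsilon^{-1}$ and makes the bound worse. The paper avoids the log-of-scales factor by an amortized (credit) argument in the style of Chechik--Wulff-Nilsen and Borradaile--Le--Wulff-Nilsen: a hierarchical clustering is built level by level, each level-$(i-1)$ cluster carries credit proportional to $\max(\mathrm{diam},\ell_{i-1})$ (invariants (DC1)/(DC2)), the oracle is invoked only on one representative per \emph{high-degree} cluster of the level-$i$ cluster graph, and the Credit Lemma shows each cluster retains $\Omega(\epsilon\,\ce\,\ell_{i-1})$ leftover credit to pay for the oracle output at that level; since all credit is allocated once, totalling $O(\ce)\,w(\MST)$, the levels telescope with no dependence on their number. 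Your scheme has no mechanism playing this role, so the claimed bound does not follow.

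A secondary, repairable point: your justification that every subtree in the $\st$-partition contains a terminal ("leaves of a Steiner tree are terminals") is false for interior subtrees consisting of Steiner vertices only; but this does not matter, since only subtrees containing terminals need representatives, and any terminal lies in such a subtree together with its representative at distance $O(\epsilon\ell_i)$ along $\st$. Your stretch analysis (padding the query to $2\ell_i$ so shifted representative distances stay in the $[\ell/8,\ell]$ window) is fine.
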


Notation $\tilde{O}$ suppresses a $\log \frac{1}{\epsilon}$ factor. We can recover stretch bound $(1+\epsilon')$ for the subset spanner in Theorem~\ref{thm:reduction} by setting $\epsilon' = \epsilon/c$ where $c$ is the constant behind the big-O. 

We consider Theorem~\ref{thm:reduction} as a big leap in our understanding of  lightness and (weakly) sparsity. This is the first time a necessary and sufficient relationship between sparsity (of spanner oracles) and lightness is explicitly established without any special structure of the input.  Prior work~\cite{BLW17, BLW19,LS19} on light spanners exploited specific structures, such as bounded dimension or $H$-minor-freeness, to relate sparsity and lightness in a very subtle way\footnote{The work of Chechick and Wulff-Nilsen~\cite{CW16} studying general graphs does not apply to our case for two reasons: (1) they only consider spanners of stretch at least $3$ and (2) their work does not imply any black box reduction between sparsity and lightness.}.  Indeed, it has commonly been assumed that exploiting the structure of the input graph in constructing light spanners is unavoidable: one can easily come up with a graph of constant sparsity\footnote{Start with a complete graph of size $n$, we subdivide each edge sufficiently (but polynomially) many times so that the resulting graph is sparse. For each path $P$ that is subdivided from an edge in the original graph, we set weight $0$ to every edge of $P$ except for one of weight $1$. The $\mathrm{MST}$ of this graph will have weight $n-1$ while any $(1+\epsilon)$-spanner (of the new graph) for a given $\epsilon < 1$ must have weight $\Omega(n^2)$.} such that any spanner of the graph must have lightness polynomial  in $n$. That implies the gap between sparsity and lightness is polynomial in $n$. Our requirement on sparsity of an oracle is stronger than on  sparsity of a spanner in the sense the the former implies the latter, but this stronger assumption is indeed necessary by Theorem~\ref{thm:reduction}.

Another significant implication of Theorem~\ref{thm:reduction} is that the gap between sparsity and lightness is only $O(\epsilon^{-3})$ for any graph. This is somewhat surprising given the long line of research on sparse spanners and light spanners. In the Euclidean space of constant dimension $d$, it has been known since  early 90s that any set of $n$ points has a sparse spanner with sparsity $O(\epsilon^{1-d})$.  However, it took many years to figure out the optimal lightness bound: from $f(\epsilon,d)$ for some computable function $f(.)$~\cite{DHN93} in 1993, to $\left( \frac{d}{\epsilon}\right)^{-O(d)}$~\cite{DNS95} in 1995,  $\epsilon^{-O(d)}$~\cite{RS98} in 1998,  $O(\epsilon^{-2d})$~\cite{NS07} in 2007 and recently to $\tilde{O}(\epsilon^{-d})$~\cite{LS19} 2019, which is optimal~\cite{LS19}. All the proofs used  heavy machinery from Euclidean geometry. (We refer the readers to the paper by this author and Solomon~\cite{LS19} for a thorough historical discussion of this problem.) This sharply contrasts with Theorem~\ref{thm:reduction}; with a fairly easy argument to establish a spanner oracle with sparsity $O(\epsilon^{1-d})$,
 it gives a light Euclidean spanner with lightness $\tilde{O}(\epsilon^{-(d + 1)})$, without using any Euclidean geometry in the lightness proof. (Euclidean geometry is used implicitly in constructing the sparse spanner oracle.) The lightness bound we get is off the optimal bound~\cite{LS19} by just a factor of $\frac{1}{\epsilon}$. 

Theorem~\ref{thm:reduction} is not only conceptually interesting, but  it also has other applications. Observe that Theorem~\ref{thm:main} follows directly from Theorem~\ref{thm:reduction} and Theorem~\ref{thm:sparse} since $\wsp_{\mathcal{O}}  = O_H(\log n \poly(\frac{1}{\epsilon}))$ when $G$ is $H$-minor-free and $\wsp_{\mathcal{O}}  = O(1)$ when $G$ has constant treewidth. To replace the $\log n$ factor by a $\log k$ factor in Theorem~\ref{thm:sparse},  we pre-process the input graph using the distance preserving minor by Krauthgamer, Nguy{\ee}n, and Zondiner~\cite{KNZ14} to reduce $n$ to  $O(k^4)$. In the following section, we present additional applications of Theorem~\ref{thm:reduction} in different settings.

\subsection{Other applications of our techniques}\label{sec:other-apps}

In metrics of constant doubling dimension $d$, Borradaile, Le and Wulff-Nilsen~\cite{BLW19} showed that greedy spanners have  lightness $\epsilon^{-O(d)}$, improving upon  previous lightness bounds by Smid~\cite{Smid09} and Gottlieb~\cite{Gottlieb15}. In Section~\ref{sec:spare-oracle-Euc}, we construct a spanner oracle with strong sparsity $O(\epsilon^{-d})$.  Since strong sparsity implies weak sparsity (Equation~\ref{eq:wsp-ssp}), Theorem~\ref{thm:reduction} gives:

\begin{theorem}\label{thm:dd-metric}
Any metric of constant doubling dimension $d\geq 1$ has a spanner with lightness $\tilde{O}(\epsilon^{-(d+2)})$. 
\end{theorem}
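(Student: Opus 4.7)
The plan is to derive Theorem~\ref{thm:dd-metric} as a direct consequence of Theorem~\ref{thm:reduction} by exhibiting a spanner oracle with strong sparsity $O(\epsilon^{-d})$ for the complete graph $G$ realizing the given doubling metric, and then taking the terminal set $T$ to be the entire point set. Since the minimum Steiner tree on all the points coincides with $\MST(G)$, a subset spanner for this $T$ is an ordinary spanner of the metric, and subset lightness coincides with the standard spanner lightness benchmark. Thus it suffices to construct the oracle and then invoke Theorem~\ref{thm:reduction}; the rest of the proof is a single calculation.

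For the oracle construction (promised by Section~\ref{sec:spare-oracle-Euc}), I would, on input $(T,\ell)$, first take an $(\epsilon\ell/c)$-net $N$ of $T$ for a sufficiently large constant $c$, attach each point of $T$ to its nearest net representative, and then add an edge between every pair of net points at distance at most $(1+O(\epsilon))\ell$. A standard triangle-inequality argument shows this graph preserves pairwise distances in the range $[\ell/8,\ell]$ up to $(1+O(\epsilon))$. By the packing property of doubling metrics, every ball of radius $O(\ell)$ around a net point contains $O(\epsilon^{-d})$ other net points, so the total edge count is $O(|T|)+O(|N|\epsilon^{-d})=O(|T|\epsilon^{-d})$. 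After pruning to obtain the minimal distance-preserving subgraph required by Definition~\ref{def:spanner-oracle}, the count only decreases, giving $\ssp_{\mathcal{O}}=O(\epsilon^{-d})$, and hence by inequality (\ref{eq:wsp-ssp}) also $\wsp_{\mathcal{O}}=O(\epsilon^{-d})$.

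Feeding this weak sparsity bound into Theorem~\ref{thm:reduction} and using $d\ge 1$ (so that $\max(\epsilon^{-d},\epsilon^{-1})=\epsilon^{-d}$) yields a spanner with lightness
\[
\tilde{O}\bigl(\max(\wsp_{\mathcal{O}},\,\epsilon^{-1})\cdot\epsilon^{-2}\bigr)
 \;=\; \tilde{O}(\epsilon^{-d}\cdot\epsilon^{-2})
 \;=\; \tilde{O}(\epsilon^{-(d+2)}),
\]
and a rescaling $\epsilon \leftarrow \epsilon/\Theta(1)$ absorbs the $(1+O(\epsilon))$ stretch back into $(1+\epsilon)$ without changing the form of the bound, completing the proof.

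The main obstacle lies entirely in the oracle construction: one must obtain the tight $O(\epsilon^{-d})$ sparsity rather than the easier $\epsilon^{-O(d)}$ that comes from naive iterated packing, and the constant $c$ and the net granularity must be tuned so that the packing bound has exponent exactly $d$. A second, subtler issue is the minimality requirement in Definition~\ref{def:spanner-oracle}: the pruned subgraph must still contain enough edges to preserve all relevant pairwise distances, and one has to verify that pruning does not destroy the $(1+\epsilon)$ approximation for any terminal pair in $[\ell/8,\ell]$. Once both are handled the reduction of Theorem~\ref{thm:reduction} delivers the result mechanically.
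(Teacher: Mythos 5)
Your proposal is correct and follows essentially the same route as the paper: it builds a strongly sparse spanner oracle for the doubling metric via an $\Theta(\epsilon\ell)$-net (nearest-net edges plus net-net edges of length $O(\ell)$), bounds its sparsity by $O(\epsilon^{-d})$ using the packing property exactly as in Lemma~\ref{lm:oracle-dd}, and then invokes Theorem~\ref{thm:reduction} with $\wsp_{\mathcal{O}}\le 2\ssp_{\mathcal{O}}$ to get lightness $\tilde{O}(\epsilon^{-(d+2)})$. The two concerns you flag (tuning the net granularity to the scale $\ell$, and minimality/pruning) are handled the same way in the paper and pose no real obstacle.
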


Theorem~\ref{thm:dd-metric} further improves upon the bound achieved by  Borradaile, Le and Wulff-Nilsen~\cite{BLW19}. 

In SODA'08, Chan and Gupta~\cite{CG12} introduced correlation dimension of metric spaces as a way to capture global growth rate, as opposed to doubling dimension that only captures local growth rate. Correlation dimension is more general than doubling dimension in two ways: (a) a constant doubling dimension metric is a constant correlation metric and (b) there exists a constant correlation metric that has doubling dimension $\Omega(\log n)$. It should be noted that the doubling dimension of any metric space is $O(\log n)$. 

 As an application of our technique, we show for the first time that metrics of constant correlation dimension have subset spanners with sublinear lightness.

 \begin{theorem}\label{thm:light-spanner-cd} Given any terminal set $T$ in an $n$-point metric of constant correlation dimension $d$, a subset spanner for $T$ with lightness $\tilde{O}(\epsilon^{-(d/2 + 3)} \sqrt{n})$ can be constructed in polynomial time.
 \end{theorem}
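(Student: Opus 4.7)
My approach is to apply Theorem~\ref{thm:reduction} after constructing a spanner oracle $\mathcal{O}$ of weak sparsity $\wsp_{\mathcal{O}}=\tilde{O}(\epsilon^{-(d/2+1)}\sqrt{n})$ for the input metric. Feeding this into Theorem~\ref{thm:reduction} yields a subset $(1+\epsilon)$-spanner of lightness $\tilde{O}(\max(\wsp_{\mathcal{O}},\epsilon^{-1})\cdot\epsilon^{-2})=\tilde{O}(\epsilon^{-(d/2+3)}\sqrt{n})$, exactly as claimed. Thus the entire problem reduces to constructing a weakly sparse oracle that respects correlation dimension in a non-hereditary way.

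For the oracle itself, on a query $(T,\ell)$ I would run a $\Theta$-graph style construction on the underlying metric on $V$ (not only on $T$), since correlation dimension does not descend to subsets: for each $u\in V$, partition the annulus $B(u,2\ell)\setminus B(u,\ell/16)$ into $\epsilon^{-\Theta(d/2)}$ thin angular cones, implemented via nets in the abstract metric, and retain per cone a single minimum-weight edge. A standard cone/triangle-inequality argument then shows that the resulting subgraph $(1+\epsilon)$-approximates every distance in the window $[\ell/8,\ell]$, so in particular every pair in $T\times T$ at that scale is preserved, verifying the stretch condition of Definition~\ref{def:spanner-oracle}.

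Bounding the total weight is the delicate step. Because correlation dimension is not hereditary, I cannot apply the hypothesis directly to pairs in $T\times T$; instead, the plan is to aggregate using the global correlation-dimension inequality $\sum_{u\in V}|B(u,2r)\cap V|\le 2^d\sum_{u\in V}|B(u,r)\cap V|$ to bound the total pair mass of $V\times V$ at each dyadic scale up to $\ell$, and then convert this $V$-sum into a $T$-sum via Cauchy--Schwarz applied to the per-vertex degrees of the $\Theta$-construction. This redistribution is exactly where the $\sqrt{n}$ factor enters, and it is the principal obstacle: the cone count must be calibrated to $\epsilon^{-\Theta(d/2)}$ (rather than the $\epsilon^{-\Theta(d)}$ one would use in the doubling-dimension case of Theorem~\ref{thm:dd-metric}) so that the product of per-vertex degrees with the $\sqrt{n}$ coming out of Cauchy--Schwarz matches the target exponent $d/2+1$, giving $w(\mathcal{O}(T,\ell))=\tilde{O}(\epsilon^{-(d/2+1)}\sqrt{n}\cdot|T|\cdot\ell)$ and hence the required weak sparsity. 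Once this is in hand, Theorem~\ref{thm:reduction} closes the argument with no further work.
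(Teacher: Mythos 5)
Your first step---reducing everything to a weakly sparse spanner oracle and invoking Theorem~\ref{thm:reduction}---is exactly the paper's route, and that part is fine. The gap is in the oracle construction, which is precisely the step you yourself flag as ``the principal obstacle,'' and the sketch you give for it does not work. First, you build the graph by processing \emph{every} $u\in V$, so its weight scales with $n$: even one retained edge of length $\Theta(\ell)$ per vertex already gives weight $\Omega(n\ell)$, whereas the weak sparsity in Definition~\ref{def:spanner-oracle} requires $w(\mathcal{O}(T,\ell))\le \wsp_{\mathcal{O}}\cdot|T|\cdot\ell$; when $|T|\ll\sqrt{n}$ (say $|T|=O(1)$) no Cauchy--Schwarz redistribution can convert a $V$-indexed weight bound into one normalized by $|T|$. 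Second, the per-point ``cone count'' of $\epsilon^{-\Theta(d/2)}$ is not available in an abstract metric of correlation dimension $d$: the cone/triangle-inequality argument needs genuine angular structure, and its metric surrogate---covering the annulus around $u$ by pieces of diameter $O(\epsilon\ell)$---is only bounded at the level of a net of the \emph{whole} space by $(4R/r)^{d/2}\sqrt{|N|}$ (Lemma~\ref{lm:packing-cd}); a single point may have a $\sqrt{n}$-sized dense neighborhood (the clique-on-$\sqrt{n}$ example), so either you keep only $\epsilon^{-\Theta(d/2)}$ pieces per vertex and lose the $(1+\epsilon)$ stretch, or you keep all $\epsilon^{-O(d/2)}\sqrt{n}$ of them per vertex and, combined with summing over all of $V$, the weight blows up to $\Omega(n\,\epsilon^{-O(d/2)}\sqrt{n}\,\ell)$.

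The misconception driving this detour is the belief that, because correlation dimension is not hereditary, one cannot build the spanner on $T$ itself. The paper does build on $T$: take an $\frac{\epsilon\ell}{96}$-net $N$ of $T$, connect net points at distance in $[\ell/16,2\ell]$, and attach each remaining terminal to its nearest net point; the stretch argument is the same as in the doubling case (Lemma~\ref{lm:oracle-dd}). Non-hereditariness is handled purely in the \emph{counting}: extend $N$ to a net $N'$ of the whole space $X$ with $|N'|\le n$ and apply the global packing bound of Lemma~\ref{lm:packing-cd} to conclude each net point has $O(\epsilon^{-d/2})\sqrt{n}$ neighbors, giving strong sparsity $O(\epsilon^{-d/2}\sqrt{n})$ (in fact slightly better than the $\epsilon^{-(d/2+1)}\sqrt{n}$ you aimed for), after which Equation~\ref{eq:wsp-ssp} and Theorem~\ref{thm:reduction} finish the proof. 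So the missing idea is this net-extension trick; without it, or some substitute that both keeps the construction $T$-local and justifies the degree bound, your weight estimate does not go through.
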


By letting $T$ contain every point of the metric, we obtain a spanner of $\tilde{O}(\epsilon^{-(d/2 + 3)}\sqrt{n})$ lightness. 

We note that metrics of constant correlation dimension are not closed under taking sub-metrics: a sub-metric of a metric of constant dimension could have correlation dimension $\Omega(\log n)$ (see Figure 1.1 in~\cite{CG12} and discussions below it).  Thus, even if a light spanner construction is known, it still does not imply a light subset spanner since the standard technique that takes a sub-metric on the terminal set and applies the light spanner construction to the sub-metric does not work. However, Theorem~\ref{thm:light-spanner-cd} covers the subset spanner problem as well. 

One may ask whether it is possible to replace $\sqrt{n}$ by $\sqrt{|T|}$.  The answer seems negative. Chan and Gupta~\cite{CG12} gave an example graph with $n$ vertices, whose metric completion has constant correlation dimension, that contains a (unit-weighted) clique on $\sqrt{n}$ vertices. Thus, any subset spanner (of stretch $(1+\epsilon)$) on this clique must have lightness $\Omega(\sqrt{n}) = \Omega(|T|)$. 

We note that a subset spanner with lightness bound $O(\epsilon^{-1}|T|)$ is possible for general metrics since we can take metric completion on $T$ and then construct shallow-light trees~\cite{ABP90,ABP91,KRY93}  rooted at each point of the new metric. 

\subsection{Organization of the paper}

Section~\ref{sec:prim} reviews standard notation used in our paper.  We present a proof of Theorem~\ref{thm:stsp-ptas} in Section~\ref{sec:proof-ptas}. We construct sparse spanner oracles for minor-closed families in Section~\ref{sec:minor-close-oracles}. In Section~\ref{sec:reduction}, we present a proof of Theorem~\ref{thm:reduction}.  Finally, in Section~\ref{sec:spare-oracle-Euc}, we construct sparse spanner oracles for metric spaces.

\section{Preliminaries}\label{sec:prim}

We use $V(G)$ and $E(G)$ to denote the vertex set and the edge set of a graph $G$, respectively. When we need to explicitly specify a vertex set  $V$ and an edge set $E$ along with $G$, we write $G(V,E)$. Let $w_G : E(G)\mapsto \mathbb{R}^+$ be the weight function on edges of $G$. When the graph is clear from the context, we will drop the subscript in the weight function. We denote by $d_G(u,v)$ the shortest distance between two vertices $u$ and $v$.  For a vertex $v$ and a vertex set $V' \subseteq V$, we define the shortest distance between $v$ and $V'$, denoted by $d_G(v,V')$, to be $\min_{u \in V'}d_G(u,v)$. If $v \in V'$, then $d_G(v,V') = 0$. Let $\MST$ be a minimum spanning tree of $G$.

A \emph{walk} $W$ of length $d$ in $G$ is a sequence of vertices and edges $\{v_1,e_1,\ldots,e_d, v_{d+1}\}$ such that $v_i,v_{i+1}$ are the two endpoints of $e_i$, $1\leq i \leq d$. We call $W$ a \emph{closed walk} if $v_1 = v_{d+1}$. $W$ is a \emph{path} if no vertex is repeated; in this case, we denote the subpath of $W$ between $u$ and $v$ by $W[u,v]$. Let $W_1,W_2$ be two walks of $G$ such that the last vertex of $W_1$ is the first vertex of $W_2$. We define the composition of $W_1$ and $W_2$, denoted by $W_1\circ W_2$, to be the walk obtained by identifying the last vertex of $W_1$ and the first vertex of $W_2$. 

Let $S$ be a connected subgraph of $G$. By $w_G(S)$, we denote the total edge weight of $S$. We define the diameter of $S$, denoted by $\dm(S)$, to be $ \max_{u,v \in S}d_S(u,v)$. A shortest path $D$ in $S$ where $w_G(D) = \dm(S)$ is called a \emph{diameter path} of $S$. 

A \emph{$t$-spanner} is a subgraph of $G$ that preserves distances between all pairs of vertices up to a factor of $t$. Factor $t$ is called the \emph{stretch} of the spanner. When $t = 1+\epsilon$, we will drop the prefix $t$ in $t$-spanners. \emph{Lightness} of a $t$-spanner is the ratio of its weight to the weight of $\MST$. \emph{Sparsity} of a $t$-spanner is the ratio of its edges to vertices.  A \emph{subset $t$-spanner} is defined in a similar way, but it is only required to preserve the distances between pairs of vertices in a prescribed set $T$, called a set of \emph{terminals}. Lightness of a subset $t$-spanner is the ratio of its weight to the weight of a minimum Steiner tree spanning $T$. When $t=1+\epsilon$, we simply refer to a subset $t$-spanner as a subset spanner.

A graph $H$ is a \emph{minor} of $G$ if it can be obtained from $G$ by edge contractions, edge deletions and vertex deletions. $G$ is  $H$-minor-free if it excludes a fixed graph $H$ as a minor.  We say an edge-weighted graph $H$ is a \emph{strict minor} of $G$ if (i) $H$ is a minor of $G$, (ii) $V(H) \subseteq V(G)$ and (iii) for every edge $e \in H$ with two endpoints $x,y$, $w_H(e) = d_G(x,y)$.  If we replace every edge of $H$ by a shortest path between its endpoints in $G$, we obtain a graph, denoted by $\mathcal{D}(H)$, that we call a \emph{decompression} of $H$.

Given a terminal set $T$ of a graph $G$, Krauthgamer, Nguy{\ee}n, and Zondiner~\cite{KNZ14} showed that $G$ can be compressed by applying a minor transformation such that the distance between every pair of terminals is preserved.

\begin{lemma}[Theorem 2.1~\cite{KNZ14}]\label{lm:dp-minor} Let $T$ be a set of $k$ terminals in a graph $G$. There is a strict minor $G'$ of $G$ such that (i) $T\subseteq V(G')$, (ii) $V(G') = O(k^4)$ and $E(G') = O(k^4)$ and (iii) $d_{G'}(x,y) = d_G(x,y)$ for every two distinct terminals $x,y \in T$. Furthermore, $G'$ can be found in polynomial time.
\end{lemma}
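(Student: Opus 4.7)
The plan is to build $G'$ by taking the union of canonical shortest paths between all terminal pairs, isolating a small set of ``branch'' Steiner vertices alongside the terminals, and contracting every remaining maximal degree-two sub-path into a single weighted edge.

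Fix a tie-breaking rule for shortest paths that is consistent across source/target pairs (for instance, lexicographic on edge identifiers, or equivalently a generic perturbation of edge weights). For each ordered pair of distinct terminals $(x,y)\in T\times T$, let $P_{x,y}$ be the resulting canonical shortest $xy$-path in $G$, and set $\mathcal{P}=\{P_{x,y}\}$, a family of $k(k-1)$ paths. Call a Steiner vertex $v\notin T$ a \emph{branch vertex} if two paths in $\mathcal{P}$ both pass through $v$ but leave $v$ via distinct edges. Let $B$ be the set of branch vertices and put $V':=T\cup B$.

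The heart of the argument is to prove $|B|=O(k^4)$. Charge each branch vertex to a witnessing pair $(P_{x,y},P_{x',y'})\in\mathcal{P}\times\mathcal{P}$; under consistent tie-breaking, a structural analysis shows that any two canonical shortest paths produce only $O(1)$ places where they branch, giving $|B|=O(k^4)$ via summation over the $O(k^4)$ ordered pairs of paths. Let $H=\bigcup_{P\in\mathcal{P}}P$. By definition of $B$, every maximal sub-path of $H$ with internal vertices avoiding $V'$ has both endpoints in $V'$. Contract each such sub-path into a single edge whose weight is the sum of its constituent edge weights, which equals $d_G$ between the endpoints since the sub-path lies inside a shortest path. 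Discarding the rest of $G\setminus H$ produces $G'$.

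By construction, $G'$ is a minor of $G$ with $V(G')=V'$ and $|V(G')|=O(k^4)$. Every edge of $G'$ has weight equal to the $G$-distance between its endpoints, so $G'$ is a strict minor in the sense of the paper's definition. The edge bound $|E(G')|=O(k^4)$ follows by a parallel charging argument: each edge of $G'$ is a sub-path of at least one $P\in\mathcal{P}$ between consecutive vertices of $V'$, and summing over paths and endpoint-pairs gives the bound. For any distinct $x,y\in T$, the image of $P_{x,y}$ under the contraction is an $xy$-walk of total weight $d_G(x,y)$ in $G'$, so $d_{G'}(x,y)\leq d_G(x,y)$; the reverse inequality holds because each edge of $G'$ has weight equal to a true $G$-distance, so any $xy$-walk in $G'$ decompresses to an $xy$-walk in $G$ of the same length.

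The main obstacle is the $O(1)$-branches-per-pair claim driving the $O(k^4)$ bound. Two shortest paths can in principle interleave many times, and controlling this requires the canonical choice of $\mathcal{P}$ to be coherent enough that every pair of paths has a bounded interaction pattern. Making this precise via the chosen tie-breaking (or an equivalent weight perturbation) is the delicate step; the remaining ingredients — computing canonical shortest paths by Dijkstra, scanning $H$ for branch vertices, and chain-contracting the degree-two remainders — are standard polynomial-time procedures.
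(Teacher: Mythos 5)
The paper does not prove this lemma at all --- it is quoted verbatim as Theorem~2.1 of Krauthgamer, Nguy\~{e}n and Zondiner~\cite{KNZ14} --- so the only meaningful comparison is with that cited proof, and your outline is in fact the same construction: union of consistently chosen terminal-to-terminal shortest paths, keep terminals plus branching Steiner vertices, contract the remaining degree-two chains, and charge branch vertices to pairs of paths.

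The one place where your write-up falls short of a proof is exactly the step you flag yourself: the claim that two canonical shortest paths branch in only $O(1)$ places. As written ("a structural analysis shows\dots") this is an assertion, not an argument, and it is the entire content of the $O(k^4)$ bound. The gap is real but easy to close if you use a generic perturbation of the edge weights (rather than an unspecified lexicographic rule): after perturbation the shortest path between \emph{every} pair of vertices is unique, and uniqueness is inherited by subpaths. Then for any two paths $P,Q\in\mathcal{P}$ and any two vertices $u,v\in P\cap Q$, the $u$-to-$v$ portions of $P$ and of $Q$ are both shortest $u$-$v$ paths, hence identical; so $P\cap Q$ is a single (possibly empty) contiguous subpath, and it contributes at most two branch vertices. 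Summing over the $O(k^2)\times O(k^2)$ pairs of paths gives $|B|=O(k^4)$, and since each path meets at most two branch vertices per other path it is cut into $O(k^2)$ segments, giving $|E(G')|=O(k^4)$ as well. If instead you insist on tie-breaking by edge identifiers, you must additionally verify that the rule is subpath-consistent (a canonical path's subpaths are canonical), otherwise two canonical paths can genuinely interleave many times and the charging breaks; the perturbation route avoids this entirely. With that lemma in place, the rest of your argument (weights of contracted chains equal $d_G$ of their endpoints, hence a strict minor; distances neither shrink nor grow; everything computable by Dijkstra plus chain contraction in polynomial time) is correct and matches~\cite{KNZ14}.
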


If $G$ has bounded treewidth,  Krauthgamer, Nguy{\ee}n, and Zondiner~\cite{KNZ14} proved a stronger version of Lemma~\ref{lm:dp-minor}.

\begin{lemma}\label{lm:tw-tpm-linear}  Let $T$ be a set of $k$ terminals in a graph $G$ of treewidth at most $\mbtw$. There is a strict minor $G'$ of $G$ such that (i) $T\subseteq V(G')$, (ii) $V(G') = O(\mbtw^3 k)$  and (iii) $d_{G'}(x,y) = d_G(x,y)$ for every two distinct terminals $x,y \in T$. Furthermore, $G'$ can be found in polynomial time.
\end{lemma}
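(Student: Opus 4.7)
My plan is to construct $G'$ by retaining only terminals together with a carefully chosen set of non-terminal ``branching'' vertices, and then replacing every intervening segment of shortest path by a single weighted edge. The nontrivial part is bounding the number of branching vertices using the tree-decomposition structure of $G$.

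First I would fix a rooted tree decomposition $(\mt,\{B_t\}_{t\in V(\mt)})$ of $G$ of width at most $\mbtw$ and compute, for every ordered pair of terminals $(u,v)\in T\times T$, a canonical shortest path $P_{uv}$ in $G$ using a deterministic tie-breaking rule (e.g.\ the lexicographically smallest shortest path), chosen so that any two such paths agree on each common subpath. Let $\mp=\bigcup_{u,v\in T}P_{uv}$ and call a non-terminal vertex $x$ \emph{branching} if at least three paths of $\mp$ leave $x$ along three pairwise-distinct first edges. Set $V^\ast := T\cup\{\text{branching vertices}\}$, and define $G'$ on vertex set $V^\ast$ by inserting, for every maximal subpath of $\mp$ whose internal vertices lie outside $V^\ast$, one edge between its two endpoints whose weight equals the length of that subpath in $G$. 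By construction $G'$ is a strict minor of $G$ satisfying~(i), and each $P_{uv}$ survives as a walk of equal length in $G'$, giving~(iii).

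To bound $|V(G')|$ I would first ``smooth'' $\mt$: contract every edge $tt'$ with $B_t\subseteq B_{t'}$, then iteratively remove leaf bags whose subtree contains no terminal and short-cut internal bags of degree two whose unique newly-introduced vertex is not a terminal. A standard argument shows the trimmed tree has $O(k)$ bags. Charge each branching vertex $x$ to the topmost bag $B_{t(x)}$ containing it. For a fixed bag $B_t$, the key observation is that $B_t$ is a separator of size at most $\mbtw+1$, so each of the three witnessing paths through $x$ must either terminate inside the subtree rooted at $t$ at a terminal or cross $B_t$ to reach its endpoint; hence $x$ is encoded by the triple of ``portals'' $(p_1,p_2,p_3)\in B_t^3$ describing where the three distinguishing subpaths first leave $B_t$. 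There are at most $(\mbtw+1)^3=O(\mbtw^3)$ such triples, and the tie-breaking rule guarantees that two distinct branching vertices charged to $B_t$ produce distinct triples, giving $|V^\ast|=O(\mbtw^3 k)$ as required.

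The main obstacle will be formalizing this injectivity claim: I need to argue that two branching vertices charged to the same bag whose three portal triples coincide would force the corresponding canonical paths to share a common subpath on both sides, contradicting the distinctness of the vertices via the consistency of the tie-breaking rule. All remaining steps (computing canonical shortest paths, identifying branching vertices, smoothing the tree decomposition, performing the contractions) run in polynomial time by standard shortest-path and tree-decomposition algorithms.
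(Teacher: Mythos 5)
The paper never proves Lemma~\ref{lm:tw-tpm-linear} itself: it is imported wholesale from \cite{KNZ14}, so your attempt has to stand on its own. The skeleton of your construction is reasonable and standard --- fix consistent (unique) canonical shortest paths between all terminal pairs, keep the terminals together with the degree-$\ge 3$ vertices of the union, and contract the degree-two chains into single weighted edges. This does produce, in polynomial time, a strict minor satisfying (i) and (iii), and consistency does imply that any two canonical paths meet in a contiguous subpath, so each pair diverges $O(1)$ times (which is exactly the $O(k^4)$ bound of Lemma~\ref{lm:dp-minor}). The entire content of the present lemma, however, is the bound (ii) of $O(\mbtw^3 k)$ on the number of branching vertices, and your counting argument for it has a genuine gap.

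The problem is the combination of trimming and charging. After you contract subset-bags, delete terminal-free leaf subtrees, and short-cut degree-two bags, the surviving $O(k)$ bags no longer form a tree decomposition of $G$: the discarded bags are precisely where branching vertices of the path union can live, since shortest paths between terminals routinely detour through, and branch inside, terminal-free parts of the graph (both inside terminal-free hanging subtrees and along the long degree-two chains between terminal-relevant parts). For such a branching vertex $x$ the ``topmost surviving bag containing $x$'' simply does not exist, so these vertices are never counted; and if instead you charge within the untrimmed decomposition, the number of distinct topmost bags receiving a charge can be $\Theta(n)$, so a per-bag bound proves nothing. There is also an internal sign that the scheme is aimed at the wrong target: the topmost bag containing $x$ contains $x$, so each bag is trivially charged by at most $\mbtw+1$ branching vertices, and if all branch vertices really lay in $O(k)$ surviving bags you would conclude $O(\mbtw\, k)$, not $O(\mbtw^3 k)$ --- the $(\mbtw+1)^3$ portal-triple injectivity is solving a non-issue. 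What is actually missing is a count of the branch vertices that accumulate in the terminal-free regions: for a region of the decomposition cut off by one or two bags of size at most $\mbtw+1$ and containing no terminal, consistency forces every maximal path segment inside the region to coincide with the canonical shortest path between its two boundary vertices, so the union restricted to the region consists of $O(\mbtw^2)$ segments, any two of which diverge $O(1)$ times; one must then combine such per-region bounds with the $O(k)$ branching structure of the decomposition (and control the boundary effects) to get a $\poly(\mbtw)\cdot k$ bound. Nothing in your proposal performs this region-based count, and without it the claimed bound $O(\mbtw^3 k)$ is unsupported.
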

\section{Proof of Theorem~\ref{thm:stsp-ptas}} \label{sec:proof-ptas}

In this section, we give a proof of Theorem~\ref{thm:stsp-ptas}, given Theorem~\ref{thm:main} and a singly exponential time algorithm that can solve Subset TSP in graph of treewidth-$\mbtw$ in time $2^{O(\mbtw)}n^{O(1)}$ time (Appendix~\ref{app:dp}).

Given an $H$-minor-free graph $G$, we apply Theorem~\ref{thm:main} to obtain a subset spanner $S$ for terminal set $T$ of weight $w(S) \leq O(\log k\polye)w(\st) = O(\log k\polye)w(\opt)$. By the contraction decomposition theorem of  Demaine, Hajiaghayi and Kawarabayashi, given any integer $g \geq 1$, one can partition the edge set $E(S)$ of $S$ into $g$ parts $\mathcal{X}=\{ X_1,X_2,\ldots, X_g\}$ such that for any $i \in [1,g]$, contracting any set of edge $X_i$ in $S$ gives a graph of  treewidth at most $O_H(g)$. We denote by $S/X_i$ the graph obtained from $S$ by contracting $X_i$. 

Let $g = \lceil \frac{w(S)}{\epsilon \opt} \rceil$, and $X = \arg\min_{X_i \in \mathcal{X}} w(X_i)$. Then, $w(X) \leq \frac{w(S)}{g} = \epsilon \opt$, and that $S/X$ has treewidth at most $O_H(g) = O_H(\log k\polye)$. When we contract $X$ in $S$, we might contract a terminal to a non-terminal or a terminal to another terminal. In the former case, we designate the non-terminal to be a new terminal of the contracted graph, and in the later case, we delete one terminal from $T$.   Let $T_X$ be the resulting set of terminals in $S/X$. We find an optimal tour $\opt_X$ spanning $T_X$ in $S/X$ in time $2^{O_H(g)}n^{O(1)} = n^{O_H(\polye)}$. Note that $w(\opt_X) \leq w(\opt)$. We then can convert $\opt_X$ to a tour spanning $T$ by uncontracting $X$ (and adding a matching between odd vertices of $T$ if necessary) at a cost of $O(w(X)) = O(\epsilon w(\opt))$. Thus, the obtained tour has weight at most $(1 + O(\epsilon))w(\opt)$. By scaling $\epsilon$ appropriately, we obtain a tour of weight at most $(1 + \epsilon)w(\opt)$.

In the following section, we focus on proving Theorem~\ref{thm:main}.

 \section{Weakly sparse spanner oracles for minor-closed families}\label{sec:minor-close-oracles}
 
 In this section, we show how to construct weakly sparse spanner oracles as  stated in Theorem~\ref{thm:sparse}. This implies Theorem~\ref{thm:main} by Theorem~\ref{thm:reduction}. We first relate weakly sparse spanner oracles to approximate distance preserving minors. 
 
 \subsection{Weakly sparse spanner oracles from approximate terminal distance preserving minors}
 
 We say a strict minor $G'$ of $G$ is an \emph{$(1+\epsilon)$-approximate terminal distance preserving minor} for a terminal set $T$ if $T\subseteq V(G')$ and $d_{G'}(x,y) \leq (1+\epsilon)d_{G}(x,y)$ for every two distinct terminals $x,y \in T$. We call vertices in $V(H)\setminus T$ \emph{Steiner vertices}.

 \begin{lemma} \label{lm:minor-to-oracle}  If $G$ is $H$-minor-free and has an $(1+\epsilon)$-approximate terminal distance preserving minor with at most $s(\epsilon) |T|$ Steiner vertices for any terminal set $T$, then it has a spanner oracle with weak sparsity $O(s(\epsilon))$. 
 \end{lemma}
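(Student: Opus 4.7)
The plan is to convert the given approximate terminal distance preserving minor $G'$ into a spanner-oracle output by (a) throwing away edges of $G'$ that are too long to matter at scale $\ell$, and (b) decompressing the surviving edges into actual shortest paths of $G$. The approximate TDPM already carries a short witness-path for every terminal pair, and $G'$ is itself $H$-minor-free (since $H$-minor-freeness is closed under taking minors), so after trimming there are few edges and each has bounded weight, which together will yield the desired weak sparsity.

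Concretely, given a query $(T,\ell)$, I would take the guaranteed approximate TDPM $G'$ so that $|V(G')|\le(1+s(\epsilon))|T|$ and $d_{G'}(x,y)\le(1+\epsilon)d_G(x,y)$ for all $x,y\in T$. Let $G'_\ell$ be $G'$ with every edge of weight exceeding $(1+\epsilon)\ell$ deleted. For any $x,y\in T$ with $d_G(x,y)\le\ell$, the $G'$-path witnessing $d_{G'}(x,y)\le(1+\epsilon)\ell$ uses only edges of weight at most $(1+\epsilon)\ell$, and hence survives in $G'_\ell$. Since $G'$ is a strict minor, each edge of $G'_\ell$ of weight $w$ corresponds to a shortest path in $G$ of length exactly $w$; letting $\md(G'_\ell)$ be the decompression, I obtain a subgraph $\widehat{S}\subseteq G$ with $d_{\widehat{S}}(x,y)\le(1+\epsilon)d_G(x,y)$ for every $x,y\in T$ with $d_G(x,y)\le\ell$, which in particular covers the oracle's range $[\ell/8,\ell]$. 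I then define $\mathcal{O}(T,\ell)$ to be any inclusion-minimal subgraph of $\widehat{S}$ that still meets the oracle's distance requirement.

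For the weight bound, the graph $G'$, being an $H$-minor-free graph on $(1+s(\epsilon))|T|$ vertices, has $|E(G')|=O_H(s(\epsilon)|T|)$ edges by the standard Mader--Thomason density bound. Each edge surviving in $G'_\ell$ contributes at most $(1+\epsilon)\ell$ to the weight of its decompression, and overlapping shortest paths can only reduce the total, so
\[
w(\mathcal{O}(T,\ell)) \;\le\; w(\widehat{S}) \;\le\; |E(G'_\ell)|\cdot(1+\epsilon)\ell \;=\; O_H\bigl(s(\epsilon)\,|T|\,\ell\bigr),
\]
which gives $\wsp_{\mathcal{O}}=O_H(s(\epsilon))$, matching the claim.

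The proof is essentially bookkeeping; the only real subtleties are (i) verifying that it is safe to discard edges of weight above $(1+\epsilon)\ell$, which uses the fact that the oracle only needs to certify pairs with $d_G(x,y)\le\ell$ (so the lower bound $\ell/8$ in the oracle definition plays no role here), and (ii) observing that $G'$ inherits $H$-minor-freeness from $G$ so that the linear edge density bound applies to $G'$ rather than to the potentially much larger decompression. No further property of $G'$ is needed beyond being a strict $(1+\epsilon)$-approximate TDPM and being a minor of $G$.
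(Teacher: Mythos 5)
Your proof is correct and follows essentially the same route as the paper's: trim edges of $G'$ above a $\Theta(\ell)$ threshold, decompress, bound the weight via the $H$-minor-free linear edge density of $G'$ on its $O(s(\epsilon)|T|)$ vertices, and observe that short witness paths in $G'$ survive trimming. The only cosmetic differences are your threshold $(1+\epsilon)\ell$ versus the paper's $2\ell$ and your explicit final pass to an inclusion-minimal subgraph, neither of which changes the substance.
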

\begin{proof}
We construct an oracle $\mathcal{O}$ as follows.  Let $T$ be any set of terminal and $\ell > 0$ be a real positive number given as inputs to  $\mathcal{O}$. We first find an $(1+\epsilon)$ approximate distance preserving minor $G'$ of $G$ for $T$. We then remove every edge of length at least $2\ell$ from $G'$. Let $G''$ be the resulting graph. We return decompressed graph $\mathcal{D}(G'')$ as the output of the oracle. 

We first bound the weight of $\mathcal{D}(G'')$. Observe that $G'$ is $H$-minor-free since it is a minor of $G$. Thus, $|E(G')| \leq O_H(|V(G')|)$ (see~\cite{Kostochka82}). Since $G'$ has at most $s(\epsilon)|T|$ Steiner vertices by the assumption of the lemma, $|V(G')| \leq (s(\epsilon) + 1)|T|$. This implies:
\begin{equation*}
w\left(E(\mathcal{D}(G''))\right)  \leq w(E(G'')) \leq 2\ell |E(G')| \leq 2 \ell O_H(s(\epsilon)) |T| = O_H(s(\epsilon)) |T|\ell.
\end{equation*}
Therefore, the weak sparsity of $\mathcal{O}$ is $O_H(s(\epsilon))$. 

To complete the proof, it remains to show that for every two distinct terminals $x,y \in T$ such that $d_G(x,y) \in [\ell/8,\ell]$, their distance is preserved up to $(1+\epsilon)$ factor in $\mathcal{D}(G'')$. Since $G'$ is an $(1+\epsilon)$-approximate distance preserving minor of $G$, $d_{G'}(x,y) \leq (1+\epsilon)d_G(x,y) < 2\ell$ when $\epsilon < 1$. Thus, every edge in the shortest path between $x$ and $y$ is kept in $G''$. Hence $d_{G''}(x,y) \leq (1+\epsilon)d_G(x,y)$.  Since $d_{\mathcal{D}(G'')}(x,y) \leq d_{G''}(x,y)$, $d_{\mathcal{D}(G'')}(x,y) \leq (1+\epsilon)d_G(x,y)$ as desired.
\end{proof} 
 
\begin{proof}[Proof of Theorem~\ref{thm:sparse} for bounded treewidth graphs] Since $G$ has constant treewidth, it is $K_r$-minor-free for $r = \mbtw(G)+2$. By Lemma~\ref{lm:tw-tpm-linear}, $G$ has an exact (and hence $(1+\epsilon)$-approximate) distance preserving minor with at most $O(|T|)$ Steiner vertices for any set of terminal $T$. With Lemma~\ref{lm:minor-to-oracle}, this implies that $G$ has a spanner oracle with weak sparsity $O(1)$.
\end{proof}

One may ask whether we can apply Lemma~\ref{lm:minor-to-oracle} to obtain an oracle with constant weak sparsity for $H$-minor-free graphs. However, since the best-known approximate distance preserving minors in planar graphs have a quadratic number of Steiner vertices~\cite{CGH16}, Lemma~\ref{lm:minor-to-oracle} only gives us a spanner with lightness linear in $k$. Instead, in the following section, we pursue a different technique to construct a nearly light subset spanner for $H$-minor-free graphs.

\subsection{Weakly sparse spanner oracles from shortest path separators}

 Our starting point is the construction of single-source spanners for planar graphs by Klein (Theorem 4.1~\cite{Klein06}). We show that Klein's planar single-source spanners~\cite{Klein06} are light even without planarity. 

\begin{lemma} \label{lm:ss-spanner}Let $p$ be a vertex and $P$ be a shortest path in a graph $G$. Let $y_0 \in P$ be such that $d_G(p,y_0) = d_G(p, P)$. Let $R = d_G(p, P)$. Fix an endpoint of $P$ to be its left-most vertex. Let $\{y_1,\ldots, y_I\} \subseteq V(P)$ be a maximal set of vertices such that $y_{i}$ is the closest point to the right of $y_{i-1}$ such that:
\begin{equation} \label{eq:ss-spanner}
(1+\epsilon)d_G(p,y_i) < d_G(p,y_{i-1}) + d_P(y_{i-1}, y_i) \qquad 1 \leq i \leq I
\end{equation}
We symmetrically define a maximal set of points $(y_{-1}, y_{-2}, \ldots, y_{-J})$ to the left of $y_0$ on $P$.  Let $\mathcal{Q} = \{Q_{-J}, Q_{-J+1}, \ldots, Q_{-1}, Q_0, Q_1, \ldots,Q_{I}\}$ be a set of shortest paths where $Q_i$ is a shortest $p$-to-$y_i$ path in $G$, $-J \leq i \leq I$.  Then, it holds that:
\begin{enumerate}
\item[(1)] $d_{\mathcal{Q} \cup P}(p,q) \leq (1+\epsilon) d_G(p,q)$ for every $q \in P$.
\item[(2)] $w(\mathcal{Q}) \leq 8\epsilon^{-2}R$.
\item[(3)] $I\leq 8\epsilon^{-2}$ and $J \leq 8\epsilon^{-2}$.
\item[(4)] $d_P(y_0, y_{I}) \leq 4\epsilon^{-1}R$  and $d_P(y_{-J}, y_{0}) \leq 4\epsilon^{-1}R$. 
\end{enumerate}
\end{lemma}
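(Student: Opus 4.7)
The plan is to introduce the shorthand $a_i := d_G(p, y_i)$, $b_i := d_P(y_{i-1}, y_i)$, and $l_i := d_P(y_0, y_i) = \sum_{j=1}^{i} b_j$, focusing on the right side $0 \leq i \leq I$ (the left side is symmetric). Because $P$ is a shortest path in $G$, we have $d_G(y_0, y_i) = l_i$ and $d_G(y_{i-1}, y_i) = b_i$, so two triangle-inequality facts follow immediately: $a_{i-1} \leq a_i + b_i$, and $|a_i - l_i| \leq R$ (using $a_0 = R$ and the walk $y_0 \leftrightarrow p \leftrightarrow y_i$). Combining the selection rule $(1+\epsilon)a_i < a_{i-1} + b_i$ with $a_{i-1} \leq a_i + b_i$ yields the key relation $b_i > \epsilon a_i / 2 \geq \epsilon R / 2$.

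I would then prove items (4), (3), (2) in that order, since each leans on the previous. For (4), substitute $a_i \geq l_i - R$ and $a_{i-1} \leq R + l_{i-1}$ into the selection condition and use $b_i = l_i - l_{i-1}$; after cancellation this collapses to $\epsilon l_i < (2+\epsilon) R$, giving $l_i \leq 4R/\epsilon$ for every $i \in [1, I]$. For (3), summing $b_i > \epsilon R/2$ over $i$ gives $l_I > I \epsilon R / 2$, which combined with (4) forces $I \leq 8/\epsilon^2$. For (2), rearranging the key relation to $a_i < 2 b_i / \epsilon$ and summing gives $\sum_{i \geq 1} a_i < 2 l_I / \epsilon$; adding the symmetric left contribution together with $a_0 = R$, and using $w(\mathcal{Q}) \leq \sum_i w(Q_i) = \sum_i a_i$, bounds $w(\mathcal{Q})$ by the desired $O(R/\epsilon^2)$.

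For (1), given $q \in P$, assume without loss of generality that $q$ lies weakly to the right of $y_0$. If $q = y_i$ for some $i$, then $Q_i \in \mathcal{Q}$ is already a shortest $p$-to-$q$ path in $G$. Otherwise $q$ lies either strictly between $y_{i-1}$ and $y_i$ for some $i \in [1, I]$, or strictly to the right of $y_I$. In the first case, the ``closest point to the right'' property of $y_i$ forces $q$ itself to fail the strict inequality relative to $y_{i-1}$, giving $a_{i-1} + d_P(y_{i-1}, q) \leq (1+\epsilon) d_G(p, q)$; the concatenation $Q_{i-1} \circ P[y_{i-1}, q] \subseteq \mathcal{Q} \cup P$ has exactly that length. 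In the second case, maximality of $\{y_1,\ldots,y_I\}$ gives the analogous inequality relative to $y_I$, so $Q_I \circ P[y_I, q]$ works.

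The main obstacle is item (4): this is the only step where the geometric hypotheses (that $P$ is a shortest path and $y_0$ is the closest point on $P$ to $p$) enter non-trivially, through the lower bound $a_i \geq l_i - R$. Without controlling $l_I$ one cannot bound either $I$ or $\sum a_i$, so (3) and (2) both reduce to the single algebraic identity $\epsilon l_i < (2+\epsilon)R$; everything else is bookkeeping built on this together with the sparsity-type inequality $b_i > \epsilon a_i / 2$.
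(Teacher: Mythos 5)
Your proposal is correct in substance and reaches all four items, but it organizes the argument differently from the paper. For (4) you substitute the triangle-inequality bounds $a_i \geq l_i - R$ and $a_{i-1} \leq l_{i-1} + R$ directly into the selection rule and read off $\epsilon l_i < (2+\epsilon)R$; the paper proves the same bound by contradiction (assuming a first index where $d_P(y_0,y_{\ell+1}) > 4\epsilon^{-1}R$ and running essentially the same chain of inequalities), so your version is the same algebra done directly and is arguably cleaner. For (2) and (3) you deviate more: you extract the single local inequality $\epsilon a_i < 2b_i$ (hence $b_i > \epsilon R/2$, using $a_i \geq R$) and sum it, whereas the paper telescopes the selection inequality itself — for (3) unfolding the recurrence $w(Q_I) < (1+\epsilon)^{-1}\bigl(w(Q_{I-1}) + d_P(y_{I-1},y_I)\bigr)$ together with $w(Q_I)\geq R$, and for (2) summing $(1+\epsilon)w(Q_i) < w(Q_{i-1}) + d_P(y_{i-1},y_i)$ over $i$ so that the $w(Q_i)$ terms cancel. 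Your route buys a more modular proof (two clean inequalities drive everything); the paper's telescoping buys the exact constants. Your treatment of (1) is the fully spelled-out version of what the paper dismisses with ``follows from maximality,'' and it is right: a point $q$ strictly between $y_{i-1}$ and $y_i$ (or beyond $y_I$) must fail the strict inequality relative to its left neighbor, so $Q_{i-1}\circ P[y_{i-1},q]$ (resp.\ $Q_I\circ P[y_I,q]$) already has length at most $(1+\epsilon)d_G(p,q)$.

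One caveat: as written, your item (2) does not give the stated constant. Summing $a_i < 2b_i/\epsilon$ gives $\sum_{i\geq 1} a_i < 2l_I/\epsilon \leq 8\epsilon^{-2}R$ on the right side alone, and with the symmetric left side plus $a_0 = R$ you land around $16\epsilon^{-2}R + R$ rather than $8\epsilon^{-2}R$. This is harmless for every use of the lemma in the paper, which only needs $O(\epsilon^{-2})R$, but if you want the literal bound you should telescope as the paper does: summing the selection inequality over $i=1,\dots,I$ yields $\epsilon\sum_{i=0}^{I} a_i \leq \epsilon a_0 + l_I - a_I \leq 4\epsilon^{-1}R$, i.e.\ $4\epsilon^{-2}R$ per side. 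Also note the degenerate case $R=0$ (where your bound $b_i > \epsilon R/2$ is vacuous) is trivial, since then $a_i = l_i$ and no $y_i$ with $i\geq 1$ can be selected at all.
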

\begin{proof}

\begin{figure}
\centering
\vspace{-20pt}
\includegraphics[scale = 1.0]{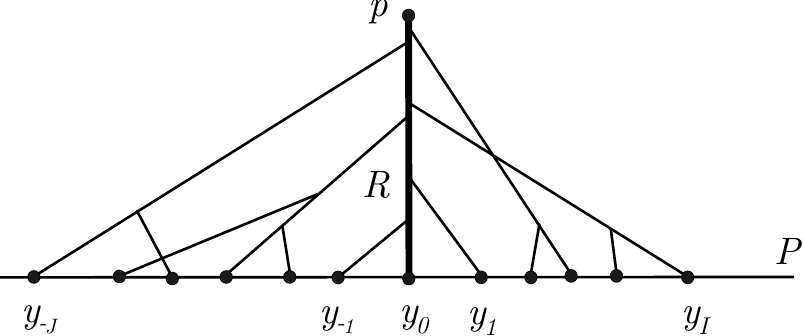}
\caption{A single-source spanner constructed by Klein's algorithm. The length of the thick path is $R$. }
\label{fig:klein-ssp}
\end{figure}
See Figure~\ref{fig:klein-ssp} for an illustration. Property (1) follows directly from the maximality of the set of points $y_{-J}, \ldots, y_0, \ldots, y_I$. We now show property (4). By symmetry, it is sufficient to show that:
\begin{equation}\label{eq:ss-spanner-proof-of-4}
d_P(y_0,y_I) \leq 4\epsilon^{-1}R 
\end{equation}
Suppose otherwise. Then, there exists $\ell \in \{0, \ldots, I-1\}$ such that $d_P(y_0,y_\ell) \leq  4\epsilon^{-1}R$ and $d_P(y_0,y_{\ell+1}) > 4\epsilon^{-1}R$. We have:
\begin{equation*}
\begin{split}
(1+\epsilon)d_G(p,y_{\ell+1})  &\geq (1+\epsilon)(d_G(y_0,y_{\ell+1}) - d_G(p,y_0)) \qquad (\mbox{by triangle inequality}) \\
&= (1+\epsilon)(d_G(y_0,y_{\ell+1}) +  d_G(p,y_0)) - 2(1+\epsilon)d_G(p,y_0)\\
&\geq (d_G(y_0,y_{\ell+1}) +  d_G(p,y_0)) + \epsilon d_G(y_0,y_{\ell+1}) - 2(1+\epsilon)d_G(p,y_0) \\
&= (d_G(y_0,y_\ell) +  d_G(p,y_0)) + d_P(y_\ell, y_{\ell+1}) + \epsilon d_G(y_0,y_{\ell+1}) - 2(1+\epsilon)d_G(p,y_0)\\
&\geq d_G(p,y_\ell) + d_P(y_\ell,y_{\ell+1}) + \epsilon d_G(y_0,y_{\ell+1}) - 2(1+\epsilon)d_G(p,y_0)\\
&> d_G(p,y_\ell) + d_P(y_\ell,y_{\ell+1}) + 4R - 2(1+\epsilon)R \qquad \mbox{(since } \epsilon d_P(y_0,y_{\ell+1}) > 4R) \\
&\geq d_G(p,y_\ell) + d_P(y_\ell,y_{\ell+1}) \qquad \mbox{(since } \epsilon < 1)
\end{split}
\end{equation*}
contradicting Equation~\eqref{eq:ss-spanner}. Thus, no such $\ell$ exists. 

Proof of property (3) is similar to that of Theorem 4.1 of Klein~\cite{Klein06}, and we defer to Appendix~\ref{app:missing}. To prove (2), we sum both sides of Equation~\eqref{eq:ss-spanner} for every $1 \leq i \leq I$. 
\begin{equation}
\begin{split}
(1+\epsilon)(w(Q_0) + \ldots + w(Q_I)) &\leq (w(Q_0) + \ldots + w(Q_I)) - w(Q_I) + d_P(y_0,y_I) \\
&\leq (w(Q_0) + \ldots + w(Q_I)) + 4\epsilon^{-1}R \qquad \mbox{(by Equation~\eqref{eq:ss-spanner-proof-of-4})}\\
\end{split} 
\end{equation}
That implies $w(Q_0) + \ldots + w(Q_I) \leq 4\epsilon^{-2}R$. By a symmetric argument, we can show that $w(Q_{-J}) + \ldots + w(Q_0) \leq 4\epsilon^{-2}R$.

\end{proof}

Let \textsc{SSSpanner($G, P,p,\epsilon$)}  be the set of paths rooted at the same vertex $p$ obtained by applying the construction in Lemma~\ref{lm:ss-spanner} to a shortest path $P$, a source vertex $p$ and distance parameter $\epsilon$. We can also generalize Klein bipartite spanners (Theorem 5.1~\cite{Klein06}) for non-planar graphs by using Lemma~\ref{lm:ss-spanner}. We believe that this result is of independent interest, though we do not use it in this paper. The proof is deferred to Appendix~\ref{app:path-to-path-proof}.

\begin{corollary} \label{cor:path-to-path-spanner}
Let $W$ be a walk and $P$ be a shortest path in a graph $G$. We denote by $R$ the distance between $W$ and $P$. That is $R = \min_{v \in W}d_G(v, P)$. Then, there is a subgraph $H$ of $G$ such that:
\begin{enumerate}
\item For every $p \in W, q \in P$, $d_{H\cup P}(p,q) \leq (1+\epsilon)d_{G}(p,q)$.
\item $w(H) \leq O(\epsilon^{-3})w(W) + O(\epsilon^{-2})R$.
\end{enumerate}
\end{corollary}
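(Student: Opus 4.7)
The plan is to construct $H$ as the union of the walk $W$ itself with the single-source spanners from Lemma~\ref{lm:ss-spanner} rooted at a carefully sampled set of vertices along $W$. Let $p_0\in W$ realize $R=d_G(p_0,P)$ and view $W$ as parameterized by arc length on either side of $p_0$. Greedily pick sources $p_1,p_2,\ldots$ walking rightward and $p_{-1},p_{-2},\ldots$ walking leftward so that consecutive samples satisfy $d_W(p_i,p_{i+1})=\Theta(\epsilon R_i)$, where $R_i:=d_G(p_i,P)$, and write $\mathcal{Q}_i:=\textsc{SSSpanner}(G,P,p_i,\epsilon)$. The output is
\[
H \;=\; W \;\cup\; \bigcup_i \mathcal{Q}_i.
\]

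The stretch bound follows by routing any pair $p\in W$, $q\in P$ through the sample $p_i$ nearest to $p$ on $W$. Because $R(\cdot):=d_G(\cdot,P)$ is $1$-Lipschitz along $W$ (via the triangle inequality $|R(u)-R(v)|\le d_W(u,v)$), the sampling rule guarantees $R(p)=\Theta(R_i)$ for every $p$ lying between $p_i$ and $p_{i+1}$, and hence $d_W(p,p_i)\le\epsilon R(p)\le\epsilon d_G(p,q)$. Combining with Lemma~\ref{lm:ss-spanner}(1),
\[
d_{H\cup P}(p,q) \;\le\; d_W(p,p_i) + (1+\epsilon)d_G(p_i,q) \;\le\; d_W(p,p_i) + (1+\epsilon)\bigl(d_W(p_i,p) + d_G(p,q)\bigr) \;\le\; (1+O(\epsilon))d_G(p,q),
\]
and rescaling $\epsilon$ by the hidden constant yields the desired $(1+\epsilon)$-stretch.

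For the weight bound, Lemma~\ref{lm:ss-spanner}(2) gives $w(\mathcal{Q}_i)\le O(\epsilon^{-2})R_i$ at each sample, while the sampling rule $d_W(p_i,p_{i+1})=\Theta(\epsilon R_i)$ together with $\sum_i d_W(p_i,p_{i+1})\le w(W)$ telescopes to $\sum_i R_i\le O(\epsilon^{-1})w(W)+O(R)$, where the $O(R)$ accounts both for the initial source $p_0$ and for the two boundary samples at which the walk runs out before the spacing rule triggers. Putting everything together,
\[
w(H) \;\le\; w(W) + \sum_i O(\epsilon^{-2})R_i \;\le\; O(\epsilon^{-3})\,w(W) + O(\epsilon^{-2})\,R,
\]
matching the claim.

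The main obstacle is calibrating the sampling so that the two opposing demands are met simultaneously: the spacing has to be small enough ($d_W(p_i,p_{i+1})\le\epsilon R_i$) that every vertex of $W$ is covered by some $p_i$ with $d_W(p,p_i)\le\epsilon R(p)$, yet large enough ($d_W(p_i,p_{i+1})=\Omega(\epsilon R_i)$) that the number and aggregate weight of single-source spanners are controlled via the telescoping above. The degenerate case $w(W)=O(\epsilon R)$, in which only $p_0$ is sampled, must be verified separately; there $\mathcal{Q}_0$ alone has weight $O(\epsilon^{-2})R$ and the single-source stretch already suffices, because every $p\in W$ then satisfies $d_W(p,p_0)\le w(W)=O(\epsilon R)=O(\epsilon R(p))$.
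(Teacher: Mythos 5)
Your proposal is correct and essentially reproduces the paper's own argument: greedily place \textsc{SSSpanner} sources along $W$ with spacing proportional to $\epsilon$ times the current distance to $P$, bound $\sum_i R_i$ by $O(\epsilon^{-1})w(W)+O(R)$ via the spacing rule, and route each $p\in W$ through its nearest source using Lemma~\ref{lm:ss-spanner}(1). The only differences are cosmetic — the paper anchors the sampling at an endpoint $v_0$ of $W$ and triggers on the candidate vertex's distance $d_G(y_i,P)$ rather than on $R_i$ at the previous sample, and it rescales $\epsilon$ at the end exactly as you do — so no further changes are needed.
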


Next, we construct a spanner that preserves distances between terminal pairs prescribed by a set $\mathcal{Q}$ of shortest paths. The first step toward the construction is the following claim.

\begin{claim}\label{clm:multiple-source-spanner}Let $P$ be a shortest path of an edge-weighted graph $G$. Let $\mathcal{Q} = \{Q_1,Q_2,\ldots, Q_r\}$ be a set of shortest paths in $G$ such that $Q_i \cap P \not= \emptyset$ and $w(Q_i) \leq \ell$, for every $1\leq i \leq r$. We denote the endpoints of each $Q_i$ by $s_i$ and $t_i$. Let $k$ be the number of distinct endpoints of paths in $\mathcal{Q}$. There is a subgraph $H$ of $G$ with weight at most $O(k\epsilon^{-2}\ell)$ such that $d_H(s_i,t_i) \leq (1+\epsilon)d_G(s_i,t_i)$ for every  $1\leq i \leq r$.
\end{claim}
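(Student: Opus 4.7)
The plan is to install a single-source spanner at each distinct endpoint $v$ plus a short piece of $P$ around $v$'s closest point on $P$, and then to route every pair $(s_i,t_i)$ through an arbitrary common vertex $p_i\in Q_i\cap P$. Since $Q_i$ is a shortest path of length at most $\ell$ meeting $P$, both halves $Q_i[s_i,p_i]$ and $Q_i[p_i,t_i]$ have length at most $\ell$, which keeps the entire reconstructed route inside an $O(\ell)$-wide window of $P$ around the two nearest $P$-anchors.

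\textbf{Construction.} Let $V^\star$ be the set of $k$ distinct endpoints of paths in $\mathcal{Q}$. For each $v\in V^\star$, $R_v:=d_G(v,P)\leq \ell$, so Lemma~\ref{lm:ss-spanner} yields $\mathcal{Q}_v:=\textsc{SSSpanner}(G,P,v,\epsilon)$ of weight at most $8\epsilon^{-2}\ell$, together with the $P$-anchors $y_{-J_v}^v,\ldots,y_0^v,\ldots,y_{I_v}^v$, where $y_0^v$ is a point of $P$ closest to $v$. Let $P^v$ be the sub-path of $P$ consisting of vertices within $P$-distance $2\ell$ of $y_0^v$; then $w(P^v)\leq 4\ell$. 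Set $H=\bigcup_{v\in V^\star}(\mathcal{Q}_v\cup P^v)$, so that
\[
w(H)\;\leq\; k\bigl(8\epsilon^{-2}\ell+4\ell\bigr)\;=\;O(k\epsilon^{-2}\ell).
\]

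\textbf{Distance preservation.} Fix a pair $(s_i,t_i)$ and pick any $p_i\in Q_i\cap P$; since $Q_i$ is a shortest path through $p_i$, $d_G(s_i,t_i)=d_G(s_i,p_i)+d_G(p_i,t_i)$ with both summands at most $\ell$. Because $y_0^{s_i}$ minimises $d_G(s_i,\cdot)$ over $P$ and $p_i\in P$,
\[
d_P(y_0^{s_i},p_i)\;=\;d_G(y_0^{s_i},p_i)\;\leq\; d_G(y_0^{s_i},s_i)+d_G(s_i,p_i)\;\leq\; 2\ell,
\]
so $p_i\in P^{s_i}$; symmetrically $p_i\in P^{t_i}$. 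Assume without loss of generality that $p_i$ lies to the right of $y_0^{s_i}$ on $P$, and let $y_j^{s_i}$ be the largest-index anchor in the right sequence of $\mathcal{Q}_{s_i}$ satisfying $d_P(y_0^{s_i},y_j^{s_i})\leq d_P(y_0^{s_i},p_i)$ (so either $p_i\in[y_j^{s_i},y_{j+1}^{s_i}]$ with $j<I_{s_i}$, or $j=I_{s_i}$ and $p_i$ lies past $y_{I_{s_i}}^{s_i}$). The maximality underlying Equation~\eqref{eq:ss-spanner} --- $p_i$ is either rejected as the next anchor past $y_j^{s_i}$, or lies past the last anchor --- forces
\[
d_G(s_i,y_j^{s_i})+d_P(y_j^{s_i},p_i)\;\leq\;(1+\epsilon)\,d_G(s_i,p_i).
\]
Since $y_j^{s_i}$ lies between $y_0^{s_i}$ and $p_i$ on $P$ and both are within $2\ell$ of $y_0^{s_i}$, the segment $P[y_j^{s_i},p_i]$ is contained in $P^{s_i}\subseteq H$; concatenating with the $s_i$-to-$y_j^{s_i}$ shortest path inside $\mathcal{Q}_{s_i}$ gives $d_H(s_i,p_i)\leq(1+\epsilon)d_G(s_i,p_i)$. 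The symmetric bound at $t_i$ and addition yield $d_H(s_i,t_i)\leq(1+\epsilon)d_G(s_i,t_i)$.

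\textbf{Main obstacle.} The delicate point is that Lemma~\ref{lm:ss-spanner}(1) only guarantees $(1+\epsilon)$-stretch in $\mathcal{Q}_v\cup P$, whereas the weight budget $O(k\epsilon^{-2}\ell)$ forbids including all of $P$. One must therefore verify that the actual route stays inside the short window $P^v$. This reduces to two bookkeeping facts: the $\ell$-length bound on $Q_i$ pins $p_i$ inside the $2\ell$-window around $y_0^{s_i}$, and by the monotone placement of the $y$-sequence the chosen cutoff $y_j^{s_i}$ sits in the same window, so the intermediate $P$-segment does too.
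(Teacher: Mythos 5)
Your proof is correct and follows essentially the same strategy as the paper's: at each distinct endpoint $v$ install the single-source spanner $\mathcal{Q}_v$ plus a short window of $P$ around $v$'s nearest point $y_0^v$, then route each pair $(s_i,t_i)$ through the two spanners. The one non-cosmetic place you deviate is a tighter choice of $P$-window: you take radius $2\ell$ and justify it by observing that the anchor actually selected by the SSSpanner route lies \emph{between} $y_0^{s_i}$ and $p_i$, both of which sit within $2\ell$ of $y_0^{s_i}$; the paper instead takes a larger radius $4\epsilon^{-1}\ell$ so that, via Lemma~\ref{lm:ss-spanner}(4), \emph{every} anchor of $\mathcal{Q}_j$ falls inside (Observation~\ref{obs:Pj-vs-ends-of-Qj}), and it also first deletes all edges of length over $\ell$. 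Both choices work and give the same $O(k\epsilon^{-2}\ell)$ budget, since the paper's extra $O(k\epsilon^{-1}\ell)$ for the $P$-windows is dominated by the $O(k\epsilon^{-2}\ell)$ from the single-source spanners; your argument is simply sharper bookkeeping. The other difference is that you route through a single shared vertex $p_i\in Q_i\cap P$ while the paper routes through the first and last crossings $u,v$ of $Q_i$ with $P$ and uses $P[u,v]$ in between; these are equivalent because $Q_i$ is a shortest path and $P$ preserves $G$-distances between its own vertices, so both decompositions are additive.
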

\begin{proof}

We first delete every edge of $G$ of length more than $\ell$ since no path in $\mathcal{Q}$ can contain such an edge. Let $X = \{x_1,x_2,\ldots, x_k\} $ be the set of endpoints of all paths in $\mathcal{Q}$. Let $R_j = d_G(x_j, P)$ and $y_j$ be the closet vertex of $x_j$ in $P$. Since $Q_i \cap P \not= \emptyset$ and $w(Q_i) \leq \ell$ for every $i$, $d_G(x_j, P) \leq \ell$ for every $1\leq j \leq k$.  For each $j$, let $\mathcal{Q}_j \leftarrow$ \textsc{SSSpanner}($G, P,x_j,\epsilon$).  Let $P_j$ be a minimal subpath of $P$ that contains every vertex of distance (in $P$) at most $4\epsilon^{-1}\ell$ from $y_j$. Since $P_j$ has no edge of length more than $\ell$, $w(P_j) \leq (8\epsilon^{-1} + 2)\ell$. Since $|R_j| \leq \ell$, by (4) of Lemma~\ref{lm:ss-spanner}, we have:

\begin{observation}\label{obs:Pj-vs-ends-of-Qj}
$P_j$ contains all endpoints on $P$ of paths in $\mathcal{Q}_j$.
\end{observation}

\noindent  Recall paths in $\mathcal{Q}_j$ share endpoint $x_j$.  Let: 
	\begin{equation}
	H = \bigcup_{j=1}^{k}\left((\cup_{Q\in \mathcal{Q}_j}Q)\cup P_j\right)
	\end{equation}
We first bound the weight of $H$. For any $j$, $1\leq j \leq k$, by (2) of Lemma~\ref{lm:ss-spanner},
\begin{equation*}
w\left(\cup_{Q\in \mathcal{Q}_j}Q\right) \leq O(\epsilon^{-2})R_j  \leq O(\epsilon^{-2}\ell)
\end{equation*}
Thus, $w(H) \leq O(k\epsilon^{-2})\ell$.

\begin{figure}
\vspace{-20pt}
\centering
\includegraphics[scale = 1.0]{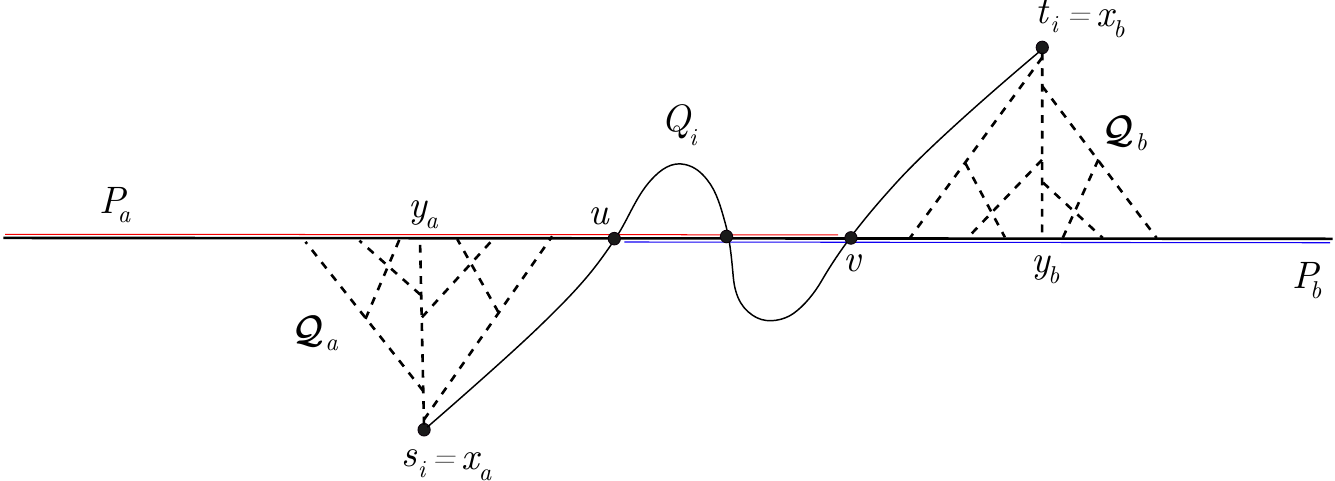}
\caption{Shortest path $P$ is the straight line and shortest path $Q_i$ between two terminals $s_i,t_i$ is the thin black curve. $P_a$ and $P_b$ are highlighted red and blue, respectively.}
\label{fig:p-t-p}
\end{figure}

We now show that $d_H(s_i) \leq (1+\epsilon) d_G(s_i,t_i)$ for any $1\leq i \leq r$. Let $u, v$ be the first vertex and the last vertex (from $s_i$) in $Q_i \cap P$, respectively. Suppose that $x_a = s_i$ and $x_b = t_i$ for some $a,b$, $1\leq a,b\leq k$ (see Figure~\ref{fig:p-t-p}). Since  $d_P(y_a, u) = d_G(y_a, u) \leq d_G({s_i,y_a}) + d_G({y_a,u}) \leq 2\ell$ which is at most $(4\epsilon^{-1}+1) \ell$ when $\epsilon < 1$. Thus, $u \in P_a$. Similarly, we can show that $v \in P_a$. That implies:

\begin{observation}\label{obs:uv-in-P}
Subpath $P[u,v]$ of $P$ is a subgraph of $H$.
\end{observation}

By a similar argument, we can show that $u,v$ both are in $P_b$ (see Figure~\ref{fig:p-t-p}). By (1) of Lemma~\ref{lm:ss-spanner} and Observation~\ref{obs:Pj-vs-ends-of-Qj}, we have:

\begin{equation} \label{eq:supp-si-ti}
d_H(s_i,u) \leq (1+\epsilon)d_G(s_i,u)   \quad \mbox{and} \quad   d_H(v,t_i) \leq d_G(v,t_i)
\end{equation}

\noindent Since $P$ is a shortest path of $G$, $w(P[u,v]) = w(Q_i[u,v)]$ and both have length at most $\ell$.  Thus, we have:
\begin{equation*}
\begin{split}
d_H(s_i,t_i) &\leq d_H(s_i,u) + d_H(u,v) + d_H(v,t_i) \\
&= d_H(s_i,u) +  w(Q_i[u,v]) +  d_H(v,t_i)\qquad \mbox{(by Observation~\ref{obs:uv-in-P})}\\
&\leq (1+\epsilon)d_G(s_i,u)  + w(Q_i[u,v] )+ (1+\epsilon) d_G(v,t_i) \qquad \mbox{(by Equation~\eqref{eq:supp-si-ti})}\\
&= (1+\epsilon)w(Q_i[s_i,u]) + w(Q_i[u,v] )+ (1+\epsilon)w(Q_i[v,t_i])\qquad \mbox{(since }Q_i \mbox{ is a shortest path)}\\
&\leq (1+\epsilon)w(Q_i[s_i,t_i]) = (1+\epsilon)d_G(s_i,t_i)
\end{split}
\end{equation*} 
\end{proof}

For any two paths $P$ and $Q$, we say $P$ \emph{crosses} $Q$ if $P\cap Q \not= \emptyset$. We say $P$ crosses a set of paths $\mathcal{Q}$ if there exists a path $Q \in \mathcal{Q}$ such that $P$ crosses $Q$. We now extend Claim~\ref{clm:multiple-source-spanner} to the case where a constant number of shortest paths cross $\mathcal{Q}$. Since we will be using shortest path separators, this  case would naturally arise in our final construction.

\begin{claim}\label{clm:multi-paths-spanner}
Let $\mathcal{P}$ be a set of shortest paths in an edge-weighted graph $G$. Let $\mathcal{Q} = \{Q_1,Q_2,\ldots, Q_r\}$ be another set of shortest paths in $G$ such that $Q_i$ crosses $\mathcal{P}$ and $w(Q_i) \leq \ell$, for every $1\leq i \leq r$. We denote the endpoints of each $Q_i$ by $s_i$ and $t_i$. Let $k$ be the number of distinct endpoints of paths in $\mathcal{Q}$. There is a subgraph $H$ of $G$ with weight at most $O(k\epsilon^{-2}\ell |\mathcal{P}|)$ such that $d_H(s_i,t_i) \leq (1+\epsilon)d_G(s_i,t_i)$ for every  $1\leq i \leq r$. Furthermore, $H$ can be found in polynomial time.
\end{claim}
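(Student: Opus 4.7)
The plan is to reduce Claim~\ref{clm:multi-paths-spanner} directly to Claim~\ref{clm:multiple-source-spanner} by partitioning (or rather, covering) the paths in $\mathcal{Q}$ according to which member of $\mathcal{P}$ they cross. For each $P\in\mathcal{P}$, define $\mathcal{Q}_P = \{Q_i\in\mathcal{Q} : Q_i\cap P\neq\emptyset\}$, and let $k_P$ denote the number of distinct endpoints of paths in $\mathcal{Q}_P$. Since every $Q_i$ crosses $\mathcal{P}$ by hypothesis, we have $\mathcal{Q}=\bigcup_{P\in\mathcal{P}}\mathcal{Q}_P$.

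Next, for each $P\in\mathcal{P}$, apply Claim~\ref{clm:multiple-source-spanner} with the shortest path $P$ and the family $\mathcal{Q}_P$ (each $Q_i\in\mathcal{Q}_P$ is a shortest path of weight at most $\ell$ with $Q_i\cap P\neq\emptyset$, so the hypotheses are satisfied). This yields a subgraph $H_P$ of $G$ of weight at most $O(k_P\,\epsilon^{-2}\ell)\leq O(k\,\epsilon^{-2}\ell)$ such that $d_{H_P}(s_i,t_i)\leq(1+\epsilon)d_G(s_i,t_i)$ for every $Q_i\in\mathcal{Q}_P$. Define $H=\bigcup_{P\in\mathcal{P}}H_P$. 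Summing the weights over the $|\mathcal{P}|$ paths,
\begin{equation*}
w(H)\;\leq\;\sum_{P\in\mathcal{P}} w(H_P)\;\leq\;|\mathcal{P}|\cdot O(k\,\epsilon^{-2}\ell)\;=\;O(k\,\epsilon^{-2}\ell\,|\mathcal{P}|).
\end{equation*}

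For the stretch guarantee, fix any $Q_i\in\mathcal{Q}$. By the covering property there exists $P\in\mathcal{P}$ with $Q_i\in\mathcal{Q}_P$, and therefore
\begin{equation*}
d_H(s_i,t_i)\;\leq\;d_{H_P}(s_i,t_i)\;\leq\;(1+\epsilon)\,d_G(s_i,t_i).
\end{equation*}
Polynomial running time is inherited from Claim~\ref{clm:multiple-source-spanner}: there are $|\mathcal{P}|$ invocations, each of which (together with the preprocessing step of determining the sets $\mathcal{Q}_P$) runs in polynomial time. Since this argument is essentially a black-box application of the preceding claim, I do not anticipate a genuine obstacle; the only mild subtlety is that $Q_i$ may cross several members of $\mathcal{P}$, but this is harmless because the weight bound only charges $Q_i$'s endpoints once per $P$ through the factor $k_P\leq k$, and we only need any one $P$ to carry the stretch guarantee for $Q_i$.
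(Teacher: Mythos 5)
Your proposal is correct and takes essentially the same route as the paper: the paper also reduces to Claim~\ref{clm:multiple-source-spanner}, invoking it once per path of $\mathcal{P}$ and taking the union, with the only cosmetic difference that it assigns each $Q_i$ to the \emph{first} path of $\mathcal{P}$ it crosses (a partition rather than your cover). This difference is immaterial, since in both cases each invocation costs at most $O(k\epsilon^{-2}\ell)$ and the stretch for $Q_i$ is guaranteed by whichever invocation handles it.
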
 
\begin{proof}
Fix an ordering of paths $P_1, P_2,\ldots, P_h$ in $\mathcal{P}$ where $h = |\mathcal{P}|$. For each path $P_j$, $1 \leq j \leq h$, let $\mathcal{Q}_j$ be the set of paths in $\mathcal{Q}$ such that each path in $\mathcal{Q}$ crosses $P_j$ and does not cross any $P_i$ for all $i < j$. Let $H_j$ be the subgraph of $G$ obtained by applying Claim~\ref{clm:multiple-source-spanner} with parameters $G, P_j, \mathcal{Q}_j,\epsilon$ and $\ell$. Let $H = \cup_{j = 1}^h H_j$. Then, $w(H) \leq \sum_{i=1}^h w(H_i) = O(k\epsilon^{-2}\ell |\mp|)$. The stretch guarantee of $H$ follows directly from Claim~\ref{clm:multiple-source-spanner}. 
\end{proof}

Let \textsc{PTPSpanner($G, \mathcal{P},\mathcal{Q},\ell, \epsilon$)} (\textsc{PTP} means path-to-path.) be the subgraph of $G$ obtained by applying Claim~\ref{clm:multi-paths-spanner} to $\mathcal{P},\mathcal{Q},\ell$ and $\epsilon$. We use this to construct a sparse spanner oracle.

\paragraph{A weakly sparse spanner oracle} Suppose that a terminal set $T$ and a real positive number $\ell$ are given as  inputs to  oracle $\mathcal{O}$. Let $\mathcal{Q}$ be the set of $O(|T|^2)$ shortest paths between all pairs of terminals. The oracle will call and return  {\sc EllCloseSpanner}$(G, T,\mathcal{Q}, \ell, \epsilon)$ in Figure~\ref{fig:ell-spanner-const}. This algorithm returns a subgraph of $G$ that preserves all the distances between every two distinct terminals in $T$ whose distance is at most $\ell$.  The algorithm uses the following shortest path separator theorem for $H$-minor-free graphs by Abraham and Gavoille~\cite{AG06}.
  
\begin{lemma}[Theorem 1~\cite{AG06}] \label{lm:sep-minor-free} For every connected $H$-minor-free graph $G$ of $n$ vertices, there is a family of $\gamma$ sets of paths $\Omega = \{\mathcal{P}_1, \mathcal{P}_2, \ldots, \mathcal{P}_\gamma\}$ of $G$ such that:
\begin{enumerate}[nolistsep,noitemsep]
\item $\sum_{i=1}^\gamma |\mathcal{P}_i| = O_H(1)$.
\item $\mathcal{P}_1$ is a set of shortest paths of $G$ and $\mathcal{P}_i$ is a set of shortest paths of $G\setminus V(\cup_{j < i}  \mathcal{P}_i)$ for $i\geq 2$.
\item Connected components of $G \setminus V(\Omega)$ have size at most $n/2$. 
\end{enumerate}
\end{lemma}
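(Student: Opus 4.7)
The statement is quoted verbatim from Abraham and Gavoille, so the paper uses it as a black box; nevertheless, here is the approach I would take if I had to reprove it from scratch. The plan is to combine the Alon--Seymour--Thomas balanced separator theorem (any $H$-minor-free $n$-vertex graph has a balanced vertex separator of size $O_H(\sqrt{n})$) with a ``convert separator into shortest paths'' gadget that iterates once per level of a structural decomposition.

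First, I would handle the planar / bounded-genus base case, which is the classical Lipton--Tarjan / Thorup trick. Pick any root $r$ and build a shortest path tree $T$ of $G$. In a planar graph, one non-tree edge has a fundamental cycle whose removal gives a $\tfrac{2}{3}$-balanced separator; in a genus-$g$ surface, $O(g)$ non-tree edges suffice. Each fundamental cycle is the union of two root-to-vertex paths in $T$ (plus one edge), and these tree paths are, by construction, shortest paths of $G$. So the base case already produces a single family $\mathcal{P}_1$ of $O_H(1)$ shortest paths whose removal halves the graph.

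Next, for a general $H$-minor-free $G$, I would peel off one structural layer at a time, using the Robertson--Seymour decomposition into clique-sums of nearly-embeddable pieces on surfaces of genus $O_H(1)$, minus a bounded number of apex vertices and bounded-depth vortices. On the outermost layer I would apply the base case to get a family $\mathcal{P}_1$ of shortest paths in $G$; on each subsequent layer $i$, I would apply the base case in the induced subgraph $G\setminus V(\bigcup_{j<i}\mathcal{P}_j)$, which automatically makes $\mathcal{P}_i$ consist of shortest paths in the residual graph, as the statement demands. The depth of the decomposition is $O_H(1)$ and each layer contributes $O_H(1)$ paths, so $\sum_i |\mathcal{P}_i|=O_H(1)$. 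The balance condition $|C|\le n/2$ for every component $C$ of $G\setminus V(\Omega)$ then follows by invoking the Alon--Seymour--Thomas bound recursively on the residual, possibly by combining a constant number of consecutive layers into a single ``peel'' before declaring it complete.

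The main obstacle I anticipate is the tension between two requirements: each $\mathcal{P}_i$ must consist of \emph{shortest} paths in the residual graph (so tree/BFS arguments that live on a single embedded surface are very natural), but the Robertson--Seymour decomposition cuts $G$ into pieces whose intrinsic distances differ from those of $G$, and one has to certify that paths manufactured inside a piece really are shortest in the residual subgraph of $G$. Absorbing the apex vertices and vortex boundaries into the separator family at bounded cost, and handling clique-sum boundaries so that they do not shortcut residual distances in an uncontrolled way, is the delicate technical step; it is here that the $O_H(1)$ bound on $\gamma$ (rather than a bound independent of $H$) is lost.
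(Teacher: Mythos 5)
You are right that the paper does not prove this lemma at all: it is imported verbatim as Theorem~1 of Abraham and Gavoille~\cite{AG06} and used as a black box, so there is no in-paper argument to compare against. Your sketch does follow the same general route as the actual proof in~\cite{AG06} — Robertson--Seymour clique-sum decomposition into nearly-embeddable pieces, combined with Lipton--Tarjan/Thorup-style fundamental-cycle separators built from a shortest-path tree on the surface-embedded part — so the strategy is the right one.

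As a proof, however, the proposal stops exactly where the real work begins. The steps you defer (``absorbing the apex vertices and vortex boundaries,'' ``handling clique-sum boundaries so that they do not shortcut residual distances'') are the core technical content of~\cite{AG06}, not routine bookkeeping: apices are easy (each apex can be taken as a trivial one-vertex path, which is vacuously a shortest path), but vortices of bounded pathwidth attached to faces and the clique-sum interfaces are precisely what force the layered formulation in which $\mathcal{P}_i$ need only be shortest in $G\setminus V(\bigcup_{j<i}\mathcal{P}_j)$, and certifying shortest-ness of paths manufactured inside a piece with respect to the \emph{residual subgraph of $G$} (whose metric differs from that of the piece) requires an argument you have not supplied. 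The balance condition also needs more care than ``invoke Alon--Seymour--Thomas recursively'': property (3) asks that every component of $G\setminus V(\Omega)$ have at most $n/2$ vertices after a \emph{constant} number of peels, and a base-case separator applied inside one torso of the decomposition only balances that torso, not $G$; one must argue that a bounded number of path families suffices globally, which is again where the structure theorem is used quantitatively. So the proposal is a correct high-level outline of the known proof, but with the decisive steps acknowledged rather than carried out it does not constitute a proof of the lemma.
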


\begin{figure}
\vspace{-20pt}
\centering
\fbox{\begin{varwidth}{\dimexpr\textwidth-2\fboxsep-2\fboxrule\relax}
   \begin{tabbing}
 {\sc EllCloseSpanner}$(G, T,\mathcal{Q}, \ell, \epsilon)$\\
  \qquad \= \textbf{if} $|T| \leq 1$ return $\emptyset$\\
  \> $S \leftarrow \emptyset$\\
  \> $\mathcal{P}_0 \leftarrow \emptyset$; $\Omega \leftarrow \{\mathcal{P}_1,\ldots, \mathcal{P}_\gamma\}$ as in Lemma~\ref{lm:sep-minor-free}\\
  \> \textbf{for} $i \leftarrow 1$ to $\gamma$\\
  \> \qquad \= $G_i \leftarrow G\setminus (\cup_{j=0}^{i-1} \mathcal{P}_j)$\\
  \>\> $\mathcal{Q}_i \leftarrow$ the set of paths in $\mathcal{Q}$ that cross $\mathcal{P}_i$\\
  \>\> $S \leftarrow S \cup $ \textsc{PTPSpanner($G_i, \mathcal{P}_i,\mathcal{Q}_i,\ell, \epsilon$)}\\
  \>\> $\mathcal{Q} \leftarrow \mathcal{Q}\setminus \mathcal{Q}_i$\\
  \> \textbf{for each} component $G'$ of $G\setminus V(\Omega)$\\
  \> \qquad \= $T' \leftarrow T\cap V(G')$\\
  \>\> $\mathcal{Q}' \leftarrow $ remaining paths in $\mathcal{Q}$ with both endpoints in $T'$\\
  \>\> $S  \leftarrow S \cup$ \textsc{EllCloseSpanner}$(G', T',\mathcal{Q}', \ell,\epsilon)$\\
  \qquad \= return $S$
\end{tabbing}
\end{varwidth}}
\caption{An algorithm that constructs a spanner preserving distances prescribed by $\mathcal{Q}$.}
\label{fig:ell-spanner-const}
\end{figure}

We now show that the oracle has desired weak sparsity. We represent the execution of procedure {\sc EllCloseSpanner}$(G, T,\mathcal{Q}, \ell, \epsilon)$ by a recursion tree $\mathcal{T}$ where each node represents a recursive call on a subgraph, say $K$ of $G$, and its child nodes are recursive calls on connected components of $K \setminus \Omega_K$. Here $\Omega_K$ is a shortest-path separator of $K$ as in Lemma~\ref{lm:sep-minor-free}. The root node of $\mt$ is a call on $G$. Since the size of child graphs in recursive calls is at most half the size of the parent graph, $\mt$ has depth $O(\log n)$. 

We note that in each recursive call {\sc EllCloseSpanner}$(G', T',\mathcal{Q}', \ell, \epsilon)$ in the algorithm in Figure~\ref{fig:ell-spanner-const}, paths in $\mathcal{Q}'$ are shortest paths of $G'$ since they are shortest paths in $G$. Observe that none of the paths in $\mathcal{Q}'$ of the second \textbf{for} loop contains a vertex of $V(\Omega)$ since any path of $\mathcal{Q}$ that crosses at least one set of paths in $\Omega$ will be removed in the first \textbf{for} loop. 

We first bound the total weight of $S$ that is the output of {\sc EllCloseSpanner}$(G, T,\mathcal{Q}, \ell, \epsilon)$. Consider $i$-th iteration in the first \textbf{for} loop in the algorithm in Figure~\ref{fig:ell-spanner-const}. We have:
\begin{observation} \label{obs:Qi-sp-Gi} $\mathcal{Q}_i$ is a set of shortest paths in $G_i$. 
\end{observation}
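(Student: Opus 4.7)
The plan is to show that any path $Q \in \mathcal{Q}_i$ (i) lies entirely inside $G_i$ and (ii) remains a shortest path between its endpoints there, both of which follow from very simple bookkeeping of the algorithm.

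First I would track how $\mathcal{Q}$ evolves across the iterations of the first \textbf{for} loop. At the start, $\mathcal{Q}$ is a set of shortest paths of $G$. At the end of iteration $j$, every path that crossed $\mathcal{P}_j$ is deleted from $\mathcal{Q}$. Hence, by the time the algorithm reaches iteration $i$, the current $\mathcal{Q}$ contains only paths that avoid every vertex of $\mathcal{P}_1 \cup \ldots \cup \mathcal{P}_{i-1}$, i.e. only paths $Q$ with $V(Q) \cap V(\cup_{j < i}\mathcal{P}_j) = \emptyset$. In particular, every $Q \in \mathcal{Q}_i$ satisfies $Q \subseteq G_i$, since $G_i = G \setminus (\cup_{j=0}^{i-1}\mathcal{P}_j)$ and $\mathcal{P}_0 = \emptyset$.

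Next, I would use the standard observation that a shortest path in a graph remains shortest in any subgraph that still contains it. Concretely, let $Q \in \mathcal{Q}_i$ have endpoints $s,t$. Since $G_i$ is a subgraph of $G$, $d_{G_i}(s,t) \geq d_G(s,t) = w(Q)$. On the other hand, $Q \subseteq G_i$ gives $d_{G_i}(s,t) \leq w(Q)$. Thus $d_{G_i}(s,t) = w(Q)$, so $Q$ is a shortest $s$-to-$t$ path in $G_i$.

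There is no real obstacle here; the only subtle point worth being explicit about is that ``$Q$ crosses $\mathcal{P}_i$'' in the definition of $\mathcal{Q}_i$ is consistent with $Q \subseteq G_i$, because $\mathcal{P}_i$ itself is contained in $G_i$ (by definition of the separator family $\Omega$ in Lemma~\ref{lm:sep-minor-free}, each $\mathcal{P}_i$ lives in $G \setminus V(\cup_{j < i}\mathcal{P}_j)$, which is exactly $G_i$). So intersection with $\mathcal{P}_i$ occurs inside $G_i$ and does not contradict $Q$ being vertex-disjoint from the earlier separators.
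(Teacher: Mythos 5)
Your proof is correct and matches the paper's (largely implicit) justification: paths surviving to iteration $i$ avoid $V(\cup_{j<i}\mathcal{P}_j)$ because any path crossing an earlier $\mathcal{P}_j$ was already removed, so each $Q\in\mathcal{Q}_i$ lies in $G_i$, and a shortest path of $G$ contained in a subgraph remains shortest there. Nothing further is needed.
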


By Claim~\ref{clm:multi-paths-spanner} and (1) of Lemma~\ref{lm:sep-minor-free}, the total weight of $S$ after the first for loop is at most:

\begin{equation*}
O(|T|\epsilon^{-2}\ell \sum_{i=1}^\gamma |\mathcal{P}_i|) = O_H(|T|\epsilon^{-2}\ell )
\end{equation*}

That implies at each level of $\mathcal{T}$, the weight of the returned subgraph of each node is $O_H(|T|\epsilon^{-2}\ell )$ plus the weight of the subgraphs returned from recursive calls. Since the depth of $\mathcal{T}$ is $O(\log n)$,  $w(S) \leq O_H(|T|\epsilon^{-2}\ell \log n)$. Thus, the weak sparsity of the oracle is $ O_H(\epsilon^{-2}\log n)$. 

To complete the proof of Theorem~\ref{thm:sparse}, it remains to show that $d_{S}(x,y) \leq (1+\epsilon) d_G(x,y)$ for every two distinct  terminals  $x,y \in \mathcal{T}$  whose distance is at most $\ell$. Let $Q_{x,y}$ be the shortest path between $x,y$ in $\mathcal{Q}$. By triangle inequality, we can assume that $Q_{x,y}$ contains no other terminals except $x$ and $y$. Since the algorithm only stops after each component of $G$ contains at most one terminal, $Q_{x,y}$  must be removed from $\mathcal{Q}$ at some node of $\mathcal{T}$, say $\tau$. More precisely, $Q_{x,y}$ is removed in some iteration, say $i$, in the first for loop of $\tau$. By Observation~\ref{obs:Qi-sp-Gi} and Claim~\ref{clm:multi-paths-spanner}, we have:
\begin{equation*}
d_{S}(x,y) \leq (1+\epsilon)d_{G_i}(x,y) = (1+\epsilon)d_{G}(x,y)
\end{equation*}

\section{Proof of Theorem~\ref{thm:reduction}}\label{sec:reduction}

In this section, we only consider spanner oracles with weak sparsity. Thus, we simply use \emph{sparse} and \emph{sparsity} to refer to  \emph{weakly sparse} and \emph{weak sparsity}, respectively. We first show that  sparse spanner oracles are necessary to construct light subset spanners.

\subsection{Light subset spanners imply sparse spanner oracles}\label{sec:light-to-sparse}

We show that the existence of a light subset spanner for any given set of terminals implies a sparse spanner oracle. Let $\mathcal{A}$ be an algorithm that given a set of terminal set $T$ in a graph $G$, returns a subset $(1+\epsilon)$-spanner, denoted by $\mathcal{A}(T,G)$, for $T$ with lightness $\lgt$. 

Given two disjoint subset of vertices $S_1,S_2$ of $G$, we define the distance between $S_1$ and $S_2$, denoted by $d_G(S_1,S_2)$, to be $\min_{s_1\in S_1, s_2\in S_2}d_G(s_1,s_2)$.

Suppose that $T$ and $\ell$ are given as an input to oracle $\mathcal{O}$ that we will construct. Let $M_T$ be the (complete) graph obtained by taking metric completion on $T$. Let $\MST$ be the minimum spanning tree of $M_T$. We remove from $\MST$ any edge of length bigger than $\ell$ to obtain a spanning forest $F$.  Let $\mathcal{T} = \{T_1,\ldots, T_k\}$ be the partition of $T$ induced by $F$.  We return $\cup_{T_i \in \mathcal{T}} \mathcal{A}(T_i,G)$ as the output of the oracle.

By the cut property of minimum spanning trees, $d_{M_T}(T_i,T_j) >  \ell$ for every $i\not= j $. Thus, the oracle does not need to preserve pairwise distances between terminal pairs in two different sets of $\mathcal{T}$. If $t_1,t_2$ are in the same set, say $T_i$, it is guaranteed that $d_{\mathcal{A}(T_i,G)}(t_1,t_2) \leq (1+\epsilon)d_G(t_1,t_2)$ since $\mathcal{A}$ is a subset $(1+\epsilon)$-spanner. Thus, their distance is preserved in $\mathcal{O}(T,\ell)$ as well.

It remains to bound the sparsity of $\mathcal{O}$. Recall that every edge of the tree spanning $T_i$ of $F$, denoted by $F[T_i]$,  has weight at most $\ell$. Thus, $w(F[T_i]) \leq \ell |T_i|$. Since the weight of the Steiner tree in $G$ for $T_i$ is at most $w(F[T_i])$, by the lightness assumption of $\mathcal{A}$, we deduce that
\begin{equation*}
w(\mathcal{A}(T_i,G)) \leq \lgt w(F[T_i]) = \lgt \ell |T_i|
\end{equation*}
Thus, $w(\mathcal{O}(T,\ell))  \leq \lgt   \ell  \sum_{T_i \in \mathcal{T}}|T_i| = \lgt \ell |T|$. Therefore, $\mathcal{O}$ has weak sparsity $O(\lgt)$.

\subsection{Sparse spanner oracles imply light subset spanners}\label{sec:sparse-to-light}

As mentioned in Section~\ref{sec:our-technique}, the notion of sparse spanner oracles is directly inspired by the way sparse spanners were used to construct light spanners in prior work~\cite{CW16, BLW17,BLW19,LS19}. Thus, it is natural to expect that we will use the same technique, namely \emph{iterative clustering}, to construct a light subset spanner.   The technique was first discovered by Chechick and Wulff-Nilsen~\cite{CW16} to solve the light $t$-spanner problem in general graphs. It was refined to the light $(1+\epsilon)$-spanner problem in $H$-minor-free graphs by Borradaile, Le and Wulff-Nilsen~\cite{BLW17}. The later idea was then adapted to solve the same problem in doubling metrics~\cite{BLW19} and Euclidean spaces~\cite{LS19}. Our proof closely follows the presentation of Borradaile, Le and Wulff-Nilsen in~\cite{BLW19}. Since several parts of the argument appeared earlier in the work by Chechick and Wulff-Nilsen~\cite{CW16} and the work by the same authors~\cite{BLW17}, we will refer to the argument as  \emph{BCLW technique}.

Our major contribution is to identify parts of the proofs in BCLW technique where special properties of the input were used to establish sparsity, and then replace them with sparse spanner oracles.  This eliminates the need for special properties of the input from the proof. Another contribution of this work is to frame their subtle argument in terms of a single Credit Lemma (see Lemma~\ref{lm:credit-informal}). This allows us to draw a clearer picture of how sparse spanner oracles fit into the construction, and also significantly simplify the lightness bound proof. Since the Credit Lemma is just a different way to look at a known technique, we only provide details of the modification. The full proof is deferred to Appendix~\ref{app:cluster-const} for reference.

We first take the metric completion of $G$ on $T$ to obtain an edge-weighted complete graph $M_T$. That is, each edge of $M_T$ has weight equal to the length of the shortest path between the two corresponding terminals in $G$. We can think of $M_T$ as a metric without any other special property.  We mostly work with $M_T$. Since edges of $M_T$ may not exist in $G$. To avoid confusion between edges of $M_T$ and $G$, we use a map  $\kappa: E(M_T)\rightarrow 2^{E(G)}$ that maps each edge $e$ to a shortest path $\kappa(e)$ between $e$'s endpoints in $G$. For a subset of edges $X$ of $M_T$, we define $\kappa(X) = \cup_{e\in X}\kappa(e)$, which is a subgraph of $G$.

We first set up the iterative clustering framework in the same way previous work did~\cite{BLW17,BLW19,LS19}. Let $\mathrm{MST}$ be the minimum spanning tree of $M_T$. It is well know that $\mathrm{MST} \leq 2w(\st)$ where $\st$ is an optimal Steiner tree spanning $T$ in $G$. We will construct a subset spanner, denoted by $S$, iteratively. Initially, $S$ has $V(S) = E(S) = \emptyset$. Let $k = |V(M_T)|$.  For each edge $e \in M_T$ of weight at most $\frac{w(\MST)}{\epsilon k^2}$, we add $\kappa(e)$ to $S$. Since there are at most $k(k-2)/2$ such edges, the total weight of all the edges is bounded by $O(w(\MST)\epsilon^{-1}) ~= ~O(w(\st))\epsilon^{-1}$. 

Let $w_0 = \frac{w(\MST)}{k^2}$.  We abuse notation by using $E(M_T)$ to denote the set of edges of $M_T$ weight more than $w_0/\epsilon$. Note that we only need to deal with terminal pairs whose edges in $M_T$ have weight a least  $\frac{w_0}{\epsilon}$ since the shortest paths between other pairs have been added to $S$. Recall that every edge in $E(M_T)$ has weight at most $w(\MST)  = k^2w_0$. Following BCLW technique, we partition edges in $E(M_T)$ into $O(\log \frac{1}{\epsilon})$ sets $E_1, E_2,\ldots,  E_J$ with $J = \lceil \log \frac{1}{\epsilon}\rceil$. Each $E_j$ is an exponential scale $E_j^{1}, \ldots, E_j{^I}$ with $I = \lceil \log_{1/\epsilon} k^2\rceil -1 ~=~ O(\log k)$ where each $E_j^{i}$ contains edges of  weight in range  $(\frac{2^{j-1}w_0}{\epsilon^{i}},\frac{2^{j}w_0}{\epsilon^{i+1}}]$. Such a partition can clearly be found in polynomial time.

We will find a spanner that preserves distances between the endpoints of edges in $E_j$ separately for each $j$. The final subset spanner will be union of at most $J$ such spanners, and thus the lightness bound is blown up by just a factor of $J ~= ~ O(\log \frac{1}{\epsilon})$. A nice property of the edge partitioning scheme is that edges in $E_j^{i+1}$ weight at least $\Omega(\frac{1}{\epsilon})$ times edges in $E_j^{i}$.

We now focus on constructing a spanner for edges in $E_j$ for a fixed $j$. Let $\ell_i = \frac{2^j}{\epsilon^{i+1}}w_0$ be the upper bound on the length of edges in $E_j^{i}$.  We refer to edges in $E_j^{i}$ as level-$i$ edges. We will construct a subset spanner iteratively by considering edges from level $0$ to level $I$.

\begin{definition} Let $S_i$ be the spanner constructed after level $i$. Initially, 
$S_0  = \kappa(\MST)$, and after level $I$, $S = S_I$.
\end{definition}

The construction of $S_i$ will depends on $S_{i-1}$ and $E_j^i$. Our final spanner $S$ has  stretch at most $(1 + s\epsilon)$ for a sufficiently large constant $s$ independent of $\epsilon$.

Let $\mathcal{O}$ be a sparse spanner oracle guaranteed by the assumption of Theorem~\ref{thm:reduction}. There are two major ideas in BCLW technique. The first idea is to construct a set of clusters, say $\mc_i$, for each level $i$. Each cluster will be \emph{a subgraph of $S_i$}. The fact that each cluster is a subgraph of $S_i$, instead of being a subgaph of $M_T$, is very important since we would repeatedly the routing argument in the stretch analysis. That is, we  route a shortest path between two terminals \emph{though clusters}  to obtain a short path of roughly the same length, and the new path would be in $S_I$ since clusters are subgraphs of $S_i$.

Clusters in level $0$ are constructed from subtrees of $\MST$.  Let $S_0 = \kappa(\MST)$.  Note that there is no level-$0$ edges in $E_j$ since edges in $E_j$ have length more than $\frac{w_0}{\epsilon}$.  To construct a spanner for level-$i$ edges for any $i \geq 1$, we use the set of clusters $\mc_{i-1}$ constructed in level $i-1$ as a guidance. Let $S_{i-1}$ be the subset spanner constructed before level $i$.

In our construction, for each cluster $C \in \mc_{i-1}$ that is incident to $\Omega(\frac{1}{\epsilon})$ level-$i$ edges, we will select one vertex and then call oracle $\mathcal{O}$ to construct a sparse spanner for the selected vertices. The sparsity of $\mathcal{O}$ guarantees that the output spanner has small weight. To ensure that terminal distances would not be blown up by much when re-routing the shortest paths through clusters in $\mc_{i-1}$, we maintain that:

\begin{tcolorbox}
(DC1) Each cluster in $\mc_{i-1}$ is a subgraph of $S_{i-1}$ and has  diameter at most $g\ell_{i-1}$ for some sufficiently big constant $g$ chosen later. 
\end{tcolorbox}

This is same (DC1) invariant in~\cite{BLW19}.  The intuition is that a level-$i$ edge $e$ has weight at least $\ell_i/2$ while the diameter of a cluster in $\mc_{i-1}$ is at most $g\ell_{i-1} ~=~ g\epsilon \ell_i ~ \leq~ 2g\epsilon w(e)$. Hence, we can re-route the shortest path between $e$'s endpoints through a level-$(i-1)$ cluster while the length of the path is increased by at most $2g\epsilon w(e)$, which is much smaller than $s\cdot\epsilon w(e)$ when $s$ is chosen sufficiently large. That is, the final stretch of $e$ is still $(1+s\cdot\epsilon)$.  By the same reason, we only need to preserve the distance between the endpoints of at most one  level-$i$ edge among all the level-$i$ edges that connect the same two level-$(i-1)$ clusters.

The second idea in BCLW technique is an amortized argument via credits to bound the weight of the output spanner.  The whole idea is to allocate some fixed amount of credits to $\MST$ edges and use these credits to buy all the spanner edges added during the construction of $S$. Suppose that the total allocated credit is $\ce w(\MST)$  for some parameter $\ce$.  If  $\ce w(\MST)$ credits are sufficient to buy all spanner edges, then $w(S) \leq \ce w(\MST)$. In what follows, we will elaborate the credit allocation scheme.

We first guarantee that every edge of $\MST$ has weight at most $w_0$ by subdividing every edge $e$ of weight more than $w_0$ into $\lceil  \frac{w(e)}{w_0}\rceil$ edges of weight at most $w_0$. We then allocate $\ce w_0$ credits to each new $\MST$ edge (now of weight at most $w_0$).  Observe that  the total allocated credit is $O(\ce)w(\MST)$ (see the proof in Appendix B in~\cite{BLW19}). Thus, $\ce$ would finally still be the asymptotic upper bound on the weight of the spanner.   One minor issue concerning the subdivision of $\MST$ edges is that subdividing vertices  are not in $G$, so they cannot be involved in any oracle call; in fact, during our construction, no oracle call would involve subdividing vertices. The purpose of the subdivision is to guarantee that  (a) $\MST$ edges are significantly shortest than diameter of level-$i$ clusters for any $i \geq 1$ and (b) level-$i$ clusters have roughly the same amount of credits. These two properties would significantly simplify the proof of Credit Lemma (Lemma~\ref{lm:credit-informal}) which is central to bounding the spanner weight.

After allocating credits to $\MST$ edges, we start the construction of level-$0$ clusters. Credits of $\MST$ will be used to build credits for these clusters. Then, credits of level-$0$ clusters will be used to build credits for level-$1$ clusters, and so on. With the credits built from lower level clustering, level-$i$ clusters will pay for edges in $E(S_{i+1})\setminus E(S_{i})$. (When $i = 0$, level-$0$ clusters will pay for $E(S_{1})\setminus E(S_{0})$.) This will guarantee that when we finish the construction in level $I$, all edges of $S_I$ are already paid for.   Since level-$i$ edges are longer, level-$i$ clusters must have more credits to pay for their spanners. To this end, we guarantee that:

\begin{tcolorbox}
(DC2) Each cluster in $\mc_{i-1}$ of diameter $d$ has at least $\ce\max(d,\ell_{i-1}/2)$ credits. 
\end{tcolorbox}

This is the same (DC2) invariant in~\cite{BLW19}. The seemingly artificial credit lower bound  $\ce\ell_{i-1}/2$ in (DC2) is because we have no lower bound on the diameter of clusters; invariant (DC1) only provides an upper bound. We allow low diameter clusters as long as they have enough credits to pay for the weight of the spanners from the spanner oracle. Since we only allocate credits once, we cannot use all the credit of clusters in $\mc_{i-1}$ to pay for the spanner edges added in level $i$. That is, we need to use credits of  $\mc_{i-1}$ to allocate credits to $\mc_{i}$ to maintain invariant (DC2) for level-$i$ clusters. The goal is to show that it is possible to construct level-$i$ clusters in a way that after maintaining invariant (DC2) for level $i$,  level-$(i-1)$ clusters still have significant leftover credits to pay for the spanner edges added in level $i$. In fact, showing such a construction is the heart of all arguments following the iterative clustering framework~\cite{CW16,BLW17,BLW19,LS19}, including BCLW technique.

We now go into details of the construction. We first greedily break the $\MST$ into sub-trees of diameter at least $\ell_0$ and at most $6\ell_0$. Note that each $\MST$ edge has length at most $w_0 < \ell_0$. For each subtree $T$ broken from $\MST$, we define $\kappa(T)$ to be a level-$0$ cluster.  Since $S_0 = \kappa(\MST)$, level-$0$ clusters are subgraphs of $S_0$. Therefore, by choosing $g\geq 6$,  invariant (DC1) is maintained for level-$0$ clusters. We now show invariant (DC2).  Since $\dm(T) \geq \dm(\kappa(T))$, we can use the credit of edges on the diameter path of $T$ to ensure that $\kappa(T)$ has at least $\ce \max(\dm(\kappa(T)),\ell_0/2)$ credits.  This is possible because each edge of $T$ has a credit at least $\ce$ times its length, and $\dm(T)\geq \ell_0 > \ell_0/2$. Since  $E_j^0 = \emptyset$, we do not need to pay for any level-$0$ edge. 

We now construct level-$i$ clusters and spanners for level-$i$ edges, assuming that two invariants (DC1) and (DC2) hold for level $i-1$. Recall that $S_{i-1}$ is the spanner constructed before level $i$.  Following the notation of~\cite{BLW19}, we call level-$(i-1)$ clusters \emph{$\epsilon$-clusters}. A level-$i$ edge is said to \emph{connect}  two $\epsilon$-clusters if its endpoints are contained in the $\epsilon$-clusters.  Let $\mathcal{K}$ be the \emph{cluster graph} where each node of $\mathcal{K}$ corresponds to an $\epsilon$-cluster in $\mc_{i-1}$ and each edge of $\mathcal{K}$ corresponds to a level-$i$ edge that connects the two corresponding  $\epsilon$-clusters. 

Note that there could be many level-$i$ edges that connect the same two $\epsilon$-clusters, but we only keep the least weighted edge in $\mathcal{K}$. Also note that  there would be no level-$i$ edge that have both endpoints in the same $\epsilon$-cluster since the weight of each level-$i$ edge is $\ell_i/2 ~=~ \frac{\ell_{i-1}}{2\epsilon} > g\ell_{i-1}$ when $\epsilon$ is sufficiently big, while $\epsilon$-clusters have diameter at most $g\ell_{i-1}$. We further remove from $\mathcal{K}$ any edge whose shortest path in $S_{i-1}$ is at most $(1+(6g+1)\epsilon)$ its weight since the distance between its endpoints is already preserved in $S_{i-1}$ (by setting $s\geq 6g+1$). Constant $(6g+1)$ comes from the analysis of a special case in our argument that will appear later.

In~\cite{BLW19}, the packing property of doubling metrics was used to argue that $\mathcal{K}$ has bounded degree (see Lemma~3.1 in~\cite{BLW19}), so they can afford to buy every edge of $\mathcal{K}$ to the spanner using $\epsilon$-clusters' credits. Problems considered in prior work~\cite{BLW17,BLW19,LS19} enjoy the same degree boundedness or average-degree boundedness. In our setting, we do not have any constraint on the degree of $\mathcal{K}$; it could be a complete graph.  This is when sparse spanner oracles come into play. We gather all high degree nodes of $\mathcal{K}$ and call oracle $\mathcal{O}$ to construct a sparse spanner for these nodes. The weak sparsity of $\mathcal{O}$ guarantees that on average, each  high degree node of $\mathcal{K}$ only pays for the weight equal to the total weight of a constant number of edges of $\mathcal{K}$. Intuitively, sparse spanner oracles allow us to ``reduce the degree''  of $\mathcal{K}$ to constant.  The rest of the argument can therefore be adapted directly from prior work~\cite{BLW17,BLW19,LS19}. In the following section, we give a formal argument. 

\subsubsection{Spanner construction}

 To avoid confusion, we refer to vertices of $\mathcal{K}$ as \emph{nodes}.  For each node $\mbx \in \mathcal{K}$, we use $\mc_{i-1}(\mbx)$ to denote the $\epsilon$-cluster corresponding to $\mbx$. (The bold font  will be used to denote $\mathcal{K}$'s nodes.)
 
 A node $\mbc $ has \emph{high degree} if its degree in $\mathcal{K}$ is at least $\frac{2g}{\epsilon} + 1$. Otherwise, we say that $\mbc$ has \emph{low degree}. Let $V_{low}$ and $V_{high}$ be the set of low and high degree nodes in $\mathcal{K}$, respectively. We construct $S_i$ from $S_{i-1}$ in two steps:
 
 \begin{itemize}
 \item[]{\textbf{(Step 1)}} For each node $\mbc \in V_{low}$ and each edge $e$ incident to $\mbc$ in $\mathcal{K}$, we add path $\kappa(e)$ to $S_{i-1}$. 
 \item[]{\textbf{(Step 2)}} For each node $\mbc \in V_{high}$, we choose a vertex of $M_T$ (a terminal) in  $\mc_{i-1}(\mbc)$. Let $T'$ be the set of chosen vertices. We add to $S_{i-1}$ the spanner $\mathcal{O}(T', 2\ell_i)$. 
 \end{itemize}
 Let the resulting spanner be $S_i$. We now argue that for every level-$i$ edge $e \in E_{j}^i$, there is a path in $S_i$ between its endpoints of length at most $(1+s\epsilon)w(e)$ when $s$ is sufficiently large.

\begin{claim}\label{clm:stretch} For any edge $e\in E_{j}^i$, there is a shortest path in $S_j$ between $e$'s endpoints of length at most $\left(1+(16g+1)\epsilon\right) w(e)$.
\end{claim}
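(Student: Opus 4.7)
My plan is to split on how $e=(u,v)$ interacts with the cluster graph $\mathcal{K}$, and in every subcase reroute the path in $S_i$ through diameter paths of the two $\epsilon$-clusters containing $u$ and $v$. First I would argue that $u$ and $v$ must lie in \emph{distinct} $\epsilon$-clusters $\mbc_1,\mbc_2$: since $w(e)=d_G(u,v)\geq \ell_i/2$ for a level-$i$ edge while (DC1) forces each cluster diameter to be at most $g\ell_{i-1}=g\epsilon\ell_i$, having $u,v$ in the same cluster is impossible once $\epsilon$ is smaller than a constant depending on $g$. Let $e'$ be the minimum-weight level-$i$ edge between $\mbc_1$ and $\mbc_2$ (the representative kept by $\mathcal{K}$ before the $(1+(6g+1)\epsilon)$-pruning step), with endpoints $u'\in \mc_{i-1}(\mbc_1)$ and $v'\in \mc_{i-1}(\mbc_2)$.

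In every case I will take the outer reroute $u\leadsto u'$ inside $\mc_{i-1}(\mbc_1)$ and $v'\leadsto v$ inside $\mc_{i-1}(\mbc_2)$. These segments lie in $S_{i-1}\subseteq S_i$ by (DC1) and together cost at most $2g\ell_{i-1}=2g\epsilon\ell_i\leq 4g\epsilon w(e)$. The remaining task is to bound $d_{S_i}(u',v')$ under three subcases: (a)~$e'$ was removed from $\mathcal{K}$ by the $(1+(6g+1)\epsilon)$-pruning test, so $d_{S_{i-1}}(u',v')\leq (1+(6g+1)\epsilon)w(e')$ by definition of the test; (b)~$e'$ survives in $\mathcal{K}$ and at least one of $\mbc_1,\mbc_2$ is low-degree, so Step~1 added $\kappa(e')$ to $S_i$ and $d_{S_i}(u',v')=w(e')$; (c)~$e'$ survives in $\mathcal{K}$ and both $\mbc_1,\mbc_2$ are high-degree, so Step~2 added $\mathcal{O}(T',2\ell_i)$ to $S_i$.

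Subcase~(c) is the main obstacle and the only place the sparse spanner oracle is invoked. I would take the chosen representatives $t_1\in T'\cap \mc_{i-1}(\mbc_1)$ and $t_2\in T'\cap \mc_{i-1}(\mbc_2)$ and use (DC1) together with the triangle inequality to sandwich $d_G(t_1,t_2)$ between $w(e')-2g\ell_{i-1}$ and $w(e')+2g\ell_{i-1}$. Since $w(e')\in[\ell_i/2,\,\ell_i]$ and $\ell_{i-1}=\epsilon\ell_i$, a short calculation shows $d_G(t_1,t_2)\in [\ell_i/4,\,2\ell_i]$ once $\epsilon$ is smaller than a constant depending on $g$; this is exactly the window on which Definition~\ref{def:spanner-oracle} guarantees $(1+\epsilon)$-stretch for $\mathcal{O}(T',2\ell_i)$. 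Composing the reroute $u'\leadsto t_1$ (in $S_{i-1}$, cost $\leq g\ell_{i-1}$), the oracle path $t_1\leadsto t_2$ of length at most $(1+\epsilon)d_G(t_1,t_2)$, and $t_2\leadsto v'$ (in $S_{i-1}$, cost $\leq g\ell_{i-1}$) yields $d_{S_i}(u',v')\leq (1+\epsilon)w(e')+O(g\epsilon)w(e')$.

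Adding the outer reroute cost $\leq 4g\epsilon w(e)$ to the worst of the three subcase bounds for $d_{S_i}(u',v')$, and using $w(e')\leq w(e)$, yields $d_{S_i}(u,v)\leq (1+(16g+1)\epsilon)w(e)$ after collecting constants. The only nontrivial step is the range containment $d_G(t_1,t_2)\in[\ell_i/4,\,2\ell_i]$ in subcase~(c); once that is checked, everything else is a routine combination of the triangle inequality, invariant (DC1), and $S_{i-1}\subseteq S_i$.
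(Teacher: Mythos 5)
Your proof is correct and follows essentially the same route as the paper's: the outer reroute through $\epsilon$-cluster diameters (using (DC1)), the case split on whether the representative edge $e'$ was pruned, added at Step~1 (low degree), or handled via the oracle $\mathcal{O}(T',2\ell_i)$ at Step~2, and the range check $d_G(t_1,t_2)\in[\ell_i/4,2\ell_i]$ are exactly the paper's steps. The only difference is cosmetic: you apply the outer reroute to $e'$ uniformly (which is vacuous when $e=e'$), whereas the paper first treats $e\in\mathcal{K}$ without rerouting and only composes with the reroute in its Case~2; both collect to the same $(1+(16g+1)\epsilon)$ bound.
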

\begin{proof}

There are three possibilities: (a) $e \not\in \mathcal{K}$ and there is a shorter level-$i$ edges connecting the two $\epsilon$-clusters that $e$ connects, (b) $e$ was initially in $ \mathcal{K}$ but then removed from $\mathcal{K}$ because the shortest path between its endpoints in $S_{i-1}$ has length at most $(1+(6g+1)\epsilon)w(e)$ and  (c) $e \in \mathcal{K}$ and it is not removed from $\mathcal{K}$. Case (b) directly implies the claim. Thus, we only need to consider two  other cases.

\noindent\textbf{Case 1: $e \in \mathcal{K}$}. If $e$ is incident to a node in $V_{low}$, then the shortest path between $e$'s endpoints in $G$ is added to $S_{i-1}$ in Step 1; the claim holds. Suppose $e$'s endpoints, say $\mbx$ and $\mby$, are in $V_{high}$.  Let $x$ and $y$ be the two chosen terminals in $T'$ of $\mc_{i-1}(\mbx)$ and $\mc_{i-1}(\mby)$, respectively. By invariant (DC1) and triangle inequality, we have:
\begin{equation*}
\begin{split}
d_G(x,y) &~\leq~ w(e) + \dm(\mc_{i-1}(\mbx)) + \dm(\mc_{i-1}(\mby))\\
 &~\leq~  w(e) + 2g\ell_{i-1} ~\leq~ \ell_i + 2g\epsilon\ell_i ~<~ 2\ell_i
\end{split}
\end{equation*}
when $\epsilon$ is sufficiently smaller than $1/g$. Furthermore,
\begin{equation*}
\begin{split}
d_G(x,y) &~\geq~ w(e) - \dm(\mc_{i-1}(\mbx)) - \dm(\mc_{i-1}(\mby))\\
 &~\geq~  w(e) - 2g\ell_{i-1} ~\geq~ \ell_i/2 - 2g\epsilon\ell_i ~\geq ~ \ell_i/4
\end{split}
\end{equation*}
when $\epsilon$ is sufficiently smaller than $1/g$.  Thus, there is a shortest path $Q_{x,y}$ of length at most $(1+\epsilon)d_G(x,y)$ in $\mathcal{O}(T', 2\ell_i)$ by the definition of spanner oracles. 

Let $x'$ and $y'$ be $e$'s endpoints in $\mc_{i-1}(\mbx)$ and $\mc_{i-1}(\mby)$, respectively. Let $P$ be the path between $x'$ and $y'$ composed of (a) a shortest path from $x'$ to $x$ in $\mc_{i-1}(\mbx)$,  (b) path $Q_{x,y}$ and (c) a shortest path from $y$ to $y'$ in $\mc_{i-1}(\mby)$. Observe that $P$ is a path in $S_i$ since all of its constituent subpaths are in $S_i$. Thus, it holds that:
\begin{equation*}
\begin{split}
w(P) &\leq 2g\ell_{i-1} + w(Q_{x,y})~\leq~2g\ell_{i-1} + (1+\epsilon)d_G(x,y)\\
& \leq 2g\ell_{i-1} + (1+\epsilon)(w(e)+2g\ell_{i-1})\\
&= 6g\epsilon \ell_i + (1+\epsilon)w(e) \leq (1+(12g+1)\epsilon)w(e)  
\end{split}
\end{equation*}
since $w(e)\geq \ell_i/2$. Thus, the claim holds.

\noindent\textbf{Case 2: $e \not\in \mathcal{K}$}. By construction, there is another edge $e'$ that has $w(e') \leq w(e)$ and connects the same two nodes, say $\mbx$ and $\mby$. Let $x,y$ ($x',y'$) be $e$'s endpoints ($e'$'s endpoints) in $\mc_{i-1}(\mbx), \mc_{i-1}(\mby)$, respectively. Let $P$ be the path between $x'$ and $y'$ composed of (a) a shortest path from $x$ to $x'$ in $\mc_{i-1}(\mbx)$,  (b) shortest path $Q_{x',y'}$ between $x'$ and $y'$ in $S_i$ and (c) a shortest path from $y'$ to $y$ in $\mc_{i-1}(\mby)$. Observe that $P$ is a path in $S_i$ since all of its constituent subpaths are in $S_i$. 

If $e'$ was removed from $\mathcal{K}$ after it was added to $\mathcal{K}$ initially, then $w(Q_{x',y'}) \leq (1+(6g+1)\epsilon) w(e')$. Otherwise, by Case 1, $w(Q_{x',y'}) \leq (1+(12g+1)\epsilon) w(e')$. Both cases imply that:
\begin{equation*}
\begin{split}
w(P) &\leq 2g\ell_{i-1} + w(Q_{x',y'})~\leq~2g\ell_{i-1} + (1+(12g+1)\epsilon)w(e')\\
&= 2g\epsilon \ell_i +  (1+(12g+1)\epsilon)w(e) \leq (1+(16g+1)\epsilon)w(e)  
\end{split}
\end{equation*}
since $w(e)\geq \ell_i/2$.
\end{proof}

\subsubsection{Bounding the spanner weight} \label{sec:weight-bound}

Let $S_I$ be the final spanner after the maximum level $I$. Claim~\ref{clm:stretch} guarantees that $S_I$ will preserve distances between endpoints of edges in the set $E_j$ for a fixed $j$. The final spanner $S$ is the union of all such spanners for $j = 1,2,\ldots,J$. Therefore, for every two terminals $x\not=y \in T$, their distance in $S$ is at most $(1+s\epsilon)d_G(x,y)$ when $s = 16g+1$.

Since there are at most $J = O(\log \frac{1}{\epsilon})$ different sets $E_j$, the weight of the final spanner would be at most  $O(\log \frac{1}{\epsilon})$ times the worst case bound on the weight of $S_I$ for a fixed $j$. To bound the weight of $S_I$, we need to study the clustering procedure in details. The idea is to choose $\ce$ sufficiently large so that the total allocated credit (of value $O(\ce)w(\MST)$) can buy all the spanner edges in $S_I$.  That would imply $w(S_I)  = O(\ce)w(\MST)$.

If a node $\mbx$ is grouped in to a level-$i$ cluster $C$, we say $\mbx$ is a \emph{child} of $C$, and $C$ is $\mbx$'s \emph{parent}.   The goal of the clustering procedure is to guarantee that:

\begin{lemma}[Credit Lemma]\label{lm:credit-informal} There is a way to group $\mathcal{K}$'s nodes into level-$i$ clusters such that after each node has used its credits to guarantee invariant (DC2) for its parent, it still has $\Omega(\epsilon \ce \ell_{i-1})$ leftover credits, except when:
\begin{itemize}[nolistsep,noitemsep]
\item[(i)] it has low degree, all of its neighbors also have low degree with non-zero leftover credits or
\item[(ii)] $\mathcal{K}$ only has $O(\frac{1}{\epsilon^2})$ edges.
\end{itemize}
\end{lemma}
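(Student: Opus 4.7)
The plan is to construct the grouping by greedily growing connected pieces of $\mathcal{K}$ and then apply a short amortization that reads directly off the two invariants. The driving observation is that each child $\mbx$ carries, by invariant (DC2) at level $i{-}1$, a baseline of at least $\ce\ell_{i-1}/2$ credits, while by (DC1) together with (DC2) at level $i$, a level-$i$ cluster needs at most $\ce\max(D,\ell_i/2) \le \ce g\ell_i = (g/\epsilon)\ce\ell_{i-1}$ credits. Hence a level-$i$ cluster that absorbs $k$ children has a total baseline surplus of at least $\ce\ell_{i-1}(k/2-g/\epsilon)$; taking $k\ge 2g/\epsilon+\Theta(1)$ makes this $\Omega(\ce\ell_{i-1})$, and spreading it across the $O(1/\epsilon)$ children yields the promised $\Omega(\epsilon\ce\ell_{i-1})$ per child.

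Concretely, I would process high-degree nodes first. For each $\mbc\in V_{\mathrm{high}}$ not yet grouped, form a tentative level-$i$ cluster absorbing $\mbc$ together with $2g/\epsilon+\Theta(1)$ of its $\mathcal{K}$-neighbors. Invariant (DC1) is preserved because every level-$i$ edge has weight at most $\ell_i$ and every $\epsilon$-cluster has diameter at most $g\ell_{i-1}=g\epsilon\ell_i$, so the resulting diameter is at most $(2+2g\epsilon)\ell_i\le g\ell_i$ once $g$ is large enough, and the baseline arithmetic above supplies the per-child leftover. For the low-degree part, grow connected pieces in the subgraph of $\mathcal{K}$ induced on the remaining nodes by a BFS that halts each piece as soon as either its child count reaches $2g/\epsilon+\Theta(1)$ or its spanner diameter reaches $g\ell_i$. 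Pieces that stop on the child-count criterion are handled by the same baseline argument; pieces that stop on the diameter criterion have $D=\Theta(\ell_i)$, so the children's \emph{diameter} credits beyond the $\ell_{i-1}/2$ floor cover $\ce D$ with the same $\Omega(\epsilon\ce\ell_{i-1})$ slack per child.

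The two exceptions identify precisely the configurations in which neither stopping criterion can fire. Exception (i) carves out a node all of whose neighbors are low-degree and have not yet spent their leftover: such a node lives in a small, growth-saturated local component, and its contribution can be absorbed by borrowing leftover from those neighbors inside the same cluster (this is why the ``non-zero leftover'' clause appears). Exception (ii) captures the case when $\mathcal{K}$ has only $O(1/\epsilon^2)$ edges in total, which by a handshake count forbids even a single high-degree cluster of the required size; in this case the associated spanner weight is already bounded by $O(\ell_i/\epsilon^2)$ and will be paid for separately in the weight accounting.

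The main obstacle, and the step I would devote most care to, is ruling out the dangerous middle regime: a cluster with $k<2g/\epsilon$ children whose diameter is strictly less than $\ell_i/2$, where the (DC2) floor $\ce\ell_i/2$ strictly exceeds the baseline $k\ce\ell_{i-1}/2$ and no surplus exists. The structural claim I expect to need is that the BFS growth halts on such a cluster only when the surrounding connected component of the low-degree subgraph is already exhausted, and that exhaustion of a connected low-degree piece with fewer than $2g/\epsilon$ nodes implies either that some vertex of the piece is in fact high-degree (putting us back in the first case) or that $\mathcal{K}$ has only $O(1/\epsilon^2)$ edges overall (triggering exception (ii)). Establishing this dichotomy cleanly is the heart of the argument; once it is in place, the per-child leftover bound follows from the arithmetic of the first paragraph.
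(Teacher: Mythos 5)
There is a genuine gap, and it sits exactly where you flagged the ``dangerous middle regime.'' Your amortization only works for clusters that absorb $\Omega(1/\epsilon)$ children: then the baseline credits $\ce\ell_{i-1}/2$ per child beat the requirement $\ce\max(d,\ell_i/2)\le \ce g\ell_i$. But in general most of $\mathcal{K}$ can consist of low-degree nodes in tiny connected components (e.g.\ $\Theta(k)$ disjoint pairs of $\epsilon$-clusters, each joined by a single level-$i$ edge). A BFS piece grown inside such a component exhausts it with $O(1)$ children and total child credit only $O(\ce\ell_{i-1})=O(\epsilon\,\ce\ell_i)$, which cannot meet the (DC2) floor $\ce\ell_i/2$; and your proposed dichotomy --- that exhaustion of a small low-degree component forces either a high-degree vertex or $|E(\mathcal{K})|=O(1/\epsilon^2)$ --- is false, since local sparsity of $\mathcal{K}$ says nothing about its global edge count (exception (ii) in the paper is a global degeneracy, arising only when no cluster at all is formed in the earlier phases and the whole cluster tree is a path with level-$i$ edges confined to its two affices). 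The diameter-halt case has the same defect: in a piece grown along $\mathcal{K}$-edges the diameter is made up mostly of level-$i$ edges, which carry no credit, so ``the children's diameter credits cover $\ce D$'' does not hold.

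The missing idea, which is the heart of the paper's proof, is that clusters with few children are not grown in $\mathcal{K}$ at all but along the cluster tree $\mathcal{T}$ whose vertices are the $\epsilon$-clusters and whose edges are $\MST$ edges: these $\MST$ edges still hold unspent credit proportional to their length, so any subtree or subpath of $\mathcal{T}$ of effective diameter $\approx\ell_i$ automatically owns $\ce\cdot\dm$ credits along its diameter path, which is what maintains (DC2) without needing many children. The leftover credit is then extracted by structural arguments specific to this tree-based construction: a branching node off the diameter path (Phase~2), the stretch threshold $(1+(6g+1)\epsilon)$ used to prune $\mathcal{K}$, which forces a shortcut through a level-$i$ edge inside a path and frees at least one $\epsilon$-cluster's credit (Phase~3, including the cyclic case), and for the remaining path clusters either borrowing leftover from an adjacent earlier-phase cluster or falling into cases (i)/(ii) (Phase~4). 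Your proposal contains none of this machinery, and without the $\MST$-edge credits and the stretch-based shortcut argument the (DC2) invariant cannot be maintained for the generic low-child-count clusters, so the proof does not go through.
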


Note that by invariant (DC2), each $\epsilon$-cluster has at least $\ce \ell_{i-1}/2$ credits. Thus, Lemma~\ref{lm:credit-informal} essentially says that roughly $\Omega(\epsilon)$ fraction of the credit of each node is leftover.  To focus on the main idea, let us put aside special cases (i) and (ii) in Lemma~\ref{lm:credit-informal}; we will come back to deal with them later.  For the moment,  we assume that each node $\mathcal{K}$ has at least $\Omega(\epsilon \ce \ell_{i-1})$ credits left.

There are two cases: if $\mbx$ has low degree (it is incident to at most $\frac{2g}{\epsilon} + 1 ~=~ O(\frac{g}{\epsilon})$ level-$i$ edges.), it can afford to buy all of these edges (and hence all the shortest paths corresponding to the edges added to $S_{i-1}$ in Step 2) when:
\begin{equation}\label{eq:ce-value-low}
\ce ~=~\Omega(g\epsilon^{-3})~ = ~ \Omega(\epsilon^{-3}),
\end{equation}
since the total weight $\mbx$ needs to pay for  is $O(g\epsilon^{-1}\ell_i)~=~O(g\epsilon^{-2} \ell_{i-1})$.  Thus, handling low degree vertices is an easy case; the hard case is to handle high degree vertices.

Recall that in Step 2, we call  the sparse spanner oracle on the terminals selected from high degree nodes.  Observe that there are $|V_{high}|$ such nodes; $|T'| = |V_{high}|$. The sparsity of the oracle guarantee that each high node of $\mathcal{K}$ must pay for at most:

\begin{equation}\label{eq:avg-high-deg-pay}
\frac{w(\mathcal{O}(T',2\ell_i))}{|V_{high}|} ~\leq~\frac{2 \wsp_{\mathcal{O}} |T'| \ell_i }{|V_{high}|}~=~  O(\wsp_{\mathcal{O}})\ell_i
\end{equation}
which is equivalent to the weight of at most $O(\wsp_{\mathcal{O}})$ level-$i$ edges since each level-$i$ edge has weight in range $(\ell_i/2,\ell_i]$.

By Lemma~\ref{lm:credit-informal}, each high degree node has at least $\Omega(\epsilon\ce\ell_{i-1})~=~\Omega(\epsilon^{2}\ce\ell_i)$ leftover credits.  By Equation~\ref{eq:avg-high-deg-pay}, this amount of credit is sufficient to pay for the weight of $\mathcal{O}(T',2\ell_i)$ when:
\begin{equation}\label{eq:ce-value-high}
\ce = \Omega\left(\wsp_{\mathcal{O}}\epsilon^{-2}\right)
\end{equation}
 By Equation~\ref{eq:ce-value-low} and Equation~\ref{eq:ce-value-high}, choosing $\ce = \Theta\left(\max(\wsp_{\mathcal{O}}\epsilon^{-2}, \epsilon^{-3})\right)$ suffices.  Inductively, all the spanner edges added at a level will be paid for at that level. Thus, the total weight of the spanner for $E_j$ is at most $O(\ce)w(\MST) ~=~ O\left(\max(\wsp_{\mathcal{O}}\epsilon^{-2}, \epsilon^{-3})\right)w(\st)$.

We now handle two special cases (i) and (ii) in Lemma~\ref{lm:credit-informal}. Let $\mbc$ be a node in case (i). Since all neighbors of $\mbc$  have low degree and have leftover credits, they have already paid for their incident level-$i$ edges. Thus, $\mbc$ do not need to pay for any incident level-$i$ edge and hence its credits can be taken entirely by its parent to maintain (DC2).

For case (ii), we simply do not pay for edges of $\mathcal{K}$ at level $i$ using clusters' credits. Instead, we pay for  these edges altogether after we finish the construction at level $I$. Recall that level-$i$ edges have weight at most $\ell_i$. Thus, summing over all levels, the total weight we pay is at most:

\begin{equation}
O(\epsilon^{-2})\sum_{i=1}^I O(\ell_i)~=~O(\epsilon^{-2})\ell_{\max}\sum_{i=0}^{\infty} \epsilon^i~=~O\left(\epsilon^{-2}w(\MST)\right)
\end{equation}
where $\ell_{max}$ is the maximum length of any edge in $M_T$, which cannot exceed $w(\MST)$. Thus, all these edges only contribute $O(\epsilon^{-2})$ additively to the final lightness bound.

We now focus on proving Lemma~\ref{lm:credit-informal}. The ideas sketched here are a combination of the ideas from two papers of Borradaile, Le and Wulff-Nilsen~\cite{BLW19,BLW17}. There are  some minor details specific to our presentation of the proof, mostly involving the calculation of diameter upper bounds  because our clusters are constructed via $\mathcal{K}$, whose edges do not belong to $S_i$. In retrospect, both papers of Borradaile, Le and Wulff-Nilsen~\cite{BLW19,BLW17} implicitly proved Lemma~\ref{lm:credit-informal} but technical details specific to their problems obfuscate a clean statement. By phrasing their techniques in a single lemma, we believe that it would be of independent interest.

Herein, we only present the high level ideas of the proof of Lemma~\ref{lm:credit-informal} and focus on revealing the intuition behind the special cases. For readers who are interested in seeing all technical details, we provide a complete proof in Appendix~\ref{app:cluster-const}.

 The cluster construction is divided into four phases. It can be seen by carefully following the construction that the diameter of level-$i$ clusters are bounded by $g\ell_{i}$ for some sufficiently large $g$ and small $\epsilon$. Hence, for the rest of the discussion, we assume that invariant (DC1) is maintained correctly. We now focus on maintaining invariant (DC2) and guaranteeing the credit lower bound as stated in Lemma~\ref{lm:credit-informal}. The following observation allows us to simplify much of the proof.
 
 \begin{observation}\label{obs:rich} If a level-$i$ cluster has at least $\frac{2g}{\epsilon}+1$ children, it can maintain invariant (DC2) by taking the credit of its children, while each child  still has at least $\Omega(\epsilon \ce \ell_{i-1})$ leftover credits. 
\end{observation}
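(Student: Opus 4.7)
The plan is to argue the observation by a direct accounting of credits at levels $i-1$ and $i$. Fix a level-$i$ cluster $C$ with $k \geq \frac{2g}{\epsilon} + 1$ children. By invariant (DC2) applied at level $i-1$, every child of $C$ holds at least $\ce\,\ell_{i-1}/2$ credits. By invariant (DC1) applied at level $i$, $\dm(C) \leq g\ell_i$, so maintaining (DC2) for $C$ requires collecting only $\ce \max(\dm(C), \ell_i/2) \leq \ce\, g\ell_i$ credits in total.

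The extraction scheme I would use is uniform: withdraw a flat amount $\alpha = \bigl(\tfrac{1}{2} - c\epsilon\bigr)\ce\,\ell_{i-1}$ from each child, where $c$ is a small absolute constant to be tuned. The leftover requirement is immediate, since each child's residue is at least
\begin{equation*}
\ce\,\ell_{i-1}/2 - \alpha \;=\; c\epsilon\,\ce\,\ell_{i-1} \;=\; \Omega(\epsilon\,\ce\,\ell_{i-1}).
\end{equation*}
For the coverage requirement, using the scaling relation $\ell_{i-1} = \epsilon\,\ell_i$, the total skimmed credit is at least
\begin{equation*}
k \cdot \alpha \;\geq\; \left(\tfrac{2g}{\epsilon} + 1\right)\bigl(\tfrac{1}{2} - c\epsilon\bigr)\ce\,\ell_{i-1} \;=\; \ce\bigl(g\ell_i + \tfrac{1}{2}\ell_{i-1}\bigr)(1 - 2c\epsilon),
\end{equation*}
which exceeds $\ce\, g\ell_i$ for any sufficiently small choice of $c$ and any sufficiently small $\epsilon$. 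This is precisely the amount required to satisfy (DC2) at $C$.

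There is no genuine obstacle here; the observation is fundamentally a bookkeeping statement that sets up the ``rich cluster'' case of Lemma~\ref{lm:credit-informal}. The only minor subtlety worth flagging is to verify that the $\ce\,\ell_{i-1}/2$ lower bound guaranteed to each child by (DC2) at level $i-1$ is indeed still available at the moment the level-$i$ clusters are built, i.e.\ that these credits have not already been consumed by earlier steps. This follows from the ordering of operations in the overall construction: the credits of level-$(i-1)$ clusters are spent paying for spanner edges added in the transition from $S_{i-1}$ to $S_i$, which takes place \emph{after} the level-$i$ cluster family $\mc_i$ is built, so at the moment of clustering the full $\epsilon$-cluster credit budget is still intact.
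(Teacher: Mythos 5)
Your proof is correct and uses essentially the same accounting as the paper: the paper also observes that $\frac{2g}{\epsilon}$ children, each holding $\ce\ell_{i-1}/2$ credits by (DC2), together supply the $\ce\,g\ell_i \geq \ce\max(\dm(C),\ell_i/2)$ needed at level $i$, with the surplus from the extra child providing the $\Omega(\epsilon\,\ce\,\ell_{i-1})$ leftover per node. Your uniform skim of $\bigl(\tfrac12-c\epsilon\bigr)\ce\,\ell_{i-1}$ per child is just a slightly different bookkeeping of the same slack (the paper instead drains $\frac{2g}{\epsilon}$ children entirely and redistributes one extra child's credit), so there is nothing further to add.
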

\begin{proof}
Take the credit of any $\frac{2g}{\epsilon}$ children of $C$ to maintain invariant (DC2). This suffices because by invariant (DC2) for level $i-1$, $\frac{2g}{\epsilon}$ children has at least
\[\frac{2g}{\epsilon} \ce \ell_{i-1}/2~=~g\ce \ell_{i} ~\geq~\ce\max(\dm(C),\ell_i/2)\] credits since $g > 1$ and $\dm(C) \leq g\ell_{i}$ by invariant (DC1). We then can take the credit of any other node in $C$ to redistribute to the children whose credits were taken by $C$. The redistribution guarantees that each node has at least $\Omega(\ce\epsilon \ell_{i-1}/g) = \Omega(\ce\epsilon \ell_{i-1})$ credits. Remaining children of $C$ can keep their own credits of amount $\Omega(\ce \ell_{i-1})$ by invariant (DC2) as leftover. 
\end{proof}

Most of the technical bulk is devoted to show that for every level-$i$ cluster formed in the first three phases, say $C$, after maintaining invariant (DC2) by taking their children credits, has \emph{at least one child} whose credits remain intact. This suffices to imply Lemma~\ref{lm:credit-informal} since by invariant (DC2) for level-$(i-1)$, the child has at least $\Omega(\ce\ell_{i-1})$ credits and by redistributing this credit to all the children of $C$, each has at least $\Omega(\ce\epsilon \ell_{i-1})$ leftover credits. (Here we assume that $\mathcal{C}$ has at most $\frac{2g}{\epsilon}$ children by Observation~\ref{obs:rich}.)

Now we go through intuition of each phase of the cluster construction.

\paragraph{Phase 1} In this phase, every constructed cluster contains a high degree node and all if its neighbors. This guarantees that Phase 1 clusters have at least $\frac{2g}{\epsilon}+1$ children each.  Thus, by Observation~\ref{obs:rich}, Lemma~\ref{lm:credit-informal} holds for the nodes involved in this phase.  Furthermore, the construction guarantees that any node of low degree adjacent to a high degree node is also included in a Phase 1 cluster. This implies that when case (i) in Lemma~\ref{lm:credit-informal} happens to a node, all the neighbors  have low degree.

In the following phases, the construction is based on a \emph{cluster tree} $\mathcal{T}$, whose vertices are $\epsilon$-clusters and edges are $\MST$ edges connecting the $\epsilon$-clusters. A crucial property of $\mathcal{T}$ is that the credit of edges of $\mathcal{T}$ has not been taken by clusters in lower levels.

\paragraph{Phase 2} In this phase, each cluster is a sub-tree (of $\epsilon$-clusters), say $C$, of $\mathcal{T}$. $C$ is guaranteed by the construction to have at least one \emph{branching node}, i.e, a node with at least three neighbors in $C$.  Since each $\epsilon$-cluster has at least $\ce d$ credits where $d$ is its diameter by invariant (DC2), and each $\MST$ edge has a credit at least $\ce$ times its length, the credit of $\epsilon$-clusters and $\MST$ edges of the diameter path, say $\mathcal{D}$, of $\mathcal{T}$ is sufficient to maintain invariant (DC2) for $C$. Since $C$ has a branching node, say $\mbx$, at least one of the neighbors of $\mbx$, say $\mby$, is not in $\mathcal{D}$.  Thus, $\mby$'s credit will not be taken by $C$, and this is the node we are looking for. As discussed above, we try to show that there is at least one node in each cluster whose credit is not taken by its parent.

\paragraph{Phase 3} There are three subcases in this phase (see Figure~\ref{fig:P3N} in Appendix~\ref{app:cluster-const}) where in the first two subcases, a level-$i$ cluster consists of two subpaths of $\mathcal{T}$ connected by a level-$i$ edge. The two paths have total diameter roughly $4\ell_i$, which is equivalent to having at least $4\ce\ell_i$ credits.  A remarkable property of the cluster is that it has diameter at most $3\ell_i$, thus only $3\ce\ell_i$ credits will be taken by the cluster, leaving at least $\ce\ell_i~=~\ce\frac{\ell_{i-1}}{\epsilon}$ credits as leftover, which is more than the amount of credit possessed by any $\epsilon$-cluster when $\epsilon$ is  smaller than $1/g$.

 The hardest case is  the third case, where a cluster, say $C$, is a subpath, say $\mathcal{P}$, of $\mathcal{T}$ and there is a level-$i$ edge $e$ connecting two verttices, say $\mbx$ and $\mby$,  of the subpath. Recall that when we construct $\mathcal{K}$, we remove from $\mathcal{K}$ every edge whose stretch in $S_{i-1}$ is at least $(1+ (6g+1)\epsilon)$. Thus, the presence of $e$ in $C$ implies that the weight of the subpath $\mathcal{P}[\mbx, \mby]$ is at least $(1+ (6g+1)\epsilon)$ times longer than $w(e)$. Hence, we can deduce that the diameter path of $C$ must go through $e$, if it goes through both $\mbx$ and $\mby$, and that the credit of the path   $\mathcal{P}(\mbx, \mby)$ must be at least $\ce w(e) + \ce g\ell_{i-1}$. If we assign $\ce w(e)$  credits to $e$ and allow $C$ to take all the credit of the edges and vertices in the diameter path, we still have at least $\ce g \ell_{i-1}$ credits left, which is more than the amount of credit owned by an $\epsilon$-cluster.

 \paragraph{Phase 4} In this phase, clusters are subpaths of $\mathcal{T}$, thus they can maintain invariant (DC2) by taking all credits of the children and $\MST$ edges on the paths. However, there could be no leftover credits. There are two ideas to resolve this issue: (a) show that all of the level-$i$ edges incident to a Phase 4 cluster are also incident to clusters formed in previous Phases or (b) let a Phase 4 cluster steal the leftover credits of  the nodes in the nearest cluster formed in the first three phases. If idea (a) can be realized, then the nodes in the Phase 4 cluster fall into case (i) of Lemma~\ref{lm:credit-informal}.  To realize idea (b), we argue that the leftover credit of each node is stolen at most once by nodes in Phase 4 clusters. Furthermore, we show that, each node shares its leftover credits, of value at least $\Omega(\epsilon \ce \ell_{i-1})$, to at most $O(1)$ other nodes. Thus, each of them gets at least $\Omega(\epsilon \ce\ell_{i-1})$ credits as desired. 
 
An exception  is when there are no clusters formed in Phase 1, 2 or 3. Thus,we cannot implement both ideas (a) and (b). In this case, we show that the tree $\mathcal{T}$ is highly structural: it is a path with level-$i$ edges connecting its affices only. The special structure of $\mathcal{T}$ allows us to show that $\mathcal{K}$ only has $O(\frac{1}{\epsilon^2})$ edges. This falls into case (ii) in Lemma~\ref{lm:credit-informal}, thereby completing the proof.  

\section{Strongly sparse spanner oracles for metric spaces}\label{sec:spare-oracle-Euc}

In this section, we show that various metric spaces have strongly sparse spanner oracles. 

\subsection{Euclidean metrics}

\begin{lemma}\label{lm:oracle-euc} Any point set in the Euclidean space of dimension $d$ has a spanner oracle with strong sparsity $O(\epsilon^{1-d})$.
\end{lemma}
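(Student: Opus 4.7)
My plan is to exploit the fact that Euclidean space of dimension $d$ is closed under taking sub-metrics: the terminal set $T$ is itself a point set in $\mathbb{R}^d$, so I can invoke any off-the-shelf Euclidean $(1+\epsilon)$-spanner construction directly on $T$. Concretely, given inputs $(T,\ell)$, I will first build a $(1+\epsilon)$-spanner $H$ on $T$ using, e.g., a $\Theta$-graph with $\Theta(\epsilon^{1-d})$ cones per vertex, or equivalently a WSPD-based spanner. Such constructions are known to produce at most $O(\epsilon^{1-d})|T|$ edges while preserving every pairwise Euclidean distance within a factor of $1+\epsilon$.

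Next, I will remove from $H$ every edge of Euclidean length exceeding $2\ell$, and then further prune any remaining redundant edges until the resulting subgraph $\mathcal{O}(T,\ell)$ is minimal in the sense required by Definition~\ref{def:spanner-oracle}. Pruning can only decrease the edge count, so the bound $|E(\mathcal{O}(T,\ell))| = O(\epsilon^{1-d})|T|$ is preserved, yielding strong sparsity $\ssp_{\mathcal{O}} = O(\epsilon^{1-d})$ as claimed. The verification of stretch goes as follows: for any terminals $x\neq y$ with $\ell/8\leq d_G(x,y)\leq \ell$, $H$ contains an $x$-to-$y$ path $P$ of total Euclidean length at most $(1+\epsilon)d_G(x,y)\leq (1+\epsilon)\ell < 2\ell$ (assuming $\epsilon<1$). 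Since each individual edge of $P$ is upper-bounded by the whole path length, every edge of $P$ has length $<2\ell$ and thus survives the truncation step intact, so $d_{\mathcal{O}(T,\ell)}(x,y)\leq (1+\epsilon)d_G(x,y)$.

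The main (small) obstacle is reconciling the truncation with the minimality requirement of the oracle: one must ensure that the final minimality-pruning does not destroy the approximate $x$-$y$ paths exhibited above. This is a soft point: minimality is enforced by greedily deleting only those edges whose removal preserves the required stretch inequality for all terminal pairs in $[\ell/8,\ell]$, which by construction cannot eliminate a path we have just certified. Everything else is a direct appeal to classical Euclidean-spanner sparsity bounds, so no new geometric machinery is needed for this lemma.
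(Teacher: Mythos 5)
Your proposal is correct and follows essentially the same route as the paper: invoke a classical Euclidean $(1+\epsilon)$-spanner construction (e.g., $\Theta$-graph) on $T$ to get $O(\epsilon^{1-d})|T|$ edges, truncate edges longer than $2\ell$, and observe that any certified $(1+\epsilon)$-approximate path between terminals at distance in $[\ell/8,\ell]$ has total length below $2\ell$ and hence survives the truncation. Your extra remark about reconciling truncation with the minimality requirement is a harmless refinement that the paper leaves implicit.
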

\begin{proof}
It is well known that any point set in Euclidean space have a spanner with sparsity $O(\epsilon^{1-d})$ that can be constructed by  $\Theta$-graph~\cite{RS91,Keil88,KG92,ADDJS93}, Yao graph~\cite{Yao82}, or greedy algorithms~\cite{CDNS92,ADDJS93}. Suppose that $T$ and $\ell$ are given as an input to the oracle. We call a sparse spanner construction on $T$ to obtain a spanner $S$. We then remove every edge of length at least $2\ell$ from $S$ to obtain $S'$ and return $S'$ as the output of the oracle. 

Observe that $|E(S')| \leq |E(S)| = O(\epsilon^{1-d})|T|$. Thus, $S'$ has strong sparsity $O(\epsilon^{1-d})$. The argument for distance preserving property  is similar to that of Lemma~\ref{lm:minor-to-oracle}.
\end{proof}

Lemma~\ref{lm:oracle-euc} and Theorem~\ref{thm:reduction} implies that the Euclidean metric of constant dimension $d$ has a spanner of lightness $\tilde{O}(\epsilon^{-(d+1)})$ when $d\geq 1$. This bound is weaker than the optimal bound $\tilde{O}(\epsilon^{-d})$ by just a factor of $\frac{1}{\epsilon}$ obtained recently (with a rather complicated proof) by Solomon and this author~\cite{LS19}.

\subsection{Doubling metrics}
 
 Before stating our results, let us remind the reader of the formal definition of doubling metrics. Given a metric space $(X,\delta)$ where $\delta$ is the distance function, the doubling dimension of $(X,\delta)$ is the smallest value $d$ such that every ball $B$ in the metric space can be covered by at most $2^d$ balls of half the radius of $B$. This notion was introduced by Gupta, Krauthgamer and Lee~\cite{GKL03}, and was inspired by  Assouad~\cite{Assouad83}.

 It is well known that doubling metrics of constant dimension $d$ have a spanner with sparsity $\tilde{O}(\epsilon^{-d})$~\cite{CGMZ16,Smid09}. Since a sub-metric of a doubling metric of dimension $d$ has dimension at most $2d$. If we apply the same argument in the proof of Lemma~\ref{lm:oracle-euc}, we would obtain an oracle with sparsity $\tilde{O}(\epsilon^{-2d})$. To obtain the sparsity bound $\tilde{O}(\epsilon^{-d})$, we use a different argument, which crucially exploits the fact that we only preserve distances in range $[\frac{\ell}{8},\ell]$.

 Let $Y\subseteq X$ be a subset of points in a doubling metric $(X,\delta)$. A  set $N\subseteq Y$ is an \emph{$r$-net} of $Y$ if (a) for every point $y \in Y$, there is a point $p \in N$ such that $\delta(p,y) \leq r$ and (b) for every two distinct points $p,q \in N$, $\delta(p,q) > r$. An $r$-net of a given subset can be constructed greedily in polynomial time. We will use the following well-known packing property of doubling metrics~\cite{GKL03}.
 
 \begin{lemma}[Packing property]\label{lm:packing} Let $(X,\delta)$ be a doubling metric of dimension $d$. If a set of point $Y\subseteq X$  is contained in a ball of radius $R$ and $\delta(x,y) > r$ for every $x\not=y \in Y$, then $|Y| \leq (\frac{4R}{r})^d$.
\end{lemma}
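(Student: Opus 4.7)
The plan is to prove the packing bound by iterating the doubling property until the covering balls are small enough to contain at most one point of $Y$. First I would unpack the definition: since $(X,\delta)$ has doubling dimension $d$, any ball of radius $\rho$ can be covered by at most $2^d$ balls of radius $\rho/2$. By applying this fact $k$ times, a ball of radius $R$ can be covered by at most $2^{dk}$ balls of radius $R/2^k$.

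Next I would choose $k$ to be the smallest nonnegative integer with $R/2^k \leq r/2$, that is, $k = \lceil \log_2(2R/r)\rceil$. The point of this choice is that whenever a ball $B$ has radius at most $r/2$, any two points inside $B$ are at distance at most $r$ by the triangle inequality, so $B$ can contain at most one point of $Y$ (otherwise two distinct points $x,y \in Y$ in $B$ would satisfy $\delta(x,y) \leq r$, contradicting the hypothesis).

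Combining these two observations, the ball of radius $R$ containing $Y$ is covered by at most $2^{dk}$ balls each holding at most one point of $Y$, so $|Y| \leq 2^{dk}$. Finally I would bound $2^k$: since $k \leq \log_2(2R/r) + 1$, we get
\[
2^k \;\leq\; 2 \cdot \frac{2R}{r} \;=\; \frac{4R}{r},
\]
and therefore $|Y| \leq 2^{dk} = (2^k)^d \leq (4R/r)^d$, as claimed.

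There is essentially no obstacle here; the argument is a routine double-iteration of the doubling definition combined with a pigeonhole on the covering balls. The only minor care needed is to handle the ceiling when bounding $2^k$, which is why the constant is $4$ rather than $2$ in the final bound.
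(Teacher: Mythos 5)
Your argument is correct and is the standard iterated-doubling proof of the packing bound: apply the doubling property $k$ times to cover the ball of radius $R$ by at most $2^{dk}$ balls of radius $R/2^k$, choose $k$ so that these balls have radius at most $r/2$ and hence (by the triangle inequality) contain at most one point of $Y$ each, and then bound $2^k \le 4R/r$. The paper itself does not prove this lemma; it simply cites it to Gupta, Krauthgamer and Lee~\cite{GKL03} as a well-known packing property, so there is no in-paper proof to compare against. One very minor pedantic remark, which affects the statement as much as your proof: when $R < r/4$ the right-hand side $(4R/r)^d$ drops below $1$, while a one-point set $Y$ vacuously satisfies the separation hypothesis, so the inequality as literally stated fails in that degenerate regime; both the cited source and the applications in this paper implicitly work in the regime $R \ge r$, where your bound $2^k \le 4R/r$ is valid and the argument goes through cleanly.
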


\begin{lemma}\label{lm:oracle-dd}A metric of constant doubling dimension $d$ has a spanner oracle with strong sparsity $O(\epsilon^{-d})$ when $\epsilon< 1$.
\end{lemma}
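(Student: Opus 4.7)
The plan is to build the oracle around a single $r$-net at scale $r = \Theta(\epsilon\ell)$, exploiting the fact that we only need to preserve distances in the range $[\ell/8, \ell]$. Without this range restriction, a naive use of a doubling-metric spanner on the sub-metric induced by $T$ would yield sparsity only $\tilde{O}(\epsilon^{-2d})$, since a sub-metric of a doubling metric can double the dimension, or would require a multi-scale net hierarchy that loses an extra factor. A single scale will do.

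Given inputs $T$ and $\ell$, I would greedily compute an $r$-net $N \subseteq T$ with $r = \epsilon\ell / 32$, and for every $t \in T$ pick a closest net point $\phi(t) \in N$, so that $\delta(t, \phi(t)) \leq r$. The oracle output $S$ has vertex set $T$ and edge set consisting of: (a) one edge $(t, \phi(t))$ for each $t \in T \setminus N$; and (b) an edge $(p, q)$ for every pair $p, q \in N$ with $\delta(p, q) \leq \ell + 2r$. Finally, prune redundant edges to make $S$ minimal; this only decreases the edge count.

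For the stretch bound, fix $x \neq y \in T$ with $\delta(x, y) \in [\ell/8, \ell]$. By the triangle inequality $\delta(\phi(x), \phi(y)) \leq \delta(x,y) + 2r \leq \ell + 2r$, so edge $(\phi(x), \phi(y))$ lies in $S$. Composing with the net-assignment edges (trivially if $x$ or $y$ already lies in $N$), the path $x \to \phi(x) \to \phi(y) \to y$ has length at most $\delta(x,y) + 4r \leq \delta(x,y) + \epsilon\ell/8 \leq (1+\epsilon)\delta(x,y)$, where the last step uses $\delta(x,y) \geq \ell/8$.

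For the edge count, type (a) contributes at most $|T|$ edges. For type (b), Lemma~\ref{lm:packing} applied to the ball of radius $\ell + 2r$ around any $p \in N$, together with the net separation $\delta(p', q') > r$ for distinct $p', q' \in N$, bounds the number of type (b) neighbors of $p$ by $(4(\ell + 2r)/r)^d = O(\epsilon^{-d})$. Hence the number of type (b) edges is $O(\epsilon^{-d}|N|) \leq O(\epsilon^{-d}|T|)$, giving $|E(S)|/|T| = O(\epsilon^{-d})$. The main obstacle is really conceptual: recognizing that the restricted distance range lets us escape the sub-metric blowup and collapse the construction to a single scale, after which both the stretch analysis and the packing-based edge count are routine.
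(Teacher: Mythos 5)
Your construction is essentially the paper's own proof: a single $\Theta(\epsilon\ell)$-net of $T$, one edge from each terminal to its nearest net point, net-to-net edges up to a threshold slightly above $\ell$, the packing property (Lemma~\ref{lm:packing}) for the $O(\epsilon^{-d})$ degree bound, and the observation that the restricted range $[\ell/8,\ell]$ avoids the sub-metric dimension blowup. The only differences are cosmetic (constants, omission of the paper's lower cutoff $\ell/16$ on net-net edges, and an additive rather than ratio-based stretch calculation), so the argument is correct and matches the paper's approach.
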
 
\begin{proof}
Let $T \subseteq X$ and $\ell$ be inputs given to the oracle.  Let $N$ be an $\frac{\epsilon \ell}{96}$-net of $T$. We construct a set of edges $E_S$ of the spanner in two steps. (Step 1) for every two distinct points $p\not= q \in N$, we add an edge between $p,q$ if $\ell/16 \leq \delta(p,q) \leq 2\ell$ to $E_S$. (Step 2) for each point $t \in T\setminus N$, we add an edge from $t$ to a nearest point in $N$ to $E_S$. We finally return the graph $S(T,E_S)$ as the output.

To bound the sparsity of $S(T,E_S)$, we observe that for each point $p \in N$, by Lemma~\ref{lm:packing}, the number edges incident to $p$ added to $E_S$ in Step 1 is bounded by:
\begin{equation}\label{eq:deg-p-dd}
\left( \frac{2\ell}{\epsilon \ell/96} \right)^d = O(\epsilon^{-d})
\end{equation}
Thus, the size of $E_S$ after Step 1 is $O(\epsilon^{-d})|N|$. In step 2, we add one edge per point in $T\setminus N$. Thus, the size of $E_S$ after Step 2 is:
\begin{equation*}
|E_S| \leq O(\epsilon^{-d})|N| + |T\setminus N| = O(\epsilon^{-d}) |T|
\end{equation*}  
Therefore, the  strong sparsity of  the oracle is $O(\epsilon^{-d})$.

It remains to bound the stretch of the oracle. Let $p,q$ be any two distinct points in $T$ where $\ell/8 \leq \delta(p,q) \leq \ell$. Let $x$ and $y$ be two points in $N$  closest to $p$ and $q$, respectively. By triangle inequality, $\delta(x,y) \geq d(p,q) - 2\frac{\epsilon\ell}{96}~\geq~\ell/8 - \frac{\epsilon \ell}{48}~\geq~\ell/16$ and $\delta(x,y) ~\leq ~ d(p,q) + 2\frac{\epsilon\ell}{96}~\leq~2\ell$ when $\epsilon < 1$. Thus, there is an edge between $x$ and $y$ in $E_S$ by the construction in Step 1. Also by the triangle inequality, the stretch of the shortest path between $p$ and $q$ in $S$ is at most:
\begin{equation}
\frac{\delta(x,y) + 2\frac{\epsilon \ell}{96}}{\delta(x,y) - 2\frac{\epsilon \ell}{96}} \leq \frac{\frac{\ell}{16}+ \frac{\epsilon \ell}{48}}{\frac{\ell}{16} - \frac{\epsilon \ell}{48}} = \frac{1 + \epsilon/3}{1-\epsilon/3} \leq 1 +\epsilon
\end{equation} 
when $\epsilon < 1$. The first inequality is due to $\delta(x,y) \geq \ell/16$. 
\end{proof}

Since strong sparsity implies weak sparsity (Equation~\ref{eq:wsp-ssp}), Theorem~\ref{thm:dd-metric} follows directly from  Lemma~\ref{lm:oracle-dd} and Theorem~\ref{thm:reduction}.

 \subsection{Metrics of bounded correlation dimension}
 
Let $(X,\delta)$ be a metric space.  Let $B(x,r) = \{y \in X| \delta(x,y) \leq r\}$ be the ball of radius $r$ centered at $x \in X$. Given a subset $Y \subseteq X$, let $B_{Y}(x,r) = B(x,r)\cap Y$.  A subset $N\subseteq X$ is a \emph{net} of $X$ if it is an $\zeta$-net for some $\zeta > 0$. The \emph{correlation dimension} of a metric space $(X,\delta)$ is the smallest $d$ such that:
\begin{equation}
\sum_{x\in N} |B_N(x,2r)| \leq 2^d\sum_{x\in N}|B_N(x,r)|
\end{equation}
  for any net $N \subseteq X$ and any positive number $r$.
  
Correlation dimension was introduced by Chan and Gupta~\cite{CG12} to capture global growth rate of a metric, as opposed to doubling dimension which captures local growth rate.  Chan and Gupta showed that if $(X,\delta)$ has doubling dimension $k$, then it has  correlation dimension at most $9k$ (Theorem 1.1 in~\cite{CG12}). Intuitively, this is because slow local growth implies slow global growth. However, the converse statement does not hold. 
 
Unlike Euclidean or doubling metrics, metrics of bounded correlation dimension are not closed under taking sub-metrics; a sub-metric of a metric with bounded correlation dimension can have arbitrarily large dimension (see the discussion on this property in the paragraph below Theorem 1.1 in the paper of Chan and Gupta~\cite{CG12}). Remarkably, Chan and Gupta showed that metrics of bounded correlation dimension still have $(1+\epsilon)$-spanners with sublinear sparsity.

 \begin{theorem}[Theorem 1.4 in~\cite{CG12}] An $n$-point metric of constant correlation dimension $d$ has a spanner with sparsity $\epsilon^{-O(d)} \sqrt{n}$.
 \end{theorem}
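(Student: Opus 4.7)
The plan is to build the spanner in two coordinated layers: a coarse ``backbone'' that handles long-range distances cheaply, and a fine local layer that handles short-range distances precisely. The square-root factor in the sparsity is a hint that an optimal balance point between the two layers involves splitting the point set at size roughly $\sqrt{n}$, so the construction should be tuned so that neither layer dominates.

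Concretely, I would first fix a geometric scale $r$ (later unioning over $O(\log \Delta)$ scales, where $\Delta$ is the aspect ratio) and take a $(\epsilon r / c)$-net $N_r$ of the point set for a small constant $c$. For each non-net point I attach an edge to its nearest net point; this contributes $n$ edges per scale and is harmless after summing. The work is in connecting net points that lie at distance in $[r, 2r]$. Iterating the correlation-dimension inequality $\log(c/\epsilon)$ times, starting from radius $2r$ and halving down to the net scale $\epsilon r / c$, yields
\begin{equation}
\sum_{x \in N_r} |B_{N_r}(x, 2r)| \;\leq\; \epsilon^{-O(d)} \sum_{x \in N_r} |B_{N_r}(x, \epsilon r/c)| \;=\; \epsilon^{-O(d)}\, |N_r|,
\end{equation}
since each $\epsilon r/c$-ball in a net of this scale contains exactly one net point. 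Hence the total number of candidate short net-to-net edges at scale $r$ is $\epsilon^{-O(d)} |N_r|$, which averages $\epsilon^{-O(d)}$ edges per net point but says nothing pointwise.

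The core idea for the $\sqrt{n}$ bound is now a pigeonhole/Cauchy--Schwarz style split. I would declare a net point to be \emph{heavy} if it lies in at least $\sqrt{n}$ of the balls counted above, and \emph{light} otherwise. The global sum $\epsilon^{-O(d)} |N_r|$ forces the number of heavy points to be at most $\epsilon^{-O(d)} \sqrt{n}$; for the light points I add all their incident candidate edges, which is only $\epsilon^{-O(d)} \sqrt{n}$ edges per vertex and therefore acceptable. For the heavy points, rather than using neighborhood edges I route them through a separately built sparse hub structure on at most $\epsilon^{-O(d)}\sqrt{n}$ hubs --- either the heavy points themselves or a $\sqrt{n}$-size ``super-net'' --- and connect each light net point to a constant number of nearby hubs. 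Summing over $\log \Delta$ scales (and then using standard reductions to remove the dependence on $\Delta$ via net-tree contractions) gives the claimed $\epsilon^{-O(d)} \sqrt{n}$ sparsity, and the stretch follows because every short pair is certified within its own scale and every long pair is certified via the hub layer with only $(1+\epsilon)$ slack.

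The main obstacle I expect is the stretch analysis for the heavy/light interface: a $(1+\epsilon)$ bound must hold for pairs where one endpoint is near a heavy net point and the other is near a light one, which forces careful coordination between the radius of the local connections and the hub-connection distances, and a clean account of how edge errors compose across at most $O(\log(1/\epsilon))$ hops through the hub layer. A secondary obstacle is ensuring that the correlation-dimension inequality, which is really a statement about one fixed net, can be applied uniformly across all scales --- here I would use a single hierarchical net system and appeal to the fact that every level of such a hierarchy is itself a valid net on which the inequality applies.
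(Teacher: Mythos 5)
First, note that this paper does not prove the quoted statement at all: it is imported verbatim from Chan and Gupta~\cite{CG12} (their Theorem~1.4), and the closest argument actually carried out in this paper is the single-scale oracle of Lemma~\ref{lm:oracle-cd}, which gets its $\sqrt{n}$ from the \emph{pointwise} packing bound of Lemma~\ref{lm:packing-cd} ($|Y|\le (4R/r)^{d/2}\sqrt{|N|}$ for any subset $Y$ of a net lying in a ball), applied at one distance scale $\ell$ so that no summation over scales is needed. Your proposal takes a genuinely different route — iterating the defining inequality to get an \emph{average} degree bound of $\epsilon^{-O(d)}$ per net point at each scale, then a heavy/light split at threshold $\sqrt{n}$ — and this is where the gap lies. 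Per scale your averaging argument already bounds the number of candidate net-to-net edges by $\epsilon^{-O(d)}|N_r|\le \epsilon^{-O(d)}n$, so the heavy/light/hub machinery is not doing the work you think it is (and, as stated, routing a heavy point "through a hub" has no $(1+\epsilon)$ guarantee, since nothing forces a hub to lie within $O(\epsilon r)$ of a heavy point). The real difficulty is the sum over scales: a hierarchical net has up to $\Theta(\log\Delta)$ levels, each possibly of size $\Theta(n)$, so your per-level bounds only give $\epsilon^{-O(d)}n\log\Delta$ edges, and $\log\Delta$ is not bounded by any function of $n$. The sentence "standard reductions to remove the dependence on $\Delta$ via net-tree contractions" is exactly the missing proof: those reductions (compressed net trees, WSPD-style charging) rely on \emph{pointwise} packing bounds, or on applying the dimension hypothesis to subsets/clusters, and correlation dimension gives you neither — the inequality only holds for nets of the whole space (the paper stresses that sub-metrics can have dimension $\Omega(\log n)$), and your iterated inequality controls only a global average, which cannot be localized to the clusters of a compressed tree where the charging happens.

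Concretely, then, two things are missing or broken: (i) no argument bounds the total number of edges over all scales by anything aspect-ratio-free, let alone by $\epsilon^{-O(d)}n^{3/2}$, and the tools you invoke to do so are not available under correlation dimension; and (ii) the $\sqrt{n}$ in your write-up enters as an arbitrary heavy/light threshold rather than being forced by the geometry, whereas in \cite{CG12} (and in Lemma~\ref{lm:oracle-cd} here) it arises from the pointwise packing lemma, which bounds \emph{every} net point's degree by $\epsilon^{-O(d)}\sqrt{|N|}$ and thereby supports the standard spanner charging arguments directly. A repaired proof along the lines the literature uses would replace your averaged counting by that pointwise lemma (or re-derive it via Cauchy--Schwarz from the correlation inequality) and then run a conventional hierarchical construction whose cross-scale accounting is done per point, not per scale.
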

 
However, it is unclear whether a (subset) spanner with sublinear lightness exists. Traditional techniques~\cite{CW16,BLW19,BLW17,LS19,FS16} rely on the closure of the input metric under taking subgaphs or sub-metrics of the input. By looking at the problem through the lens of sparse spanner oracles, we can show that light (subset) spanners exist.

By Theorem~\ref{thm:reduction}, it suffices to construct a spanner oracle with weak sparsity $O(\epsilon^{-(d/2 + 3)}\sqrt{n})$. Indeed, a strongly sparse spanner oracle with the same sparsity bound exists. The construction is similar to the construction for doubling metrics in Lemma~\ref{lm:oracle-dd}. It relies on the following packing property.

\begin{lemma}[Lemma 2.2 in~\cite{CG12}]\label{lm:packing-cd} Given a metric $(X,\delta)$ of constant correlation dimension $d$. Suppose that $N$ is an $r$-net of $X$ and $Y\subseteq N$ is contained in a ball of radius at most $R$, then:
\begin{equation*}
|Y| \leq \left(\frac{4R}{r}\right)^{d/2} \sqrt{|N|}
\end{equation*}
\end{lemma}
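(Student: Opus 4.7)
The plan is to exploit the defining inequality of correlation dimension iteratively, then combine the resulting global count with a simple double counting over $Y$. The trick to getting the $\sqrt{|N|}$ factor (rather than a purely local doubling bound) is Cauchy--Schwarz--style symmetry between summing over $Y$ and summing over all of $N$.

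First I would calibrate a base case by using that $N$ is an $r$-net: distinct points of $N$ are at pairwise distance strictly greater than $r$, so for any $r' < r$ we have $B_N(x,r')=\{x\}$ and hence $\sum_{x\in N}|B_N(x,r')| = |N|$. Taking $r' = r/2$ and applying the correlation-dimension doubling inequality $k$ times yields
$$\sum_{x\in N}|B_N(x,\, 2^{k-1}r)| \;\le\; 2^{dk}\,|N|.$$
I would then pick the smallest $k$ with $2^{k-1}r \ge 2R$, namely $k = \lceil \log_2(4R/r)\rceil$, losing at worst a constant factor that is absorbed into $(4R/r)^d$. This gives
$$\sum_{x\in N}|B_N(x,2R)| \;\le\; \left(\tfrac{4R}{r}\right)^{d}|N|.$$

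The key step is a very short double-counting observation. Because $Y$ lies inside a ball of radius $R$, the triangle inequality forces $\delta(y,y')\le 2R$ for every pair $y,y'\in Y$, so $Y \subseteq B_N(y,2R)$ and hence $|B_N(y,2R)| \ge |Y|$ for each $y\in Y$. Restricting the earlier sum to $y\in Y\subseteq N$ and discarding the nonnegative remaining terms indexed by $N\setminus Y$,
$$|Y|^{2} \;\le\; \sum_{y\in Y}|B_N(y,2R)| \;\le\; \sum_{x\in N}|B_N(x,2R)| \;\le\; \left(\tfrac{4R}{r}\right)^{d}|N|,$$
and taking square roots delivers the stated bound.

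The main obstacle is essentially only bookkeeping: one must verify that the definition of correlation dimension, which is stated for a fixed net and a fixed radius, can indeed be chained across doublings without demanding $N$ to be an $r'$-net at each intermediate scale $r'$. This is immediate, since the inequality is purely an inequality between two sums over the \emph{same} net $N$ at radii $r'$ and $2r'$, so it iterates trivially. The conceptual point worth highlighting is that the $\sqrt{|N|}$ in the final bound is not an artifact of looseness; it is exactly the penalty one pays for using a \emph{global} growth condition (correlation dimension) in place of a local one (doubling dimension), and it arises here precisely through the Cauchy--Schwarz-like move of squaring $|Y|$ before bounding.
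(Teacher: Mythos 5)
This lemma is not proved in the paper at all: it is imported verbatim as Lemma~2.2 of Chan and Gupta~\cite{CG12}, so there is no in-paper argument to compare yours against. On its own merits, your proof is the natural (and essentially the standard) one: lower-bound the correlation sum $\sum_{x\in N}|B_N(x,2R)|$ by $|Y|^2$, using that $Y\subseteq B_N(y,2R)$ for every $y\in Y$ by the triangle inequality, and upper-bound the same sum by iterating the doubling inequality from the base case supplied by the net separation; your remark that the chaining is legitimate is correct, since the defining inequality of correlation dimension holds for the fixed net $N$ at every radius. The only flaw is the constant bookkeeping: starting from radius $r/2$ and taking $k=\lceil\log_2(4R/r)\rceil$ gives $2^{dk}\le 2^{d}\left(\frac{4R}{r}\right)^{d}$, and that extra $2^{d}$ cannot be ``absorbed into $(4R/r)^{d}$'' because the base $4$ is fixed in the statement; as written you establish the bound with $8R/r$ in place of $4R/r$ (harmless for the application in Lemma~\ref{lm:oracle-cd}, where such constants disappear into the $O(\cdot)$, but not the literal claim of Lemma~\ref{lm:packing-cd}). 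The fix is immediate: since distinct net points are at distance strictly greater than $r$, you may start the iteration at radius $r$ itself, where $\sum_{x\in N}|B_N(x,r)|=|N|$; then $k=\lceil\log_2(2R/r)\rceil\le\log_2(4R/r)$ doublings suffice (when $2R\ge r$; if $2R<r$ the separation of the net already forces $|Y|\le 1$), giving exactly $\sum_{x\in N}|B_N(x,2R)|\le\left(\frac{4R}{r}\right)^{d}|N|$ and hence the stated inequality after taking square roots.
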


\noindent We are now ready to construct a strongly sparse spanner oracle.

\begin{lemma}\label{lm:oracle-cd}Any $n$-point metric of constant correlation dimension $d$ has a spanner oracle with strong sparsity $O(\epsilon^{-d/2})\sqrt{n})$ when $\epsilon< 1$.
\end{lemma}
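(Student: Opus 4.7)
\medskip

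\noindent\textbf{Proof plan for Lemma~\ref{lm:oracle-cd}.} The plan is to mimic the construction in Lemma~\ref{lm:oracle-dd} but with one crucial adaptation: since correlation dimension is only defined with respect to nets of the \emph{whole} metric $X$, the net I build must cover $X$ (not just $T$), and the packing bound I invoke will be Lemma~\ref{lm:packing-cd} instead of Lemma~\ref{lm:packing}. This is also why the sparsity bound will carry a factor $\sqrt{n}$ rather than $\sqrt{|T|}$.

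\medskip

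\noindent Given inputs $T\subseteq X$ and $\ell>0$, I would proceed as follows. First, compute greedily an $(\epsilon\ell/96)$-net $N$ of $X$; note $|N|\leq n$. For each terminal $t\in T$, let $\pi(t)\in N$ be a nearest net point to $t$, and set $N_T=\{\pi(t):t\in T\}\subseteq N$, so that $|N_T|\leq|T|$. Now define the edge set $E_S$ in two steps: (Step 1) for every pair of distinct points $p\ne q$ in $N_T$, add edge $pq$ whenever $\ell/16\leq \delta(p,q)\leq 2\ell$; (Step 2) for each terminal $t\in T\setminus N_T$, add the edge $t\,\pi(t)$. Return the graph $S(T,E_S)$.

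\medskip

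\noindent The sparsity analysis proceeds by bounding the degree of every net point in Step~1. Fix $p\in N_T$ and let $Y_p$ be the set of $q\in N$ with $\delta(p,q)\leq 2\ell$; then $Y_p\subseteq N$ is contained in a ball of radius $2\ell$ in a metric where $N$ is an $(\epsilon\ell/96)$-net of $X$, so Lemma~\ref{lm:packing-cd} gives
\begin{equation*}
|Y_p|\;\leq\;\left(\frac{4\cdot 2\ell}{\epsilon\ell/96}\right)^{d/2}\sqrt{|N|}\;=\;O(\epsilon^{-d/2})\,\sqrt{n}.
\end{equation*}
Summing over $p\in N_T$ and then adding the $|T\setminus N_T|\leq|T|$ edges from Step~2 yields $|E_S|\leq O(\epsilon^{-d/2}\sqrt{n})\cdot|T|$, which is exactly the claimed strong sparsity.

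\medskip

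\noindent For the stretch guarantee, take two distinct terminals $p,q\in T$ with $\ell/8\leq\delta(p,q)\leq\ell$, and let $x=\pi(p)$, $y=\pi(q)$. The triangle inequality combined with $\delta(p,x),\delta(q,y)\leq \epsilon\ell/96$ gives $\ell/16\leq \delta(x,y)\leq 2\ell$ (for $\epsilon<1$), so the edge $xy$ is present in $E_S$ by Step~1, and together with the terminal-to-net edges from Step~2 one bounds the stretch via
\begin{equation*}
\frac{d_{S}(p,q)}{\delta(p,q)}\;\leq\;\frac{\delta(x,y)+2\epsilon\ell/96}{\delta(x,y)-2\epsilon\ell/96}\;\leq\;\frac{1+\epsilon/3}{1-\epsilon/3}\;\leq\;1+\epsilon,
\end{equation*}
using $\delta(x,y)\geq \ell/16$ in the second inequality. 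The only real subtlety compared to the doubling case is the need to take the net on $X$ rather than on $T$, which I do not expect to pose any genuine obstacle but is important to highlight, because sub-metrics of a correlation-dimension-$d$ metric can have much larger correlation dimension, so a naive restriction to $T$ would not give us access to Lemma~\ref{lm:packing-cd}.
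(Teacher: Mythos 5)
Your proposal is correct and takes essentially the same approach as the paper: the same two-step net-based construction, the same packing bound via Lemma~\ref{lm:packing-cd} with the same constants, and the same stretch calculation as in the doubling case. The only difference is that you build the $(\epsilon\ell/96)$-net on all of $X$ and project terminals onto it (so your spanner contains Steiner net points), whereas the paper nets $T$ itself and merely extends that net to a net $N'$ of $X$ inside the counting argument; both correctly address the fact that correlation dimension is not hereditary, and both yield strong sparsity $O(\epsilon^{-d/2}\sqrt{n})$.
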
 
\begin{proof}
Let $T \subseteq X$ and $\ell$ be inputs given to the oracle.  Let $N$ be a $\frac{\epsilon \ell}{96}$-net of $T$. We also construct a set of edges $E$ of the spanner in two steps. (Step 1) add an edge between $p,q$ if $\ell/16 \leq \delta(p,q) \leq 2\ell$ to $E$ for  every two distinct points $p\not= q \in N$. (Step 2) add an edge from $t$ to a nearest point in $N$ to $E$ for each point $t \in T\setminus N$. We then return the graph $S(T,E)$ as the output.  
 
The proof that $S(T,E)$ has stretch $(1+\epsilon)$ is exactly the same as the proof in Lemma~\ref{lm:oracle-dd} for doubling metric case. To bound the strong sparsity, we also use a very similar proof. For each point $p \in N$, similar to Equation~\ref{eq:deg-p-dd}, we can show that $p$ has at most $O(\epsilon^{-d/2}\sqrt{n})$ neighbors. This is because (a) each neighbor $q$ is in a ball of radius $\ell$ from $p$ and (b) we can extend $N$ to a $\frac{\epsilon \ell}{96}$-net $N'$ of $X$ that has $|N'| \leq n$. Thus, by Lemma~\ref{lm:packing-cd}, the number of neighbors of $p$ must smaller than the size of all the net point $N$ in the ball of radius $2\ell$ centered at $p$, which is at most:
\begin{equation*}
\left(\frac{4\cdot 2\ell}{\epsilon\ell/48}\right)^{d/2}\sqrt{|N'|} = O(\epsilon^{-d/2})\sqrt{n}
\end{equation*}
assuming that $d$ is a constant.  Thus, by the same argument in the proof of Lemma~\ref{lm:oracle-dd}, the strong sparsity of  the oracle is bounded by $O(\epsilon^{-d/2}\sqrt{n})$.
\end{proof}

\section{Conclusion} 

We have introduced the notion of sparse spanner oracles, and proved a necessary and sufficient condition of the existence of light subset spanners via sparse spanner oracles. From this, we obtain several results. The most significant result is the first PTAS for the subset TSP problem in $H$-minor-free graphs. Two other interesting result  are spanners with lightness $O(\epsilon^{-(d+2)})$ in doubling metrics of constant dimension $d$ and subset spanners with lightness $O(\epsilon^{-d/2}\sqrt{n})$ for any $n$-point metric of constant correlation dimension $d$. There are several open problems arisen from our work:
\begin{itemize}
\item[1.]{\emph{Light subset spanners in $H$-minor-free graphs}}. It would be interesting to remove the $\log k$ factor in the lightness of our subset spanner in Theorem~\ref{thm:main}. This would imply an efficient PTAS for subset TSP in $H$-minor-free graphs. A possible line of attack is to construct an approximate terminal preserving minors for $H$-minor-free graphs with a linear number of Steiner vertices (see Lemma~\ref{lm:minor-to-oracle}). It should be noted that an exact distance preserving minor with a linear number of Steiner vertices is not possible due to a lower bound by Krauthgamer, Nguy{\ee}n, and Zondiner~\cite{KNZ14}. However, their lower bound does not rule out an approximate one with the desired property.

\item[2]{\emph{Tight bounds for light spanners in doubling metrics.}} In a recent joint work with Solomon~\cite{LS19}, we showed that there exists a point set in the Euclidean space of dimension $d$ such that any $(1+\epsilon)$ spanner has $\Omega(\epsilon^{-d})$ lightness. This only implies a lightness lower bound $\epsilon^{-\Omega(d)}$ on lightness of spanners in doubling metrics of dimension $d$. The upper bound $O(\epsilon^{-(d+2)})$ was proved in Theorem~\ref{thm:dd-metric}. Using a the (fairly complicated) technique in~\cite{LS19}, it is possible to shave a $\frac{1}{\epsilon}$ factor from the lightness in Theorem~\ref{thm:reduction}. Thus, lightness upper bound $O(\epsilon^{-(d+1)})$ is achievable using current machinery, and we conjecture that this is the optimal bound.
\item[3]{\emph{Other applications of sparse spanner oracles.}}  We have show a number of applications of our sparse spanner oracles. It would be interesting to see more applications of this concept.
\end{itemize} 

\paragraph{Acknowledgement:} We thank Cora Borradaile for constructive comments. We thank Ofer Neiman for asking a question regarding light spanners of doubling metrics during a workshop at ICERM, Brown, that led to results in Section~\ref{sec:spare-oracle-Euc}. We thank an anonymous reviewer for  comments that significantly improve the readability of this paper. This material is based upon work supported by the National Science Foundation under Grant Nos.\ CCF-1252833, a NSERC grant and a PIMS postdoctoral fellowship. 

\bibliographystyle{plain}
\bibliography{spanner}
\appendix

\section{Missing Figures}\label{app:figs}

\begin{figure}[h]
   \centering
   \begin{tabular}{@{}c@{\hspace{.5cm}}c@{}}
       (a)\includegraphics[width=.2\textwidth]{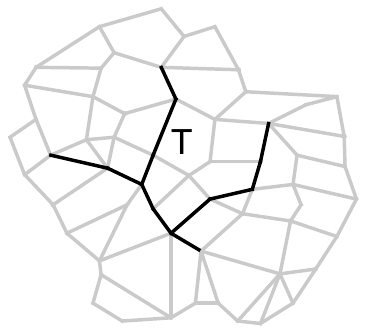}  
       (b)\includegraphics[width=.2\textwidth]{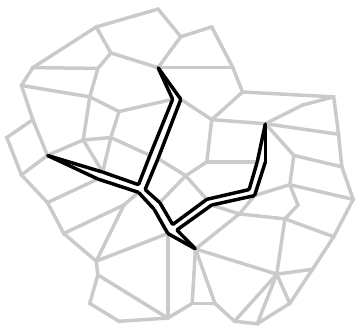} 
       (c)\includegraphics[width=.2\textwidth]{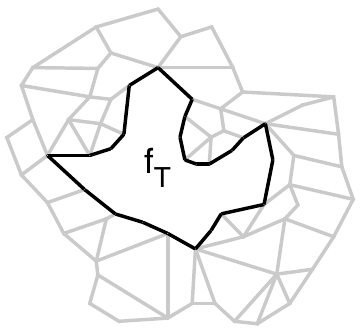} \\
   \end{tabular}
 \caption{(a) Finding a 2-approximation Steiner tree $T$, (b) doubling every edge of $T$ and (c) making a new infinite face $f_T$ from the copies of edges of $T$.  The picture is  republished by courtesy of Glencora Borradaile.}
 \label{fig:cutting}
\end{figure}

\begin{figure}[h]
   \centering
   \begin{tabular}{@{}c@{\hspace{.5cm}}c@{}}
       (a)\includegraphics[width=.2\textwidth]{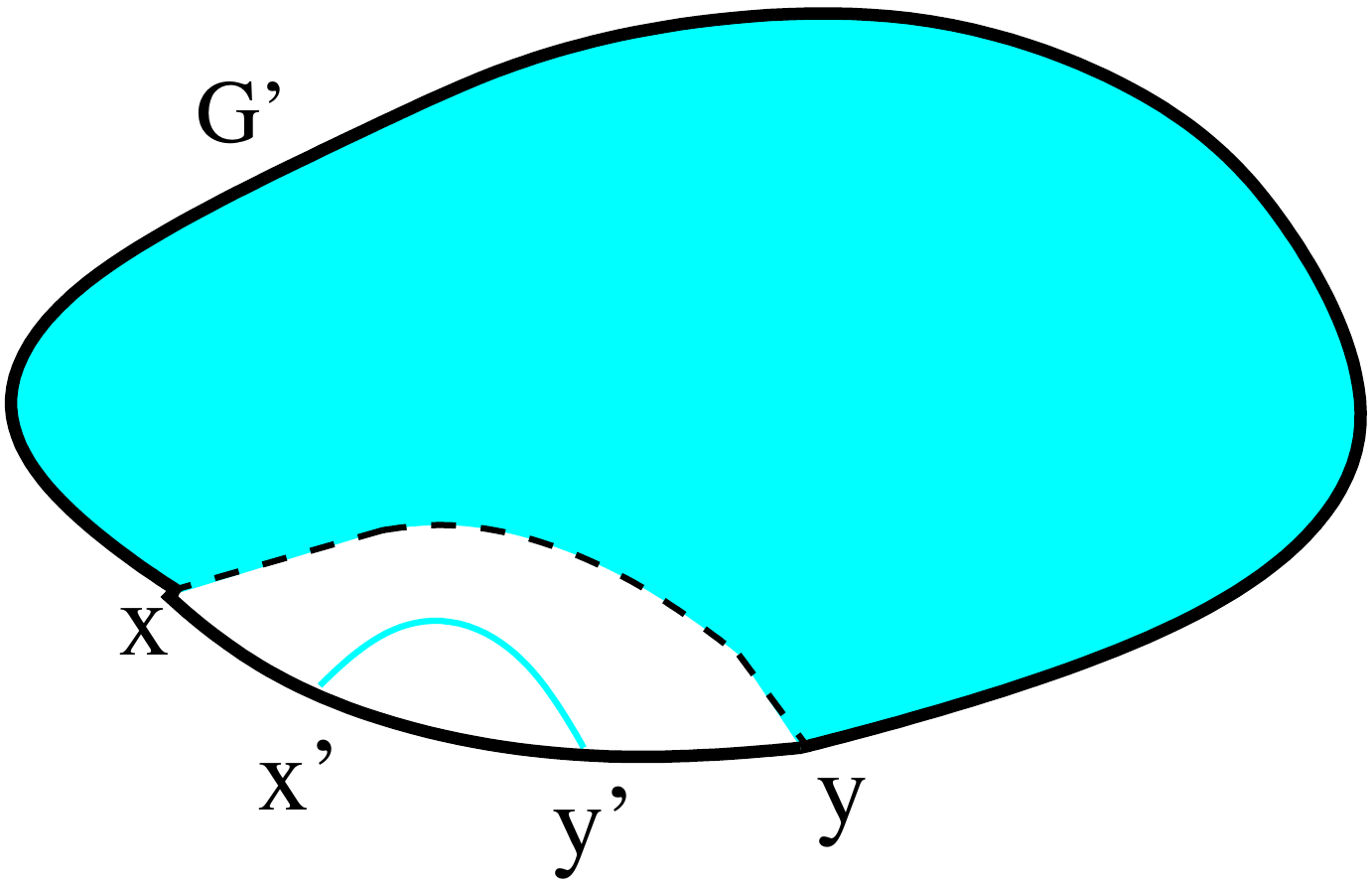} & 
       (b)\includegraphics[width=.2\textwidth]{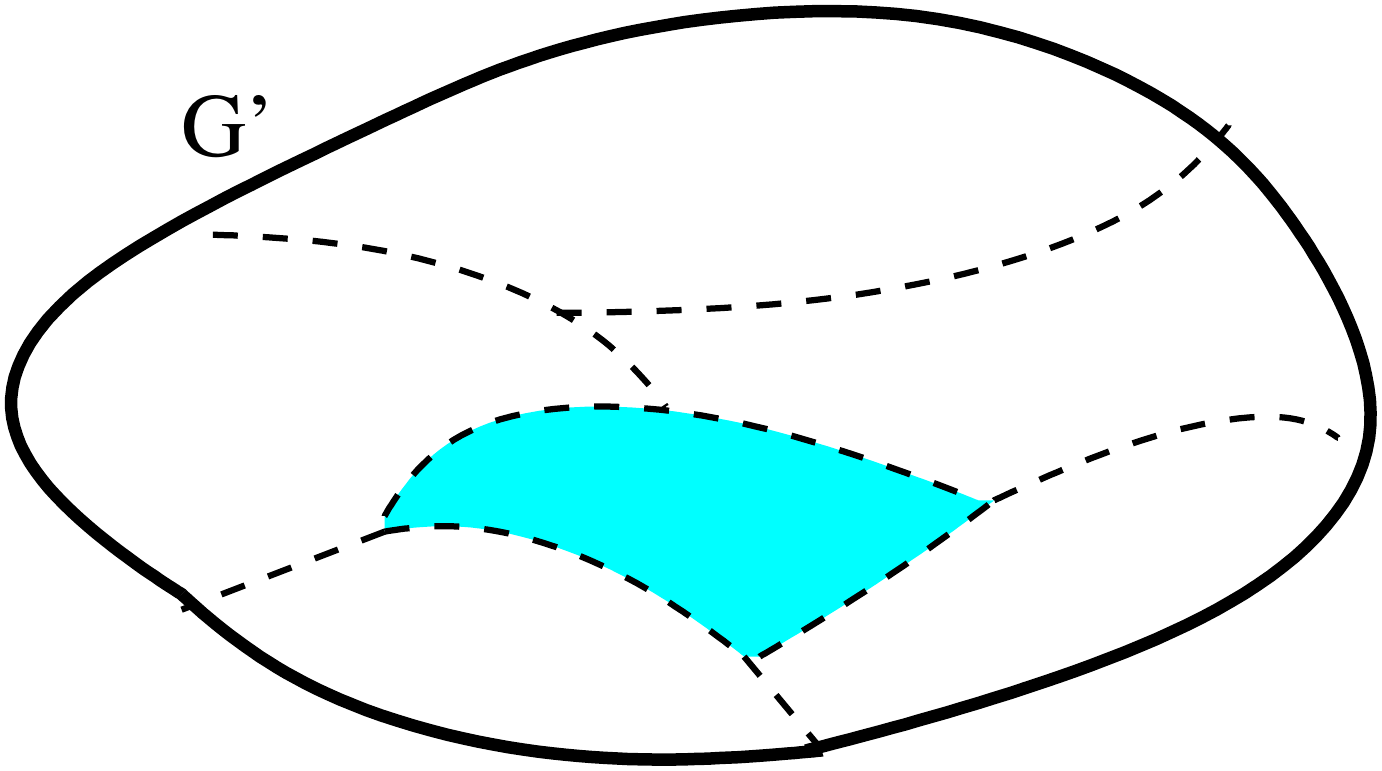} 
   \end{tabular}
 \caption{(a) The first strip is created by adding a shortest (dashed) path between $x$ and $y$ in $G'$. By minimality of $\partial G'[x,y]$, for any $x',y' \in \partial G'[x,y]$, path $\partial G'[x',y']$ well approximates the shortest path between $x'$ and $y'$ in $G'$. (b) A strip decomposition obtained by the recursive procedure. The shaded region is an example strip.   This picture is  republished by courtesy of Glencora Borradaile.}
 \label{fig:strip-decomp}
\end{figure}

\section{Cluster construction} \label{app:cluster-const}

In this section, we describe the details of the cluster construction in prior work by Borradaile, Le and Wulff-Nilsen~\cite{BLW17,BLW19}, with some minor details specific to our setting. The construction has four phases. The intuition of each phase has already been given in Section~\ref{sec:weight-bound}. Let $\mathcal{T}$ is a \emph{cluster tree}, where each node of $\mathcal{T}$ corresponds to an $\epsilon$-cluster and each edge corresponds to an $\MST$ edge connecting the two corresponding $\epsilon$-clusters. In the analysis below, a cluster never takes the credit of an $\MST$ edge outside it. Thus, credits of edges of $\mathcal{T}$ remain intact after level $i-1$. For each node $\mbc \in \mathcal{K}$, we refer to $\dm(\mc_{i-1}(\mbc))$ as its \emph{diameter}.

\paragraph{Phase 1: high degree nodes}  This phase has three steps. The main purpose is to group every high degree nodes and its neighbors into level-$i$ clusters.

(Step 1) Let $\mbx \in \mathcal{K}$ be a high degree node such that all of its  neighbors are unmarked. We form a new level-$i$ cluster $C$ from $\mbx$, its  neighbors  and the connecting edges. We then mark every node of $C$ and repeat this step. 

(Step 2) For each unmarked high-degree vertex $\mby$, there must be a neighbor, say $\mbz$ that is marked in Step 1. Let $C$ be the level-$i$ cluster formed in step 1 containing $\mbz$. We augment $C$ by $\mby$, its unmarked neighbors in $\mathcal{K}$ and the connecting edges. We then mark $\mby$, its neighbors and repeat this step until it no longer applies.

(Step 3) Let $\mby'$ be an unmarked low degree node that has a high degree neighbor $\mbz'$. By construction in Step 2, $\mbz'$ must be marked in step 1.  Let $C$ be the level-$i$ cluster containing $\mbz'$. We augment $C$ by $\mby'$ and the edge between $\mby'$ and $\mbz'$.

Note that $C$ is a subgraph of $M_T$. We then make $C$ a subgraph of $S_{i}$ by replacing each vertex $\mbx \in C$ by subgraph $\mc_{i-1}(\mbx)$ and each edge $\mbe$ by a shortest path of length at most $(1+\epsilon)w(e)$ between its endpoints in $S_{i}$ if it is incident to two high degree nodes.

Observe by the construction that step 1 clusters have diameter at most $2(1+\epsilon)\ell_i + 3g\epsilon\ell_i$ since each edges weight at most $(1+\epsilon)\ell_i$ and each node has diameter at most $g\epsilon\ell_i$. The augmentation in step 2 increases the diameter by at most $4(1+\epsilon)\ell_i + 4g\epsilon\ell_i$ and the augmentation in step 3 does not increase the worst case bound on the diameter. 

\begin{observation} \label{obs:step1-diam} Phase 1 clusters have diameter at most $19\ell_i$ when $\epsilon$ is smaller than $\min(\frac{1}{g},\frac{1}{s})$.
\end{observation}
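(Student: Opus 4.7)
The plan is to verify the three diameter increments claimed just before the observation and then to check arithmetically that their sum is below $19\ell_i$ under the hypothesis $\epsilon < \min(1/g,1/s)$. Throughout, I will bound the diameter of a Phase 1 cluster $C$ (viewed as a subgraph of $S_i$) by exhibiting a longest ``path'' in $C$ as an alternating walk through $\epsilon$-clusters (each of diameter at most $g\ell_{i-1}=g\epsilon\ell_i$ by invariant (DC1)) and the shortest paths in $S_i$ realizing the level-$i$ edges of $\mathcal{K}$ that glue them together (each of length at most $(1+\epsilon)\ell_i$, since a level-$i$ edge has weight at most $\ell_i$ and its representative in $S_i$ is a $(1+\epsilon)$-approximate shortest path).

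For Step~1, a cluster is a star: one high-degree center $\mbx$ together with all its $\mathcal{K}$-neighbors. A longest path passes through at most three $\epsilon$-clusters and at most two level-$i$ edges, giving diameter at most $2(1+\epsilon)\ell_i+3g\epsilon\ell_i$. For Step~2, the key structural observation is that each high-degree node $\mby$ inserted must have a neighbor $\mbz$ that was \emph{marked in Step~1}; hence $\mbz$ is either a Step~1 center or a direct $\mathcal{K}$-neighbor of one, i.e.\ at distance at most $1$ in $\mathcal{K}$ from some Step~1 center $\mbx$. Moreover, Step~2 attaches $\mby$ (together with its unmarked neighbors) to the unique Step~1 cluster containing $\mbz$, so no Phase~1 cluster is ever a merger of two Step~1 clusters — each retains a single Step~1 ``core''. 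Thus after all Step~2 augmentations, every vertex of $C$ lies within distance at most $3$ (in $\mathcal{K}$) from $\mbx$ (center $\to$ Step~1 neighbor $\mbz$ $\to$ Step~2 vertex $\mby$ $\to$ unmarked neighbor of $\mby$). A longest path in $C$ therefore routes from a depth-$3$ leaf on one side of $\mbx$ to one on the other, crossing at most $7$ $\epsilon$-clusters and $6$ level-$i$ edges, which adds at most $4(1+\epsilon)\ell_i+4g\epsilon\ell_i$ to the Step~1 bound.

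For Step~3, a low-degree $\mby'$ is attached to a high-degree neighbor $\mbz'$ that (by the same argument used for Step~2) was marked in Step~1, so $\mby'$ sits at distance at most $2$ from the Step~1 center of its cluster. Since Step~2 already allows leaves at depth $3$, attaching depth-$\leq\!2$ nodes in Step~3 cannot lengthen a diameter-realizing path. Hence Step~3 leaves the worst-case bound unchanged at $6(1+\epsilon)\ell_i+7g\epsilon\ell_i$. Finally, using $\epsilon<1/g$ (which is implied by $\epsilon<\min(1/g,1/s)$) one estimates
\[
6(1+\epsilon)+7g\epsilon \;=\; 6+6\epsilon+7g\epsilon \;<\; 6+\tfrac{6}{g}+7 \;=\; 13+\tfrac{6}{g} \;\leq\; 19,
\]
where the last inequality holds whenever $g\geq 1$ (and is very slack for the $g\geq 6$ required by earlier steps of the paper).

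The only delicate point — which I view as the ``main obstacle'' — is the Step~2 analysis: one must be careful to observe that Step~2 augmentations do not chain (since each $\mby$ must anchor to a Step~1-marked $\mbz$, never to a vertex first marked in Step~2) and that no two Step~1 clusters are ever merged by Step~2. Together these guarantee the ``radius-$3$ from the unique Step~1 center'' picture, without which one could in principle imagine Phase~1 clusters of much larger diameter.
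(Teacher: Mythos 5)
Your proposal is correct and takes essentially the same route as the paper: the paper likewise bounds a Step~1 cluster by $2(1+\epsilon)\ell_i+3g\epsilon\ell_i$ (two level-$i$ edges of weight at most $(1+\epsilon)\ell_i$ plus three $\epsilon$-clusters of diameter at most $g\epsilon\ell_i$), charges the Step~2 augmentation $4(1+\epsilon)\ell_i+4g\epsilon\ell_i$ (two extra hops on each end of the diameter path, since Step~2 vertices anchor to Step-1-marked vertices), asserts—exactly as you do and with the same justification—that Step~3 does not worsen the bound because the high-degree anchor $\mbz'$ is marked in Step~1, and then concludes $6(1+\epsilon)\ell_i+7g\epsilon\ell_i\le 19\ell_i$ when $\epsilon<\min(\frac{1}{g},\frac{1}{s})$. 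Your write-up simply makes the hop-radius bookkeeping (radius $3$ from the unique Step~1 core, no merging of Step~1 clusters) explicit, which the paper leaves implicit.
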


By choosing $g > 19$, Observation~\ref{obs:step1-diam} implies invariant (DC1). However, we will augment Phase 1 clusters further in Phase 4. It is not a problem as long as the diameter blow up is at most $O(\ell_i)$ since we can choose $g$ to be an arbitrarily big constant (and independent of $\epsilon$).  Since every Phase 1 cluster contains a high degree node and all of its neighbors, it has at least $\frac{2g}{\epsilon}+1$ nodes. By Observation~\ref{obs:rich}, Lemma~\ref{lm:credit-informal} holds for $\epsilon$-clusters involved in Phase 1.

Note that after Phase 1, every unmarked node, say $\mbx$, of $\mathcal{K}$ has low degree. Thus, all $\mbx$'s incident edges (or more precisely, the shortest paths corresponding to these edges in $G$) are added to $S_i$. Therefore, in the construction below, we can conclude that every cluster is a subgraph of $S_i$.

We define \emph{effective diameter} of a path $\mathcal{P}$, denoted by $\edm(\mathcal{P})$, of $\mathcal{T}$ to be the sum of diameter of its nodes. 

\paragraph{Phase 2: Low-degree, branching vertices} In this phase, we group unmarked nodes after Phase 1 into clusters.  We say a node $\mbv$ \emph{$\mt'$-branching} in a tree $\mt'$ if it has degree at least 3 in $\mt'$. Let $\mt'$ be a minimal subtree of unmarked nodes of $\mt$ of effective diameter at least $\ell_i$ and at most $2\ell_i$ that has a $\mt'$-branching node, say $\mbx$. We form a new cluster from $\mt'$, mark every node of $\mt'$ and repeat.  

After Phase 2, unmarked nodes form a subtree of $\mt$ of effective diameter most $\ell_i$ or a path of effective diameter at least $\ell_i$. In Phase 4 below, Phase 2 clusters will be augmented further by low-diameter subtrees of $\mt$.  An important property is that  each Phase 2 cluster is still a subtree of $\mt$ after the augmentation.

The key observation to bound the diameter of a Phase 2 cluster is that any path $\mathcal{P}$ of $\mathcal{T}$ has $\dm(P)\leq 2\edm(P)$. This is because $\MST$ edges have length at most $w_0$, which is smaller than the diameter of any node. 

Let $\mathcal{D}$ be a diameter path of a Phase 2 cluster, say $C$.  Since every edge of $\mathcal{D}$ has credits at least its weight and every node of $\mathcal{D}$ has credits at least its diameter (by invariant (DC1) for level $i-1$), the total credit of vertices and edges of $\mathcal{D}$ is at least $\ce\dm(C)$. Since $\edm(C) \geq \ell_i$, invariant (DC2) is maintained.  Since $C$ has a branching node $\mbx$, at least one of $\mbx$ neighbor, say $\mby$, is not in $\mathcal{D}$, hence its credits are leftover. If $C$ has more than $\frac{2g}{\epsilon}+1$ nodes, Lemma~\ref{lm:credit-informal} is satisfied for $\epsilon$-clusters in $C$ by Observation~\ref{obs:rich}. Otherwise, by redistributing the credit of $\mby$ to every node in $C$, each gets at least $\Omega\left(\frac{\ce \ell_{i-1}}{2 |C|}\right) ~=~ \Omega(\epsilon \ce \ell_{i-1})$ credits, thereby implying Lemma~\ref{lm:credit-informal}.

\paragraph{Phase 3: High-diameter paths of $\mt$} Let $\mp$ be a high diameter path of nodes which are unmarked after Phase 1 and 2. We say a node $\mbv \in \mp$ \emph{deep} if it is not an endpoint of $\mp$ and the two subpaths of $\mp - \{\mbv\}$ have effective diameter at least $\ell_i$ each.  Let $e$ be a level-$i$ edge with two deep endpoints, say $\mbx,\mby$. Let $\mx$ and $\my$ be two paths of $\mt$ containing $\mbx$ and $\mby$, respectively. It may be that $\mx \equiv \my$ (both endpoints of $e$ are on the same path.). Let $\mp_\mbx, \mq_\mbx$ be two minimal subpaths of $\mx - \{\mbx\}$ incident to $\mbx$ that have effective diameter at least $\ell_i$.  $\mp_\mbx, \mq_\mbx$ exist since $\mbx$ is deep. We define two minimal subpaths $\mp_\mby, \mq_\mby$ of $\my$ similarly. We then group $e$, $\mp_\mbx,\mp_\mby, \mq_\mbx,\mq_\mby$ into a new level-$i$ cluster. We mark nodes in $\mp_\mbx \cup \mp_\mby\cup \mq_\mbx\cup\mq_\mby\cup \{\mbx,\mby\}$ and repeat until this phase no longer applies.

See Figure~\ref{fig:P3N} for an illustration of clusters formed in this phase. It is possible that two paths among four paths in $\{\mp_\mbx, \mp_\mby, \mq_\mbx,\mq_\mby\}$, say $\mp_\mbx, \mp_{\mby}$, overlap. We call this case the \emph{cyclic case} since the corresponding cluster contains a cycle. In the cyclic case, we redefine $\mp_\mbx = \mp_{\mby} = \mp_{\mbx\mby}$ where $\mp_{\mbx\mby} = \mp[\mbx,\mby] \setminus \{\mbx,\mby\}$ (see Figure~\ref{fig:P3N}(c)). 
 
\begin{figure}
\vspace{-2pt}
\centering
\includegraphics[scale = 1.1]{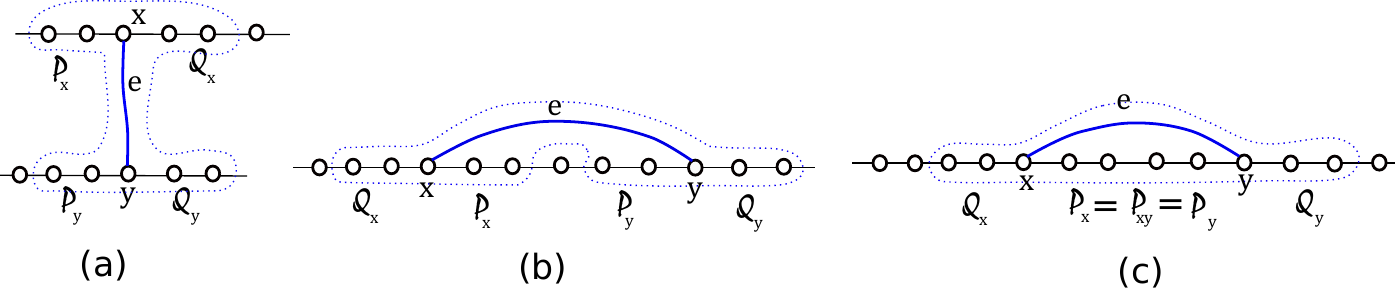}
\caption{Phase 3 clusters are enclosed in dotted blue curves. There are three different forms that a  Phase 3 cluster can take. The blue thick edge is a level $i$ edge $e$ with two endpoints $\mbx,\mby$.}
\label{fig:P3N}
\end{figure}

By minimality, $\mp_\mbx, \mp_\mby, \mq_\mbx,\mq_\mby$ all have diameter at most $2(\ell_i + g\epsilon\ell_i)$. Since $w(e) \leq \ell_i$, we have:
\begin{observation} \label{obs:phase3-diam}
Phase 3 clusters have diameter at most $9\ell_i + 10g\epsilon\ell_i + 4w_0$. 
\end{observation}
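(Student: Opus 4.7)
The plan is to upper-bound the diameter of a Phase~3 cluster $C$ by its total weight $w(C)$, since in any connected subgraph the diameter is at most the sum of all edge weights (shortest paths cannot be longer than any walk). So it suffices to exhibit a decomposition of $C$ into pieces whose weights sum to $9\ell_i + 10g\epsilon\ell_i + 4w_0$. First I would explicitly list the constituents of $C$ in the non-cyclic case (Figure~\ref{fig:P3N}(a) and (b)): the four ``arm'' subpaths $\mathcal{P}_\mbx, \mathcal{Q}_\mbx, \mathcal{P}_\mby, \mathcal{Q}_\mby$; the two central $\epsilon$-clusters $\mc_{i-1}(\mbx), \mc_{i-1}(\mby)$; the at-most-four $\MST$ edges linking each arm to its central cluster; and the single level-$i$ edge $e$.

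Next I would bound each piece using the facts already established just above the observation and earlier invariants. By minimality, each arm subpath has diameter, and hence total weight (a path's total weight equals its diameter), at most $2(\ell_i + g\epsilon\ell_i)$; summing the four arms contributes $8\ell_i + 8g\epsilon\ell_i$. Each of the at most four $\MST$ edges connecting an arm to $\mbx$ or $\mby$ weighs at most $w_0$, contributing $4w_0$. By invariant (DC1) applied at level $i-1$, each of $\mc_{i-1}(\mbx)$ and $\mc_{i-1}(\mby)$ has diameter at most $g\ell_{i-1} = g\epsilon\ell_i$, contributing $2g\epsilon\ell_i$. Finally $w(e)\le \ell_i$ because $e$ is a level-$i$ edge. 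Adding these yields
\[
w(C) \;\leq\; 8\ell_i + 8g\epsilon\ell_i \;+\; 4w_0 \;+\; 2g\epsilon\ell_i \;+\; \ell_i \;=\; 9\ell_i + 10g\epsilon\ell_i + 4w_0,
\]
and therefore $\dm(C)\le w(C)$ gives the observation.

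The cyclic case (Figure~\ref{fig:P3N}(c)) is only easier: the redefinition $\mathcal{P}_\mbx = \mathcal{P}_\mby = \mathcal{P}_{\mbx\mby}$ means one of the four arms is identified with another, so the same enumeration above strictly over-counts the weight of $C$. The bound $\dm(C)\leq w(C)$ remains valid in the presence of the cycle formed by $e$ together with $\mathcal{P}_{\mbx\mby}$ and the two linking $\MST$ edges, since it merely says distances are dominated by the sum of edge weights along \emph{some} walk.

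There is no real obstacle here: the argument is a mechanical addition of six or seven terms once the preceding observation (that minimality limits each arm's diameter to $2(\ell_i + g\epsilon\ell_i)$) and invariant (DC1) at level $i-1$ are in hand. The only thing to be slightly careful about is the cyclic case, where one must verify that the diameter is still controlled by total weight despite the presence of a cycle; but this follows immediately from the definition of diameter as a maximum shortest-path distance. The bound is deliberately loose (one could shave constants by routing the diameter path through only a subset of the arms), but since $g$ is an absolute constant chosen later, the crude sum suffices.
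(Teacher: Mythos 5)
Your decomposition of a Phase~3 cluster and the arithmetic ($4\cdot 2(\ell_i+g\epsilon\ell_i)+2g\epsilon\ell_i+4w_0+\ell_i$) are exactly the paper's, but the logical route you use to combine the pieces does not work. You bound $\dm(C)$ by the \emph{total edge weight} $w(C)$ and then claim $w(C)\le 9\ell_i+10g\epsilon\ell_i+4w_0$ by equating each piece's weight with its diameter (``a path's total weight equals its diameter''). That identification is false here: the ``arms'' $\mathcal{P}_{\mbx},\mathcal{Q}_{\mbx},\mathcal{P}_{\mby},\mathcal{Q}_{\mby}$ are paths in the cluster tree $\mathcal{T}$, i.e.\ sequences of $\epsilon$-clusters joined by $\MST$ edges, and each $\epsilon$-cluster (including the central ones $\mc_{i-1}(\mbx),\mc_{i-1}(\mby)$) is itself a many-edge subgraph of $S_{i-1}$ (a decompressed $\MST$ subtree at level $0$, and at higher levels it additionally contains oracle-spanner edges). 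Invariant (DC1) bounds only its \emph{diameter} by $g\ell_{i-1}$; its total weight can be far larger, and in fact the entire credit argument exists precisely because spanner weight vastly exceeds diameters. So the intermediate claim $w(C)\le 9\ell_i+10g\epsilon\ell_i+4w_0$ is false in general, and the ``diameter $\le$ total weight'' strategy cannot yield the observation.

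The repair is local and gives the same numbers, and it is what the paper does implicitly: bound the diameter by a walk through the pieces rather than by total weight. Any two vertices of the cluster are joined by a walk that, inside each constituent connected piece (each arm, each of $\mc_{i-1}(\mbx),\mc_{i-1}(\mby)$), follows a shortest path of length at most that piece's diameter, and crosses between pieces only along the at most four connecting $\MST$ edges (length $\le w_0$ each) and along $\kappa(e)$ (length $w(e)\le\ell_i$; note the nodes here are low degree, so $\kappa(e)$ was added to $S_i$ in Step~1). Since the piece-connection structure admits a simple path between any two pieces, the walk uses each piece and connector at most once, so $\dm(C)\le\sum(\text{piece diameters})+4w_0+\ell_i\le 8\ell_i+8g\epsilon\ell_i+2g\epsilon\ell_i+4w_0+\ell_i$, which is the stated bound; the cyclic case only removes pieces from this sum. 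With that substitution your write-up matches the paper's argument.
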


The extra term $2g\epsilon\ell_i$ is the total weight of $\mbx$ and $\mby$ and the term $4w_0$ is the total weight of four tree edges incident to $\mbx$ and $\mby$. By definition, they do not belong to any of four paths $\mp_\mbx, \mp_\mby, \mq_\mbx,\mq_\mby$. Notte that $w_0 \leq \ell_i$. By choosing $\epsilon$ sufficiently smaller than $1/g$, the diameter of Phase 3 clusters is $O(\ell_i)$. In Phase 4 below, we further augment Phase 3 clusters by subtrees of $\mt$ of diameter at most $O(\ell_i)$. The resulting clusters still have diameter at most $O(\ell_i)$.

Our goal to argue that the credit of one node is not needed to maintain invariant (DC2). This is exactly what Borradaile, Le and Wulff-Nilsen showed in Case 2 in their paper~\cite{BLW19}. Their proof does not use any special property of doubling metrics, so it is readily applicable to our case. Let us sketch the intuition behind their argument to handle the cyclic case. Recall that an edge $e$ is kept in $\mathcal{K}$ if the shortest path between its endpoints in $S_{i-1}$ is at least $(1+(6g+1)\epsilon)w(e)$. Thus, by adding $e$ to the subpath of $\mathcal{T}$ containing its endpoints, the diameter is reduced by at least $(6g+1)\epsilon w(e) - \dm(\mc_{i-1}(\mbx)) - \dm(\mc_{i-1}(\mby))~>~(6g+1)\epsilon w(e)  - 2g\ell_{i-1}~>~ g\ell_{i-1}$ which is bigger than the diameter of any $\epsilon$-cluster. This diameter reduction is equivalent to having $g\ce\ell_{i-1}$ leftover credits. Thus, the credit of at least one $\epsilon$-cluster can be reserved as leftover.

\paragraph{Phase 4: Remaining nodes} After Phase 3, by removing marked nodes from $\mt$, we obtain a forest $\mf$ such that for every tree $\mt' \in \mf$, either $\mt'$ has effective diameter at most $\ell_i$ or $\mt'$ is a path of effective diameter at least $\ell_i$.  We decompose $\mf$ into two forests $\mf^{aug}$ and $\mf^{cluster}$ as follows. If a tree $\mt' \in \mf$ has diameter at most $\ell_i$, we include $\mt'$ in $\mf^{aug}$. Otherwise, we greedily break $\mt'$ into paths of effective diameter at least $\ell_i$ and at most $2\ell_i$ ($\mt'$ is a path in this case.). Let $\mp$ be a path broken from $\mt$. If $\mp$ has a tree edge connecting it to a cluster formed in the first phases, we include $\mp$ in $\mf^{aug}$. Otherwise, we include it in $\mf^{cluster}$ (see Figure~\ref{fig:affices} for an illustration).

\begin{figure}[h]
   \centering
   \includegraphics[width=.5\textwidth]{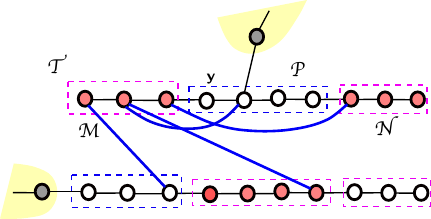} 
   \caption{ A long path $\mt'$ that are broken into short paths (enclosed by dash curves) in Phase 4. The subpaths enclosed by purple dashed curves are in $\mf^{cluster}$ and the subpaths enclosed by blue dashed curves are in $\mf^{aug}$. Black edges are tree edges and thick blue edges are path edges.  $\mt'$ has two affices $\mathcal{M}$ and $\mathcal{N}$. By construction, $\mt'$ has an $\MST$ edge to a cluster (the yellow-shaded region) formed in the first three phases. }
 \label{fig:affices}
\end{figure}

Observe that  by construction, for every tree $\mt'$ in $\mf^{aug}$, there is at least one tree edge, say $e$, connecting it to a  cluser $C$ formed in prior phases. We augment $C$ by attaching $\mt'$ to it via $e$. 

By construction, $\mf^{cluster}$ is a set of paths.  For every path in $\mf^{cluster}$, we form an independent Phase 4 cluster.  This completes the cluster construction. We now show how to maintain invariants.

\paragraph{Invariant (DC1)} By construction, Phase 4 clusters have diameter at most $4\ell_i$, which is at most $g\ell_i$ when $g\geq 4$.  

Let $C$ be a cluster formed in the first three phases and $C'$ be the augmentation of $C$  after Phase 4. Since $C$ is attached subtrees of effective diameter at most $\ell_i$ via tree edges (of length at most $w_0$), we have:
\begin{equation}\label{eq:diam-S}
\dm(C') \leq \dm(C) + 4\ell_i+ 2w_0 
\end{equation}
By construction, Phase 2 clusters have diameter at most $4\ell_i$.  By Observation~\ref{obs:step1-diam} and Observation~\ref{obs:phase3-diam}, clusters in Phase 1 and Phase 3 have diameter at most $23\ell_i$ since $w_0 \leq \ell_i$. By Equation~\ref{eq:diam-S}, we have $\dm(C') \leq 29\ell_i$. By choosing $g = 29$, invariant (DC1) is satisfied.

\paragraph{Invariant (DC2) and Lemma~\ref{lm:credit-informal}}

We have shown that clusters originated in the first three phases can both maintain Invariant (DC2) and each node in these clusters has at least $\Omega(\epsilon\ce\ell_{i-1})$ leftover credits. 

However, credits of nodes in Phase 4 clusters are only enough to guarantee invariant (DC2). By Observation~\ref{obs:rich}, it suffices to consider a Phase 4 cluster $C$ with at most $\frac{2g}{\epsilon}$ nodes.  We show that all nodes of $C$ fall into case (i) of Lemma~\ref{lm:credit-informal}. 

First, observe that by construction in Phase 1, every node involving in Phase 4 clusters has low degree and all of its neighbors also have low degree.   Recall that trees in $\mf^{cluster}$ are subpaths greedily broken from paths of $\mt$ of effective diameter at least $\ell_i$. We distinguish two types of paths in $\mf^{cluster}$: \emph{internal} paths and \emph{affix} paths. Observe that there is no level-$i$ edge between two internal subpaths since otherwise, both endpoints of such an edge would be deep and hence, it will be grouped in Phase 3.  Thus, any internal terminal subpath must have a level-$i$ edge to an affix subpath.

Let $\mt'$ be a long path that are broken in Phase 4. Let $\my$ be the set of nodes that are in at most two affices of $\mt'$ in $\mf^{cluster}$. We can assume that  $|\my| \leq \frac{4g}{\epsilon}$ by Observation~\ref{obs:rich}. Observe that by construction, $\mt'$ must have a tree edge connecting it to subgraphs originated in the first three phases. (The only exception is when there is no cluster formed in the first three phase; we will come to this case later.) That is, at least one subpath, say $\mp$, of $\mt'$ is included in $\mf^{aug}$, and then augmented to a cluster formed in the first three phases. Recall that $\mp$ has effective diameter at least $\ell_i$. Thus, by invariant (DC1), it has at least $\Omega(\frac{1}{g\epsilon})$ nodes.  Since each node of $\mp$ has at  least $\Omega(\epsilon\ce\ell_{i-1})$ leftover credits, by taking the credit of   $\Theta(\frac{1}{g\epsilon})$ nodes of $\mp$ and redistributing to (at most $\frac{4g}{\epsilon}$) nodes of $\my$, each gets at least $\Omega(\epsilon \ce \ell_{i-1})$ leftover credits. Since there is no level-$i$ between two internal paths of $\mf^{cluster}$, nodes in internal paths of $\mf^{cluster}$ satisfy case (i) of Lemma~\ref{lm:credit-informal}.

Finally, we need to handle the exception where there is no cluster formed in the first three phases.  That is,  every node of $\mt$ has degree at most $\frac{2g}{\epsilon}$  and either (i) $\mt$ has effective diameter less than $\ell_i$  or (ii) $\mt$ is a path of  effective diameter at least $\ell_i$ and every level-$i$ edge is incident to a node in two affices of $\mt$. We consider each case separately.

\begin{enumerate}[noitemsep]
\item  If $\edm(\mt) < \ell_i$, our cluster construction stops at this level. We let each vertex keep its own credits as leftover credits. Note that in this case, we do not have level $i+1$ or higher edges since any such edge would have length more than $\dm(\mt)$ when $\epsilon$ is sufficiently smaller than $1$;  contradicting that $M_T$ represents a metric.

\item  If $\mt$ is a path of effective diameter at least $\ell_i$ and every level-$i$ edge is  incident to a node in affices of $\mt$, then $\mf^{cluster}$ only contains subpaths of $\mt$ and $\mf^{aug} = \emptyset$. We use all node and edge credits of each cluster in $\mf^{cluster}$ to guarantee invariant (DC2).  By Observation~\ref{obs:rich}, we can assume that both affices of $\mt$ have at most $\frac{2g}{\epsilon}$ nodes each since otherwise, we can redistribute leftover credits of one affix to another.  Thus, there are at most:
\begin{equation}
2(\frac{2g}{\epsilon})\cdot(\frac{2g}{\epsilon}) = O (\frac{1}{\epsilon^2})
\end{equation}
 level-$i$ edges. This is case (ii) in Lemma~\ref{lm:credit-informal}.
\end{enumerate}

\section{Missing proofs} \label{app:missing}
\subsection{Completing the proof of Lemma~\ref{lm:ss-spanner}}
\begin{proof}
To prove (3), we use the argument in the proof of Theorem 4.1 of Klein~\cite{Klein06}, that we elaborate here for completeness. 
\begin{equation}\label{eq:ss-spanner-proof-of-3}
\begin{split}
w(Q_I) &< (1+\epsilon)^{-1}(w(Q_{I-1}) + d_P(y_{I-1},y_{I}))\\
&\leq (1+\epsilon)^{-1}w(Q_{I-1}) + d_P(y_{I-1},y_{I})\\
&< (1 -\epsilon/2) w(Q_{I-1})+ d_P(y_{I-1},y_{I}) \qquad \mbox{(since }\epsilon < 1)\\
&\leq w(Q_{I-1}) - \frac{\epsilon R}{2} + d_P(y_{I-1},y_{I})\\
&\leq w(Q_0) - I\frac{\epsilon R}{2}+ d_P(y_0,y_{I}) \qquad \mbox{(by solving the recurrent relation) }\\
&= (1 -\epsilon I/2)R + d_P(y_0,y_{I}) \\
&\leq (1 -\epsilon I/2)R + 4\epsilon^{-1}R \qquad \mbox{(by Equation~\eqref{eq:ss-spanner-proof-of-4})}
\end{split}
\end{equation}
Since $w(Q_I) \geq R$, by Equation~\eqref{eq:ss-spanner-proof-of-3}, we have $I < 8\epsilon^{-2}$. By a similar argument, we can show that $J < 8\epsilon^{-2}$.
\end{proof}

\subsection{Proof of Corollary~\ref{cor:path-to-path-spanner}}\label{app:path-to-path-proof}
\begin{proof}
Let  $W = \{v_0,v_1,\ldots, v_r\}$ where $r$ is the length of $W$. Note that there may be a vertex that appears multiple times along $W$.  We define a sequence of vertices $Y = \{y_0 = v_0, y_1,\ldots, y_I\}$ along $W$ as follows: (i) $y_0 = v_0$ and (ii) $y_{i}$ is a closet vertex after $y_{i-1}$ such that:
\begin{equation} \label{eq:path-to-path-spanner}
d_W(y_i,y_{i-1}) > \epsilon d_G(y_i, P)
\end{equation}

For each $y_i$, let $\mathcal{Q}_i \leftarrow$ \textsc{SSSpanner($G, P,y_i,\epsilon$)}. $\mathcal{Q}_i$ is a collection of shortest paths with source $y_i$. Let $H = \mathcal{Q}_0 \cup \ldots \cup \mathcal{Q}_{I}$. We first bound the weight of $H$. Let $R_i =  d_G(y_i,P)$. By Equation~\eqref{eq:path-to-path-spanner}, we have:
\begin{equation*}
\sum_{i=1}^I R_i \leq  \epsilon^{-1} d_W(y_0,y_I) = \epsilon^{-1}w(W)
\end{equation*}
Since $R_0 \leq w(W) + R$, we have:

\begin{equation}\label{eq:path-to-path-spanner-Ri}
\sum_{i=0}^I R_i \leq  (\epsilon^{-1} + 1)w(W) + R
\end{equation}

By (2) of Lemma~\ref{lm:ss-spanner}, we have:

\begin{equation}\label{eq:H-vs-R_i}
w(H) \leq \sum_{i=0}^{I}w(\mathcal{Q}_i) \leq 8\epsilon^{-2}  \sum_{i=0}^{I}R_i
\end{equation}
From Equation~\eqref{eq:path-to-path-spanner-Ri} and Equation~\eqref{eq:H-vs-R_i}, we obtain the desired upper bound on the weight of $H$.

We now show property (1). If $p \in Y$, then property (1) is satisfied by construction and Lemma~\ref{lm:ss-spanner}. Thus, we can assume that $p \not\in Y$. Let $\ell$ be such that $p \in W[y_{\ell}, y_{\ell+1}]$. (If $\ell = I$, we define $ y_{\ell+1}$ to be the endpoint of $W$ after $y_{\ell}$). Since $p \not\in Y$, by Equation~\eqref{eq:path-to-path-spanner}, $d_W(p,y_{\ell}) < \epsilon d_G(p,P)$ which is at most $\epsilon d_G(p,q)$.  Let $M$ be a path from $p$ to $q$ that consists of $W[p,y_\ell]$ and a shortest $y_\ell$-to-$q$ path in $H\cup P$. We have: 

\begin{equation}
\begin{split}
w(M) &\leq w(d_W(p,y_{\ell}) ) + d_{H\cup P}(y_\ell, q) \\
&\leq w(d_W(p,y_{\ell}) ) +  (1+\epsilon)d_{G}(y_\ell, q) \\
&\leq w(d_W(p,y_{\ell}) ) + (1+\epsilon)(d_{G}(y_\ell, p) + d_{G}(p, q)) \qquad (\mbox{by triangle inequailty})\\
&\leq (2 + \epsilon)d_{W}(y_\ell, p) + (1+\epsilon) d_{G}(p, q)  \qquad (d_G(p,y_{\ell}) \leq d_W(p,y_{\ell}))\\
&< (2 + \epsilon)\epsilon d_{G}(p, q) + (1+\epsilon)d_{G}(p, q)\qquad  (d_W(p,y_{\ell}) <\epsilon d_G(p,q)) \\
&\leq (1+4\epsilon)d_{G}(p, q) \qquad (\mbox{since }\epsilon < 1)
\end{split}
\end{equation} 
 By setting $\epsilon' = 4\epsilon$ we have property (1).
\end{proof}

\section{A singly exponential time algorithm for the subset TSP in bounded treewidth graphs} \label{app:dp}

In this section, we give a dynamic program that can solve subset TSP in $2^{O(\mbtw)}n^{O(1)}$. Our algorithm is based on a method introduced by Bodlaender, Cygan, Kratsch, Nederlof ~\cite{BCKN15} to design  deterministic singly exponential time algorithms for connectivity problems in bounded treewidth graphs. 

\subsection{Representing partitions}

Let $U = [n]$ be a ground set of $n$ elements and $\Pi(U)$ be the set of all partitions of $U$.  We abuse notation by using $U$ to denote the partition $\{U\} \in \Pi(U)$, i.e, the partition that has $U$ as the only set. For each partition $\pi \in \Pi(U)$, define a \emph{partition graph} $G_\pi$ where $V(G_\pi) = U$ and  there is an edge between $u$ and $v$ in $G_\pi$ if they are both in the same set of $\pi$. Thus, there is a bijection between sets in $\pi$ and cliques in $G_\pi$. For two elements $u,v \in U$, we denote by $U[uv]$ the partition of $U$ that has $\{u,v\}$ as a set and other sets are singletons. By $\pi\setminus \{v\}$, we denote the partition of $U\setminus \{v\}$ obtained from $\pi$ by removing $v$ from $\pi$.

Let $\alpha, \beta$ be two partitions of $\Pi(U)$ and $G_\alpha,G_\beta$ be two corresponding partition graphs. We define a join operation $\sqcup$ as follows: $\alpha \sqcup \beta$ is a partition $\Pi(U)$ where each set of $\alpha \sqcup \beta$ is a connected component of the graph with vertex set $U$ and edge set $ E(G_\alpha)\cup E(G_\beta)$.

We say partition $\beta$ is an \emph{extension} of partition $\alpha$ if $\alpha\sqcup \beta = U$. Note that a partition can have many different extensions.

Let $\Gamma \subseteq \Pi(U)$ be a set of partitions of $U$.  We say $\widehat{\Gamma}$ is a \emph{representative set} of $\Gamma$ if (i) $\widehat{\Gamma} \subseteq \Gamma$ and (ii) for any partition $\alpha\in \Gamma$ and any extension, say  $\beta$, of $\alpha$, then there is a partition $\hat{\alpha} \in \widehat{\Gamma}$ such that $\beta$ is also an extension of $\hat{\alpha}$ ($\hat{\alpha} \sqcup \beta = U$). We say $\hat{\alpha}$ is a $\beta$-representation of $\alpha$ in $\widehat{\Gamma}$.

Suppose every partition $\alpha \in \Gamma$ has a weight $w(\alpha)$. We say $\widehat{\Gamma}$ is a \emph{min representative set} of $\Gamma$, denoted by $\widehat{\Gamma} \smin \Gamma$ if (i) $\widehat{\Gamma}$ is a representative set of $\Gamma$ and (ii) for every $\alpha\in \Gamma$ and any extension $\beta$ of $\alpha$, there is a $\beta$-representation $\hat{\alpha}$ of $\alpha$ in $\widehat{\Gamma}$ such that $w(\hat{\alpha}) \leq w(\alpha)$.

The key idea in speeding up dynamic programs~\cite{BCKN15} is the following \emph{representation theorem}.
\begin{theorem}[Theorem 3.7~\cite{BCKN15}] \label{thm:representation}
Any set of weighted partitions  $\Gamma$ of $U$ has a min representative set $\widehat{\Gamma}$ of size at most $2^{n-1}$ that can be found in time $|\Gamma| 2^{(\omega-1)n}n^{O(1)}$ where $|U| = n$ and $\omega$ is the matrix multiplication exponent.
\end{theorem}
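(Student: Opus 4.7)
The plan is to reduce the problem to linear algebra over $\mathrm{GF}(2)$ by encoding each partition $\alpha \in \Pi(U)$ as a vector $v_\alpha$ living in a space of dimension $2^{n-1}$, in such a way that the extension relation $\alpha \sqcup \beta = U$ is captured by a bilinear pairing. Once this reduction is in place, finding a representative set becomes the problem of computing a spanning subset of $\{v_\alpha : \alpha \in \Gamma\}$, and the weighted version becomes a standard matroid-greedy problem on a linear (representable) matroid.

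For the encoding, I fix a distinguished element $v^\star \in U$ and for each $S \subseteq U \setminus \{v^\star\}$ set $v_\alpha[S] = 1$ iff $S \cup \{v^\star\}$ is a union of blocks of $\alpha$. Thus $v_\alpha \in \mathrm{GF}(2)^{2^{n-1}}$. The next step is the key identity: there is a fixed $2^{n-1} \times 2^{n-1}$ matrix $A$ (obtained by M\"obius inversion on the subset lattice) such that
\[ \alpha \sqcup \beta = U \iff \langle v_\alpha,\, A\, v_\beta \rangle = 1 \pmod{2}. \]
The intuition is that $G_\alpha \cup G_\beta$ is connected precisely when no proper ``coherent cut'' (a subset that is simultaneously a union of blocks of both $\alpha$ and $\beta$) exists, and this property can be expressed by an inclusion--exclusion sum indexed by such subsets. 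Consequently, the compatibility matrix $C$ indexed by partitions, with $C[\alpha,\beta] = \mathbf{1}[\alpha \sqcup \beta = U]$, factors as $C = V A V^T$ where $V$ has rows $v_\alpha$; its $\mathrm{GF}(2)$-rank is therefore at most $2^{n-1}$.

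Once the rank bound is established, the algorithm follows the matroid-greedy paradigm. Sort $\Gamma$ by non-decreasing weight and process it in this order, maintaining a basis $\widehat{\Gamma}$ whose associated vectors $\{v_{\hat{\alpha}}\}$ are linearly independent; a partition $\alpha$ is added to $\widehat{\Gamma}$ iff $v_\alpha$ lies outside the current span. The resulting $\widehat{\Gamma}$ has size at most $2^{n-1}$ by the rank bound. It is a min representative set because, if $\alpha$ is rejected, then $v_\alpha = \sum_i v_{\hat{\alpha}_i}$ for some $\hat{\alpha}_i \in \widehat{\Gamma}$ with $w(\hat{\alpha}_i) \leq w(\alpha)$; for any extension $\beta$ of $\alpha$ the pairing yields $\sum_i \langle v_{\hat{\alpha}_i}, A v_\beta \rangle = 1$ in $\mathrm{GF}(2)$, so at least one $\hat{\alpha}_i$ satisfies $\hat{\alpha}_i \sqcup \beta = U$ and is no heavier than $\alpha$.

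The complexity bottleneck, and what I expect to be the main obstacle, is performing the linear-independence tests fast enough. A naive Gaussian elimination on vectors of length $2^{n-1}$ costs $\Theta(4^{n})$ per test, giving $|\Gamma|\cdot 4^n$ overall. To reach $|\Gamma|\cdot 2^{(\omega-1)n}$ I would batch the tests: group the weight-sorted stream into chunks of $2^{n-1}$ partitions, stack the new rows on top of the current basis, and reduce the resulting $O(2^{n-1}) \times 2^{n-1}$ matrix to echelon form via fast matrix multiplication in time $(2^{n-1})^{\omega} = 2^{(\omega-1)n}\cdot 2^{n-1}$; the surviving pivot rows form the updated basis. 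Amortising over $|\Gamma|/2^{n-1}$ chunks gives the claimed $|\Gamma|\cdot 2^{(\omega-1)n}\cdot n^{O(1)}$ bound. The delicate point is that correctness of the greedy requires strictly honouring the weight order across chunk boundaries, which forces the basis to be carried over and updated incrementally rather than rebuilt per chunk; this can be handled by standard rank-update routines without affecting the asymptotic cost.
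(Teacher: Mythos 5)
The paper itself gives no proof of this statement---it is cited verbatim from [BCKN15]---so the only comparison possible is against the argument in that source. Your reconstruction matches it in every essential respect: the GF(2) encoding of a partition by its consistent cuts (subsets $S$ with $S \cup \{v^\star\}$ a union of blocks), the rank bound $\leq 2^{n-1}$ by factoring the compatibility matrix through these vectors, the weight-ordered greedy with the one-line check that a rejected element's extensions are inherited by the lighter basis elements in its GF(2)-decomposition, and the batched fast-Gaussian-elimination trick that gives $|\Gamma|\,2^{(\omega-1)n}n^{O(1)}$ time.

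The only place you drift from the source is the parenthetical claim that the pairing needs a M\"obius-inverted matrix $A$. In fact $A = I$ works and there is no inclusion--exclusion in the underlying identity: with your encoding, $\langle v_\alpha, v_\beta\rangle$ directly counts the cuts consistent with both $\alpha$ and $\beta$ in which $v^\star$ is pinned to one side, and since each connected component of $G_\alpha \cup G_\beta$ (i.e., each block of $\alpha \sqcup \beta$) is placed freely except the one containing $v^\star$, that count is $2^{k-1}$, which is odd iff $k=1$ iff $\alpha \sqcup \beta = U$. Your greedy-correctness step only uses bilinearity, and the factorization $VAV^T$ gives the rank bound for any $A$, so the argument still goes through as written; the M\"obius reference is a misremembering, not a gap. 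The chunked echelon form you sketch at the end is also the standard way [BCKN15] achieves the stated running time, and stacking the carried-over basis above each freshly read chunk before re-echelonizing automatically respects the cross-chunk weight ordering you flag.
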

 
 We note that size of $\Gamma$ can be up to $2^{\Omega(n \log n)}$ but the representation theorem said that it has a min representative set of size at most $2^{n-1}$. 

\subsection{Tree decompositions}

A \emph{tree decomposition} of a graph $G$ is a pair $(\mt,\mx)$ where $\mx$ is a family of subsets of $V$, called \emph{bags}, and $\mt$ is a tree whose nodes are bags in $\mx$ such that:

\begin{enumerate}[nolistsep,noitemsep]
\item[(i)] $\cup_{X \in \mx} X = V(G)$.
\item[(ii)] For every edge $uv \in E$, there is a bag $X \in \mx$ that contains both $u$ and $v$.
\item[(iii)] For every $u \in V$, the set of bags containing $u$ induces a (connected) subtree of $\mt$. 
\end{enumerate} 

The \emph{width} of $(\mt,\mx)$ is $\max_{X \in \mx} |X|-1$ and the \emph{treewidth} of $G$ is the minimum width over all possible tree decompositions of $G$. For each node $t \in \mt$, we denote its corresponding bag by $X_t$.

Traditionally, each bag $X_t$ is a set of vertices of $G$. However, for simplifying presentation of the dynamic program, we think of $X_t$ as a bag of vertices and edges of $G$. That is, $X_t$ is a subgraph of $G$. A tree decomposition $(\mt, \mx)$ is \emph{nice} if it is rooted at a node $r$ where $|X_r| = \emptyset$ and other nodes are one of five following types:
\begin{description}
\item[Leaf node] A leaf node $t$ of $\mt$  has $|X_t| = \emptyset$.
\item[Introduce vertex node] An introduce vertex node $t \in \mt$ has only one child $t'$ such that $X_{t'}$ is a subgraph of $X_t$,  $|V(X_t)| = |V(X_{t'})| + 1$ and $E(X_t) = E(X_{t'})$.
\item[Introduce edge node] An introduce edge node $t \in \mt$ has only one child $t'$ such that $X_{t'}$ is a subgraph of $X_t$. $V(X_{t'}) = V( X_t)$ and $|E(X_t)| = |E(X_{t'})| + 1$.
\item[Forget node] A forget node $t \in \mt$ has only one child $t'$ such that $X_t$ is an induced subgraph of $X_{t'}$ and $|V(X_t)| = |V(X_{t'})| -1$. 
\item[Join node] A join node $t$ has two children $t_1, t_2$ such that $V(X_t) = V(X_{t_1}) = V(X_{t_2})$, $E(X_{t_1})\cap E(X_{t_2}) = \emptyset$ and $E(X_{t}) = E(X_{t_1})\cup E(X_{t_2})$. 
\end{description}
 
A nice tree decomposition has $O(n)$ nodes and can be obtained from any tree decomposition of the same width of $G$ in $O(n)$ time (see Proposition 2.2~\cite{BCKN15}).

\subsection{A dynamic programming algorithm for subset TSP}

Recall $T$ is a set of terminals in a treewidth-$\mbtw$ graph $G$. Let $k = |T|$.  We modify $G$ by adding $k-1$ parallel edges to each edge $e \in G$ and subdividing each new edge by a single vertex. Weight of each edge is splitted equally in the two new edges. The resulting graph is simple and has treewidth $\max(\mbtw, 2)$. This modification of $G$ would guarantee that there is an optimal tour $\Opst$ that visits every edge at most once. ($\Opst$ is an Eulerian subgraph of $G$.)    

For simplicity of presentation, we assume that the optimal solution  $\Opst$  is unique. This assumption can also be technically enforced by imposing a lexicographic order on optimal solutions or by perturbation using Isolation Lemma~\cite{VV86}.

For two edge sets $E_1,E_2$ of $E$. We use $E_1\uplus E_2$ to be the \emph{multiset addition} of $E_1$ and $E_2$. That is, we keep two copies of an edge in $E_1\uplus E_2$ if it appears in both $E_1$ and $E_2$. 

Let $t$ be a node in $\mt$. If $t'$ is a descendant of $t$, we write $t' \preceq t$. Note that $t$ is a descendant of itself. Let $G_t = \cup_{t' \preceq t}X_{t'}$. We regard the optimal solution $W$ as a graph of $G$ with vertex set spans by edges of $W$.  Let $\Opst_t = G_t \cap W$. Note that there could be connected components of $\Opst_t$ that are isolated vertices. We call $\Opst_t$ a \emph{partial solution}. It is straightforward to see that $\Opst_t$ satisfies one of the following two conditions for every node $t$:
\begin{enumerate} [nolistsep,noitemsep]
\item $\Opst_t$ is a feasible solution. That is, $\Opst_t$ is an Eulerian subgraph of $G$ and spans $T$.
\item Every vertex of $T$ in $G_t \setminus X_t$ is in $\Opst_t$, every vertex of $(\Opst_t \cap G_t)\setminus X_t$ has even degree  and every connected component of  $\Opst_t$ contains at least one vertex of $X_t$. 
\end{enumerate}

For each vertex $v\in X_t$, we assign a label $c_t(v) \in \{0,1,2\}$, where $c_t(v) = 0$ if $v$ is not in $\Opst_t$, $c_t(v) = 1$ if $v$ has odd degree in $\Opst_t$ and $c_t(v) = 2$ if $v$ has even degree in $\Opst_t$. We denote the labeling restricted to a subset $Y$ of $X_t$ by $c_t(Y)$.

Let $Y_t = V(\Opst_t \cap X_t)$. Let $\alpha_t$ be the partition of $Y_t$ \emph{induced by} $\Opst_t$. That is, vertices in the same connected component of $\Opst_t$ are in the same set of $\alpha_t$. Let $R_t = E(\Opst)\setminus E(\Opst_t)$ be a subset edges of $W$ not in $G_t$. Let $\beta_t$ be the partition of $Y_t$ induced by $R_t$. Since $W$ is connected, $\alpha_t \sqcup \beta_t = Y_t$. We define the weight of $\alpha_t$ to be $w_t(\alpha_t) = w(\Opst_t)$.

We call tuple $(c_t(Y_t), \alpha_t, Y_t)$ the \emph{encoding} of $\Opst_t$, denoted by $\enc(W_t)$. By definition of $Y_t$, any vertex $ v \in X_t\setminus Y_t$ is not in $W$, hence, $c_t(v) = 0$. Thus, the labeling of vertices $X_t$ is implicitly defined by labeling of vertices in $Y_t$. A encoding is \emph{valid} if it encodes at least one partial solution. We only keep track of valid encodings during dynamic programming. There could be many partial solutions that have the same encoding. However, we only keep track of one partial solution, denoted by $\dec(c_t(Y),\alpha_t, Y_t)$ for each encoding $(c_t(Y_t), \alpha_t, Y_t)$, that has smallest $w(\alpha_t)$. The correctness follows from the following observation.

\begin{observation}\label{obs:remove-encoding} Let $W_t$ and $W_t'$ be two partial solutions that have the same encoding $(c_t(Y_t), \alpha_t, Y_t)$ such that $w(W_t) < w(W_t)'$. If $R_t$ is the set of edges such that $R_t \uplus W_t$ is a feasible solution, then $R_t \uplus W'_t$ is also a feasible solution but has smaller weight.  
\end{observation}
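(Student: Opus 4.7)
The plan is to show that the encoding $(c_t(Y_t), \alpha_t, Y_t)$ records exactly the information about a partial solution that determines whether a fixed extension $R_t$ completes it into a feasible tour. Once this is established, swapping the heavier partial solution for the lighter one inside $R_t \uplus (\cdot)$ preserves feasibility while strictly decreasing the total weight by the difference $w(W_t') - w(W_t) > 0$. (I read the statement as asserting that we can always replace the heavier $W_t'$ by the lighter $W_t$, which is the only direction consistent with discarding non-minimal encodings during the DP.)

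First I would invoke the separator property of the tree decomposition: every edge of $E(G) \setminus E(G_t)$ has no endpoint in $V(G_t) \setminus X_t$. Hence the edges of $R_t$ touch $V(G_t)$ only inside $X_t$, so replacing $W_t'$ by $W_t$ in $R_t \uplus (\cdot)$ can alter the combined graph only through the local picture at $X_t$, and the encoding is precisely designed to capture that picture.

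Next I would verify the three feasibility conditions in turn. For the \emph{spanning} condition, terminals in $V(G) \setminus V(G_t)$ are covered by $R_t$ exactly as before; terminals in $V(G_t) \setminus X_t$ are covered by $W_t$ itself by the partial-solution invariant; and a terminal $v \in X_t$ lies in $W_t$ iff $v \in Y_t$ iff it lies in $W_t'$, so coverage on $X_t$ transfers. For the \emph{even-degree} condition, vertices in $V(G_t) \setminus X_t$ already have even degree in $W_t$ and receive no $R_t$-edges by the separator property; vertices outside $V(G_t)$ see the unchanged $R_t$; and each $v \in X_t$ acquires a parity determined by $c_t(v)$ together with its degree in $R_t$, both of which are identical in the two configurations. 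For \emph{connectivity}, letting $\beta_t$ denote the partition of $Y_t$ induced by $R_t$, the components of $R_t \uplus W_t$ meeting $X_t$ are in bijection with the blocks of $\alpha_t \sqcup \beta_t$, which is the same combinatorial object for $W_t$ and $W_t'$ because both encodings share $\alpha_t$ and $Y_t$; feasibility of $R_t \uplus W_t'$ gives $\alpha_t \sqcup \beta_t = Y_t$, and this in turn implies connectivity of $R_t \uplus W_t$.

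The weight identity $w(R_t \uplus W_t) = w(R_t) + w(W_t) < w(R_t) + w(W_t')$ is then immediate from the hypothesis. I expect the parity bookkeeping to be the subtlest step, since it must simultaneously handle three kinds of vertices---those inside $X_t$, those in $V(G_t) \setminus X_t$, and those outside $V(G_t)$---each governed by a different invariant; the work is really just to check that every effect of the swap on $R_t$-parities factors through $c_t$, which is exactly what the encoding was designed to record.
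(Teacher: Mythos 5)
Your argument is correct, and it is essentially the argument the paper has in mind: the paper states this as an unproved observation, and the reasoning it relies on appears only implicitly (and partially) in the proof of Claim~\ref{clm:unique-decode}, which uses exactly your three ingredients --- same $Y_t$ and same labels $c_t$ give even degrees, the partial-solution invariant gives terminal coverage and components meeting $X_t$, and $R_t$ having no edge in $G_t$ makes the swap purely local to $X_t$. You were also right to read the statement with the roles fixed so that the lighter partial solution replaces the heavier one; as written the weight clause is a typo, and your symmetric feasibility argument covers both directions anyway. The only point worth flagging is that your separator step tacitly assumes $R_t\subseteq E(G)\setminus E(G_t)$; this is not stated in the observation itself but is exactly how the paper defines and uses $R_t$ (``$R_t$ has no edge in $G_t$''), and it is also what guarantees $R_t\uplus W_t$ has no parallel edges, so your proof is consistent with the intended setting. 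Your explicit treatment of connectivity via $\alpha_t\sqcup\beta_t$ is in fact more careful than anything the paper writes down.
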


\begin{claim}\label{clm:unique-decode} If $(c_t(Y_t),\alpha_t, Y_t)$ is the encoding of the partial solution $W_t$ of the optimal solution $W$ in $G_t$, then $\dec(c_t(Y_t),\alpha_t, Y_t) = W_t$.
\end{claim}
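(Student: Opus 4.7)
The plan is to argue by contradiction using two ingredients: the minimum-weight choice baked into the definition of $\dec$, and the uniqueness assumption on the optimal tour $W$. Let $W_t$ denote the restriction of $W$ to $G_t$, i.e.\ $W_t = W \cap G_t$, and let $W'_t := \dec(c_t(Y_t),\alpha_t, Y_t)$. Suppose for contradiction that $W'_t \neq W_t$. Since $W_t$ is a partial solution whose encoding is exactly $(c_t(Y_t),\alpha_t,Y_t)$, and since $\dec(\cdot)$ by construction selects a partial solution of minimum weight among all partial solutions with this encoding, we must have $w(W'_t)\leq w(W_t)$.

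Next, I will splice $W'_t$ into the global optimal tour in place of $W_t$. Let $R_t = E(W)\setminus E(W_t)$, so that $W = W_t \uplus R_t$ is a feasible solution. Since $W'_t$ and $W_t$ share the same encoding $(c_t(Y_t),\alpha_t,Y_t)$, Observation~\ref{obs:remove-encoding} guarantees that $W'_t \uplus R_t$ is still a feasible solution (i.e.\ an Eulerian subgraph spanning $T$), and moreover its weight equals $w(W'_t)+w(R_t)\leq w(W_t)+w(R_t)=w(W)$.

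Now a short case split finishes the argument. If $w(W'_t)<w(W_t)$, then $W'_t\uplus R_t$ is a feasible tour strictly lighter than $W$, contradicting the optimality of $W$. If instead $w(W'_t)=w(W_t)$, then $W'_t\uplus R_t$ is a feasible tour of weight $w(W)$ which differs from $W$ (because $W'_t\neq W_t$ forces the two multisets of edges to differ), contradicting the uniqueness assumption on the optimal tour stated above the claim. Either way we get a contradiction, so $\dec(c_t(Y_t),\alpha_t,Y_t) = W_t$.

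The only non-routine point is justifying that $W'_t\uplus R_t$ is feasible, and this is precisely what Observation~\ref{obs:remove-encoding} delivers: because the encoding records, for every vertex of the bag $X_t$, whether it is used and its degree parity (through $c_t(Y_t)$ and $Y_t$) and because $\alpha_t$ records which connected components of the partial solution meet $X_t$, gluing $R_t$ through the vertices of $X_t$ produces the same global degree parities and the same global connectivity pattern no matter which representative of the encoding we choose. Thus the main work reduces to a careful invocation of Observation~\ref{obs:remove-encoding} together with the uniqueness hypothesis; no further structural analysis of the tree decomposition is required.
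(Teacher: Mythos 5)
Your proof is correct and follows essentially the same route as the paper's: splice the decoded partial solution $\widehat{W}_t$ into $W$ via $R_t$, use the minimum-weight property of $\dec$ to get $w(\widehat{W}_t)\leq w(W_t)$, and conclude from optimality plus the uniqueness assumption that $\widehat{W}_t = W_t$. The only cosmetic difference is that you delegate the feasibility of $W'_t\uplus R_t$ to Observation~\ref{obs:remove-encoding}, whereas the paper re-derives it directly (matching labels give even degrees, the partial-solution property gives terminal coverage, and $R_t\cap G_t=\emptyset$ rules out parallel edges), which is the same content.
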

\begin{proof}
Let $\widehat{W}_t = \dec(c_t,\alpha_t, Y_t)$ and $\widehat{W} = E(\widehat{W}_t) \uplus R_t$.  By definition of decoding, $w(\widehat{W}_t) \leq w(W_t)$. Since $\widehat{W}_t \cap X_t = W_t\cap X_t$ (both are equal to $Y_t$) and labels of vertices in $Y_t$ are the same in both $\widehat{W}_t$ and $W_t$, every vertex in $\widehat{W}$ has even degree. Since $\widehat{W}_t$ is a partial solution, $\widehat{W}$ spans all terminals. Since $R_t$ has no edge in $G_t$, there are no parallel edges in $\widehat{W}_t$. Thus,  $\widehat{W}$  is a feasible solution of subset TSP problem.

However, $w(\widehat{W}) = w(R_t) + w(\widehat{W}_t) \leq w(R_t) + w(W_t)  = w(W)$. By the uniqueness assumption, $W_t = \widehat{W}_t$; the claim follows.  
\end{proof}

For each node $t\in \mt$, we would inductively maintain a set of encodings $\hat{\eta}_t$ that satisfies the following correctness invariant:

\begin{quote}
\textbf{Correctness invariant:} $\hat{\eta}_t$ contains the encoding of the partial solution $W_t$ of $W$. 
\end{quote}

By Claim~\ref{clm:unique-decode}, the correctness invariant implies that we are keeping track of $W$ via encodings and their decodings. The key idea of an efficient dynamic program is to guarantee that $|\hat{\eta}_t| \leq 2^{O(\mbtw)}$ for every node $t$. We do that by applying size reduction based on the representation theorem (Theorem~\ref{thm:representation}).

\begin{quote}
\textbf{Size reduction:} We guarantee that $|\hat{\eta}_t| \leq 12^{\mbtw}$ for every node $t$ as follows. For a fixed labeling $c_t$ of $X_t$ and a fixed susbet $Y \subseteq X_t$, let $\eta_t(c_t,Y) = \{(c_t(Y'),\alpha,Y') | (c_t(Y'),\alpha,Y')\in \hat{\eta}_t, Y' = Y\}$ be the set of all encodings in $\hat{\eta}_t$ with the same set $Y$ and vertex labeling $c_t$ but different partitions of $Y$. Let $\Gamma$ be the set of partitions of $Y_t$ associated with encodings in $\eta_t(c_t,Y)$. Let $\widehat{\Gamma} \subseteq_{\min}\Gamma$. By Theorem~\ref{thm:representation}, $|\widehat{\Gamma}| \leq 2^{\mbtw-1}$. We now construct a new set of encodings $\hat{\eta}_t(c_t,Y)$ from $\eta_t(c_t,Y)$ as follows:  for each partition $\hat{\alpha} \in \widehat{\Gamma}$, we add the encoding $(c_t(Y),\hat{\alpha},Y)$ to $\hat{\eta}_t(c_t,Y)$.

Then, we set $\hat{\eta}_t \leftarrow (\hat{\eta}_t  \setminus \eta_t(c_t,Y))\cup \hat{\eta}_t(c_t,Y)$.  We repeat the  reduction for every fixed $Y$ and $c_t$. Since there are at most $2^{\mbtw}$ different subsets $Y$ and $3^{\mbtw}$ different labelings $c_t$, $\hat{\eta}_t \leq 12^{\mbtw}$.  We denote by $\sr(\hat{\eta}_t)$ the set of encodings obtained by applying size reduction to $\hat{\eta}_t$.

\end{quote}

To see the correctness invariant of $\hat{\eta}_t$ after size reduction,  consider encoding $(c_t(Y_t), \alpha_{t}, Y_t)$ of $W_t$. Before reduction, $(c_t(Y_t), \alpha_{t}, Y_t) \in \hat{\eta}_t$. Recall $\beta_t$ is the partition of $Y_t$ induced by $R_t$. By Theorem~\ref{thm:representation}, there is an encoding $(c_t(Y_t), \hat{\alpha}_{t}, Y_t) \in \hat{\eta}_t$ after reduction such that $\hat{\alpha}_{t} \sqcup \beta_t = Y_t$ and $w(\hat{\alpha}_{t}) \leq w(\alpha_t)$. Let $\hat{W}_t = \dec(c_t(Y_t), \hat{\alpha}_{t}, Y_t)$. Since labels of vertces in $Y_t$ are the same for $\hat{W}_t$ and $W_t$, every vertex of $R_t\uplus \hat{W}_t$ has even degree. Recall $R_t$ has no edge in $G_t$, thus,  $R_t\uplus \hat{W}_t$ is an Eulerian subgraph of $G$ that spans $T$. However, $w(R_t\uplus \hat{W}_t) \leq w(R_t\uplus W_t)$ since $w(\hat{W}_t) \leq w(W_t)$. By the uniqueness of $W$, $\hat{W}_t = W_t$. Hence, $\dec(c_t(Y_t), \hat{\alpha}_{t}, Y_t) \in \hat{\eta}_t$. Thus, $\hat{\eta}_t$ satisfies correctness invariant.

We denote the empty encoding $(\emptyset, \{\emptyset\}, \emptyset)$ by $\emptyset$. If $G_t$ has a feasible solution, then $\dec(\emptyset)$ is the smallest weight feasible solution, say $S_t$, in $G_t$ and the weight of the corresponding empty partition is  $w(S_t)$. Otherwise, $\dec(\emptyset) = \emptyset$ and the weight of the corresponding empty partition is $+\infty$. 

Since the root node $r$ has $X_r = \emptyset$, $G_r = G$. Thus, the feasible solution $\dec(\emptyset)$ is the optimal solution $W$.

\paragraph{Leaf node} For each leaf node $t$, $\hat{\eta}_t$ only contains the empty encoding $\emptyset$.

\paragraph{Introduce vertex node} Let $t$ be an introduce vertex node and $t'$ be a child of $t$. Let $v = X_t\setminus X_{t'}$. By the definition of introduce vertex nodes, $v$ is an isolated vertex in $G_t$. For each encoding $(c_{t'}(Y'), \alpha', Y')$ of $\hat{\eta}_{t'}$, we construct a new encoding  $(c_{t}(Y), \alpha, Y)$ where:
\begin{enumerate}[nolistsep,noitemsep]
\item[(i)] $Y = Y'\cup \{v\}$. 
\item[(ii)] $c_t(v) = 0$ and $c_t(u) = c_{t'}(u)$ for every $u \in Y'$.  
\item[(iii)] $\alpha = \alpha' \cup \{\{v\}\}$ (add $v$ as a singleton to $\alpha'$).  
\end{enumerate}
Let $\dec(c_t(Y), \alpha, Y)= \dec(c_{t'}(Y'),\alpha', Y') \cup \{v\}$. Let $\eta^{new}_t$ be the set of new encodings. Let $\eta_t = \eta^{new}_t \cup \hat{\eta}_{t'}$.  We now show the correctness invariant for $\eta_t$.

Recall $W_t$ and $W_{t'}$ are the partial solutions of $W$ in $G_t$ and $G_{t'}$, respectively. Since $V(G_t) = V(G_{t'}) \cup \{v\}$ and $E(G_t) = E(G_{t'})$, either (a) $W_t = W_{t'}$ or (b) $W_t = W_{t'} \cup \{v\}$ ($v$ is added to $W_{t'}$ as an isolated vertex).  In case (a), $(c_t(Y_t), \alpha_t,Y_t ) = (c_{t'}(Y_{t'}), \alpha_{t'},Y_{t'})$. Thus, encoding of $W_t$ is in $\eta_{t'}$. In case (b), $v$ is an isolated vertex of $W_t$, thus has $c_t(v) = 0$.  Since we add $(c_t(Y_t), \alpha_t,Y_t )$ to $\eta^{new}_t$ where $Y_t = Y_{t'}\cup \{v\}$ and $\alpha_t = \alpha_{t'} \cup \{\{v\}\}$, $\eta_t$ contains the encoding of $W_t$.

Let $\hat{\eta}_t = \sr(\eta_t)$. Since $|\eta^{new}_t| \leq |\hat{\eta}_{t'}|  \leq 12^{\mbtw}$, $|\eta_t|\leq |\eta^{new}_t| + |\hat{\eta}_{t'}|  \leq 2\cdot12^{\mbtw} = 2^{O(\mbtw)}$. Thus, by Theorem~\ref{thm:representation}, the running time of size reduction is at most $2^{O(\mbtw)}tw^{O(1)}$.

\paragraph{Introduce edge node} Let $t$ be an introduce edge node where an edge $uv$ is introduced. Let $t'$ be the only child of $t$. By the definition of introduce edge nodes, $V(H_t) = V(H_{t'})$ and $E(H_t)  = E(H{t'})\cup \{uv\}$. 

Let $g(x) = ((x+1)\mod 2) + 1$. Function $g(x)$ has following properties: $g(x+1) = 1$ when $x = 0$ or $x= 2$ and $g(x+1) = 2$ when $x = 1$.

For each encoding $(c_{t'}(Y'),\alpha', Y')$ of $\hat{\eta}_{t'}$, we construct a new encoding $(c_t(Y), \alpha, Y)$ where:
\begin{enumerate}[nolistsep,noitemsep]
\item[(i)] $Y = Y'$.
\item[(ii)] $c_t(u) = g(c_{t'}(u)+1), c_t(v) = g(c_{t'}(v)+1)$ and $c_{t}(w) = c_{t'}(w)$ for every $w \in Y'\setminus \{u,v\}$.
\item[(iii)]  $\alpha_t = \alpha_{t'} \sqcup Y'[uv]$. We the assign $w(\alpha_t) = w(\alpha_{t'}) + w(uv)$.
\end{enumerate}
Let $\dec(c_t(Y), \alpha, Y)= \dec(c_{t'}(Y'),\alpha', Y') \cup \{uv\}$.  Let $\eta^{new}_t$ be the set of new encodings. We then remove duplicates from $\eta^{new}_t$: if there are two encodings  $(c_t(Y), \alpha, Y),(c_t(Y), \beta, Y)$ in $\eta^{new}_t$ where $\alpha = \beta$ but $w(\alpha) < w(\beta)$ or $(c_t(Y), \beta, Y)$ is just another version of the same encoding $(c_t(Y), \alpha, Y),c_t(Y))$ (two versions are constructed from different encodings in $\hat{\eta}_t$.), we remove $\dec(c_t(Y), \beta, Y)$ from $\eta^{new}_{t}$. Let  $\eta_t = \eta^{new}_t \cup \eta_{t'}$.  We now show the correctness invariant for $\eta_t$.

Since $V(G_t) = V(G_{t'}) $ and $E(G_t) = E(G_{t'}) \cup \{uv\}$, either (a) $W_t = W_{t'}$ or (b) $W_t = W_{t'} \cup \{uv\}$.  In case (a), $(c_t(Y_t), \alpha_t,Y_t ) = (c_{t'}(Y_{t'}), \alpha_{t'},Y_{t'})$. Thus, the encoding of $W_t$ is in $\eta_{t'}$. In case (b), adding edge $uv$ change the label of $u$ and $v$ in $W_{t'}$ to $ g(c_{t'}(u)+1)$ and $ g(c_{t'}(v)+1)$, respectively.  If $u,v$ are in two different components of $W_{t'}$, say $C_u', C_v'$, respectively, adding $uv$ merges $C_u'$ and $C_v'$ into one connected component. Thus, $\alpha_t = \alpha_{t} \cup Y_{t'}[uv]$. That implies the encoding $(c_t(Y_t), \alpha_t,Y_t )$ of $W_t$ is in $\eta^{new}_t$. By Observation~\ref{obs:remove-encoding},   $(c_t(Y_t), \alpha_t,Y_t )$ is not removed  in $\eta^{new}_t$ during the duplicate removal; the correctness invariant of $\eta_t$ follows.

Let $\hat{\eta}_t = \sr(\eta_t)$.  Since $|\eta^{new}_t| \leq |\hat{\eta}_{t'}|  \leq 12^{\mbtw}$, $|\eta_t| \leq 2\cdot12^{\mbtw} = 2^{O(\mbtw)}$. Thus, the running time of size reduction is at most $2^{O(\mbtw)}tw^{O(1)}$.

\paragraph{Forget node} Let $t$ be a forget node and $t'$ be the only child of $t$. Let $v = X_{t'}\setminus X_t$. We first discard any encoding $(c_{t'}(Y'),\alpha',Y)$ in $\hat{\eta}_{t'}$ that satisfies one of three following conditions:
\begin{enumerate}[nolistsep,noitemsep]
\item  $c_{t'}(v) = 1$.
\item $c_{t'}(v) = 0$ and $v \in T$.
\item $c_{t'}(v) = 2$, $v$ is a singleton in the partition $\alpha'$ and $\dec(c_{t'}(Y'),\alpha', Y')$ is not a feasible solution.
\end{enumerate}

For each remaining encoding, say  $(c_{t'}(Y'),\alpha', Y')$, of $\hat{\eta}_{t'}$, we construct a new encoding $(c_t(Y), \alpha, Y)$ where:
\begin{enumerate}[nolistsep,noitemsep]
\item[(i)] $Y = Y'\setminus \{v\}$.
\item[(ii)] $c_t(u) = c_{t'}(u)$ for every $u \in Y$
\item[(iii)]  $\alpha = \alpha' \setminus \{v\}$ and $w(\alpha) = w(\alpha')$.
\end{enumerate}
Let $\dec(c_t(Y), \alpha, Y)= \dec(c_{t'}(Y'),\alpha', Y')$. Let $\eta^{new}_t$ be the set of new encodings. Let  $\eta_t = \eta^{new}_t \cup \hat{\eta}_{t'}$.  We then remove duplicates from $\eta_t$. We now show the correctness invariant for $\eta_t$. 

Observe that if $v \in W_{t'}$, it must have label $2$ in the encoding of $W_{t'}$ since $V(G_t) = V(G_{t'} \setminus \{v\})$. Furthermore, if $v$ is a singleton in $\alpha_{t'}$, $W_{t'} = W$. That implies $\dec(c_{t'}(Y_{t'}),\alpha_{t'}, Y_{t'})$ is a feasible solution. Thus, $\dec(c_{t'}(Y_{t'}),\alpha_{t'}, Y_{t'})$  is not discarded at the beginning (the new encoding constructed from $\dec(c_{t'}(Y_{t'}),\alpha_{t'}, Y_{t'})$ is empty.).

We consider two cases: (a) $W_{t'}$ does not contain $v$ and (b) $W_{t'}$ contains $v$. In case (a), $(c_t(Y_t), \alpha_t,Y_t ) = (c_{t'}(Y_{t'}), \alpha_{t'},Y_{t'})$. Thus, the encoding of $W_t$ is in $\hat{\eta}_{t'}$. In case (b), $Y_t = Y_{t'}\setminus \{v\}$, $c_t(Y_t) = c_{t'}(Y_{t'}\setminus \{v\})$ and $\alpha_t = \alpha_{t'} \setminus \{v\}$. Thus,  $(c_t(Y_t), \alpha_t,Y_t )$ is in $\eta^{new}_t$; the correctness invariant of $\eta_t$ follows.

Let $\hat{\eta}_t = \sr(\eta_t)$.  Since $|\eta^{new}_t| \leq |\hat{\eta}_{t'}|  \leq 12^{\mbtw}$, $|\eta_t| \leq 2\cdot12^{\mbtw} = 2^{O(\mbtw)}$. Thus, the running time of size reduction is at most $2^{O(\mbtw)}tw^{O(1)}$.

\paragraph{Join node} Let $t$ be a join node with two children $t_1,t_2$. Note that $X_t = X_{t_1} = X_{t_2}$. Let $h(x,y)$ be a function where:
\begin{equation*}
h(x,y) = \begin{cases} 0, & \mbox{if } x = y = 0 \\ 1, & \mbox{if } x+y\mbox{ is odd} \\ 2, & \mbox{otherwise}\end{cases}
\end{equation*}

For each encoding $(c_{t_1}(Y_1), \alpha_1), Y_1$ of $\hat{\eta}_{t_1}$ and $(c_{t_2}(Y_2), \alpha_2, Y_2)$ of $\hat{\eta}_{t_1}$ such that $Y_1 = Y_2$, we construct a new encoding $(c_t(Y), \alpha, Y)$ where:
\begin{enumerate}[nolistsep,noitemsep]
\item[(i)] $Y = Y_1 = Y_2$.
\item[(ii)] $c_t(u) = h(c_{t_1}(u), c_{t_2}(u))$ for every $u \in Y$
\item[(iii)]  $\alpha = \alpha_{1} \sqcup \alpha_{2}$ and $w(\alpha) = w(\alpha_{1}) + w(\alpha_{2})$. 
\end{enumerate}
Since $E(X_{t_1}) \cap E(X_{t_2}) = \emptyset$, $ E(\dec(c_{t_1}(Y_1), \alpha_1, Y_1))\cap E(\dec(c_{t_2}(Y_2), \alpha_2, Y_2)) = \emptyset$. Let $\dec(c_t(Y), \alpha, Y)= \dec(c_{t_1}(Y_1), \alpha_1, Y_1) \cup \dec(c_{t_2}(Y_2), \alpha_2, Y_2)$. Let $\eta_t$ be the set of new encodings.  We then remove duplicates from $\eta_t$. We now show the correctness invariant for $\eta_t$.

Recall $W_{t_1},W_{t_2}$ are the partial solutions of $W$ in $G_{t_1}$ and $G_{{t_2}}$, respectively. We consider the relationship between the encoding $(c_t(Y_t), \alpha_t, Y_t)$ of $W_t$ and the encodings of its two children $(c_{t_1}(Y_{t_1}), \alpha_{t_1}, Y_{t_1})$ and $(c_{t_2}(Y_{t_2}), \alpha_{t_2}, Y_{t_2})$. 

 Since $E(G_{t_1}) \cap E(G_{t_2}) = \emptyset$, $E(W_{t_1})\cap E(W_{t_2}) = \emptyset$. Since $X_t = X_{t_1} = X_{t_2}$, we have $Y_t = Y_{t_1} = Y_{t_2}$. Since degree in $W_t$ of a vertex $v \in Y_t$ is the sum of its degrees in $Y_{t_1}$ and $Y_{t_2}$, $c_t(v) = h(c_{t_1}(v), c_{t_2}(v))$.  Since $W_t = W_{t_1}\cup W_{t_2}$,  we have $\alpha_t = \alpha_{t_1} \sqcup \alpha_{t_2}$. That implies $(c_t(Y_t), \alpha_t, Y_t)$ is in $\eta_t$.

Let $\hat{\eta}_t = \sr(\eta_t)$. Since $|\eta_t| \leq |\hat{\eta}_{t_1}||\hat{\eta}_{t_2}| \leq 12^{2\mbtw} = 2^{O(\mbtw)}$, size reduction can be done in $2^{O(\mbtw)}tw^{O(1)}$ time.

\begin{claim} \label{clm:dp-time} The dynamic programming table of each node can be constructed in time $2^{O(\mbtw)}tw^{O(1)}n$.
\end{claim}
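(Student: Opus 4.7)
The plan is to handle each of the five node types (leaf, introduce vertex, introduce edge, forget, join) in turn and verify that the table $\hat{\eta}_t$ at node $t$ can be produced from the tables at its children within the claimed time budget. Summing over the $O(n)$ nodes of the nice tree decomposition then gives the overall bound, and since the per-node work is already stated (in the running-text analysis above) to be $2^{O(\mbtw)}\mbtw^{O(1)}$, the $n$ factor is either an artifact (total vs.\ per-node) or accounts for the $\poly(n)$ bookkeeping of decodings; either way the argument is the same.

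First I would set up a common accounting. At every node $t$, invariantly $|\hat{\eta}_t|\le 12^{\mbtw}$, because we reduce per pair $(c_t,Y)$ to at most $2^{\mbtw-1}$ representatives and there are at most $3^{\mbtw}$ labelings and $2^{\mbtw}$ subsets $Y\subseteq X_t$. Hence the ``raw'' set $\eta_t$ built by the inductive rule has size
\[
|\eta_t|\;\le\;|\hat{\eta}_{t'}|+|\eta_t^{new}|\;\le\;2\cdot 12^{\mbtw}\;=\;2^{O(\mbtw)}
\]
for introduce-vertex, introduce-edge, and forget nodes, and
\[
|\eta_t|\;\le\;|\hat{\eta}_{t_1}|\cdot|\hat{\eta}_{t_2}|\;\le\;12^{2\mbtw}\;=\;2^{O(\mbtw)}
\]
for join nodes. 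Constructing one new encoding from its parent(s) involves a constant number of elementary set operations on partitions of a universe of size at most $\mbtw+1$ (label updates, adding or removing a singleton, the $\sqcup$ join, and the $\alpha\sqcup Y'[uv]$ merge), each doable in $\mbtw^{O(1)}$ time; the optional ``is this a feasible solution?'' test in the forget rule is a constant-time query against the stored $\dec(\cdot)$ pointer plus an edge-set lookup in $\poly(n)$ time.

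Next I would invoke the representation theorem to bound size reduction. For each fixed $(c_t,Y)$, the partitions associated with the encodings in $\eta_t(c_t,Y)$ live over a ground set of size $|Y|\le \mbtw+1$. Theorem~\ref{thm:representation} produces a min-representative subset of size at most $2^{\mbtw}$ in time
\[
|\eta_t(c_t,Y)|\cdot 2^{(\omega-1)(\mbtw+1)}\cdot (\mbtw+1)^{O(1)}\;=\;2^{O(\mbtw)}\mbtw^{O(1)}.
\]
Summing over the at most $3^{\mbtw}\cdot 2^{\mbtw}$ choices of $(c_t,Y)$ still leaves a cost of $2^{O(\mbtw)}\mbtw^{O(1)}$ for the whole reduction. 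Combined with the $2^{O(\mbtw)}\mbtw^{O(1)}$ cost of assembling $\eta_t$, the total per-node work is $2^{O(\mbtw)}\mbtw^{O(1)}$. Multiplying by the $O(n)$ nodes of a nice tree decomposition, the overall running time is $2^{O(\mbtw)}\mbtw^{O(1)}n$ as claimed.

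The only subtlety I expect to actually grind through is the deduplication and weight-comparison step in the introduce-edge and forget rules: one has to be careful that testing equality of two partitions and comparing their weights can be done in $\mbtw^{O(1)}$ time using a canonical form (e.g.\ sort each block and then sort blocks lexicographically), so that the duplicate-removal scan over a $2^{O(\mbtw)}$-size collection is $2^{O(\mbtw)}\mbtw^{O(1)}$. Once this canonical encoding is fixed, all five node-type analyses fit into the same per-node budget and the claim follows by summing over nodes.
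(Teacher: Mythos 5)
Your combinatorial core — bounding $|\eta_t|$ by $2\cdot 12^{\mbtw}$ (or $12^{2\mbtw}$ at join nodes), invoking Theorem~\ref{thm:representation} once per pair $(c_t,Y)$, and summing over the $3^{\mbtw}\cdot 2^{\mbtw}$ such pairs — is exactly the paper's argument, which is distributed over the five node-type paragraphs. The problem is your treatment of the factor $n$. Claim~\ref{clm:dp-time} is a \emph{per-node} bound, and the $n$ in it is not the number of bags: the paper states immediately after the claim that ``the $n$ factor is for maintaining decodings in each step,'' and correspondingly reports the \emph{total} running time as $2^{O(\mbtw)}\mbtw^{O(1)}n^2$. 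You instead declare the per-node cost to be $2^{O(\mbtw)}\mbtw^{O(1)}$ and then multiply by the $O(n)$ nodes, arriving at the same formula but for a different quantity; taken literally, your accounting would give a total of $2^{O(\mbtw)}\mbtw^{O(1)}n$, contradicting the paper's own total bound, and it leaves the statement actually being claimed (per-node cost) unproved.

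The concrete gap is in the step ``constructing one new encoding \ldots doable in $\mbtw^{O(1)}$ time.'' Every encoding drags along a decoding $\dec(\cdot)$, which is a partial solution, i.e.\ a subgraph of $G$ with $\poly(n)$ edges. Producing the decoding of a new encoding requires copying and extending the parent's decoding (introduce-vertex/edge, forget) or taking the union of two decodings (join), and the forget rule additionally tests whether $\dec(c_{t'}(Y'),\alpha',Y')$ is a feasible solution, which means inspecting that whole subgraph for even degrees and coverage of $T$. Each of these operations costs $\Omega(n)$ (in fact $\poly(n)$) per encoding, and with $2^{O(\mbtw)}$ encodings per node this is precisely where the per-node factor $n$ comes from. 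Your proposal either has to carry this cost through (recovering the paper's per-node bound and total $n^2$), or explicitly replace explicit decodings by pointer/back-tracing bookkeeping — which the paper notes is possible but deliberately does not do — and then you would be proving a stronger statement than Claim~\ref{clm:dp-time}, not the claim as written. As it stands, the hedge ``either way the argument is the same'' hides exactly the part of the claim that needs justification.
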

The $n$ factor in Claim~\ref{clm:dp-time} is for maintaining decodings in each step. This factor can be removed, but it is not the purpose of our paper. Thus, the total running time of the dynamic programming algorithm is $2^{O(\mbtw)}tw^{O(1)}n^2$.

\end{document}